\begin{document}
%
\title{Register automata with linear arithmetic}


\author{
\IEEEauthorblockN{Yu-Fang Chen\IEEEauthorrefmark{1}, Ond\v {r}ej Leng\'{a}l\IEEEauthorrefmark{2}, Tony Tan\IEEEauthorrefmark{3} and Zhilin Wu\IEEEauthorrefmark{4}}
\\
\IEEEauthorblockA{\IEEEauthorrefmark{1} Institute of Information Science,
Academia Sinica, Taiwan
}
\IEEEauthorblockA{\IEEEauthorrefmark{2} FIT,
Brno University of Technology,
IT4Innovations Centre of Excellence,
Czech Republic
}
\IEEEauthorblockA{\IEEEauthorrefmark{3} Department of Computer Science and Information Engineering,
National Taiwan University, Taiwan
}
\IEEEauthorblockA{\IEEEauthorrefmark{4} State Key Laboratory of Computer Science, Institute of Software, Chinese Academy of Sciences, China
}
}

\IEEEoverridecommandlockouts
\IEEEpubid{\makebox[\columnwidth]{978-1-5090-3018-7/17/\$31.00~
\copyright2017 IEEE \hfill} \hspace{\columnsep}\makebox[\columnwidth]{ }}

\maketitle

\begin{abstract}
We propose a novel automata model over the alphabet of rational numbers,
which we call {\em register automata over the rationals} ($\raq$).
It reads a sequence of rational numbers and outputs another rational number.
$\raq$ is an extension of the well-known {\em register automata} (RA)
over infinite alphabets, which
are finite automata equipped with a~finite number of registers/variables
for storing values.
Like in the standard RA, the $\raq$ model allows both equality and ordering tests between values. 
It, moreover, allows to perform linear arithmetic between certain variables.
The model is quite expressive:
in addition to the standard RA,
it also generalizes other
well-known models such as affine programs and arithmetic circuits.

The main feature of $\raq$ is that despite the use of
linear arithmetic, the so-called {\em invariant problem}---a~generalization
of the standard non-emptiness problem---is decidable.
We also investigate other natural decision problems, namely, {\em commutativity},
{\em equivalence}, and {\em reachability}.
For deterministic $\raq$, commutativity and equivalence are
polynomial-time inter-reducible with the invariant problem.
\end{abstract}

\IEEEpeerreviewmaketitle

\section{Introduction} \label{sec:intro}

Motivated by various needs and applications, there have occurred many studies
on languages over infinite alphabets.
To name a few, typical applications include database systems, program analysis and verification,
programming languages and theory by itself.
See, e.g.,~\cite{Segoufin06,Bojanczyk10,ChenDaBeast16,NevenSST15,
AartsJUV15,Tzevelekos11,LibkinMV16,GrumbergKS10,DemriL09} and the references therein.
One of the most popular models are arguably register automata (RA)~\cite{ShemeshF94,KaminskiF94}.
Briefly, an~RA is a~finite automaton equipped with a finite number of registers,
where each register can store one symbol at a time.
The automaton then moves from state to state
by comparing the input symbol with those in its registers,
and at the same time may decide to update the content of its registers
by storing a new symbol into one of its registers,
and thus, ``forgetting'' the previously stored symbol.
The simplicity and naturality of RA obviously contribute to their appeal.

So far, the majority of research in this direction has focused on models
where the only operations allowed on the input symbols are equality and order relation.
For many purposes, this abstraction is good enough.
For example, relational algebra-based queries, often used in database systems,
involve only equality tests~\cite{AbiteboulHV95}.
For many simple but common queries, such as 
counting the number of elements or summing up the values in a list,
at least some arithmetic is, however, required.

It is a folklore belief that allowing RA to perform even the~simplest form of arithmetic
on their registers will immediately yield undecidability for the majority of interesting
decision problems.
The evidence is that such RA subsume simple two-counter machines,
which are already Turing-complete~\cite{Minsky67}.
Indeed, the belief holds not only for RA,
but for the majority (if not all) of the models of languages over infinite alphabets.

In this paper, we propose a~novel automaton model over the rational numbers~$\bbQ$, 
named \emph{register automata over the rationals} ($\raq$).
Like in standard RA, an $\raq$ is equipped with a finite number of variables (registers),
each of them is able to store a value.\footnote{Though normally called
registers, for reasons that will be apparent later, we will refer to them as {\em variables} in this paper.}
The $\raq$ model allows to test order and perform linear arithmetic between some variables, 
yet keeps several interesting decision problems decidable.
The key idea is the partitioning of variables into two sets, \emph{control variables} and \emph{data variables}, 
which is inspired by the work of Alur and \v{C}ern\'{y}~\cite{Alur11}.
\emph{Control variables} can be used in transition guards for order ($\leq$) comparison and
can be assigned a~value either from the input or from another control variable.
In contrast, \emph{data variables} can store a value obtained from a~linear
combination of the values of all variables and the value from the input, but
cannot be used in transition guards.
In a~final state, an~$\raq$ outputs a~rational number
obtained by a~linear combination of the values of all variables
(non-final states have no output).
Due to nondeterminism, it is possible that different computation paths
for the same input word produce different output values.
$\raq$ can be used to model, e.g., the following aggregate functions:
finding the smallest and the largest elements,
finding the $k$-th largest element,
counting the number of elements above a~certain threshold,
summing all elements, or
counting the number of occurrences of the largest element in a~list.

The $\raq$ model is a very general model that captures and simulates at least three other well-known models.
The first and obvious one is the standard RA studied in~\cite{KaminskiF94,ShemeshF94,DemriL09,NSVianu04}.
An~RA is simply an~$\raq$ without
data variables that only allows equality test of control variables and
in a~final state outputs the constant~1.
The second one is the affine program (AP) model defined by Karr~\cite{Karr76},
which is commonly used as
a~standard abstract domain in static program analysis~\cite{CousotCousot77,Jeannet2009apron}.
An~AP is a~special case of an~$\raq$ where control variables as well as
values of the input are ignored.
Finally, $\raq$ can also simulate (division-free) arithmetic circuits (AC) without indeterminates.
Originally, AC were introduced as a model for studying algebraic complexity~\cite{ShpilkaY10,BCS97},
but recently gain prominence as a~model for analysing 
the complexity of numerical analysis~\cite{AllenderBKM09}
due to its succinct representation of numbers.
We show that $\raq$ can be used to represent numbers using roughly only twice
as many transitions as the number of edges in the AC that represents the same number.

We study several decision problems for $\raq$.
The first one is the so-called {\em invariant} problem,
which asks if the set of reachable configurations of a given $\raq$ at a given state
is {\em not} contained in a given {\em affine space}.\footnote{Formal definitions will be presented later on, including
the representation of the given affine space.}
This is a typical decision problem in AP, where one would like to find out
the relations among the variables when the program reaches a certain state~\cite{Karr76,MSeidl04}.
We show that the invariant problem for $\raq$ is polynomial-time inter-reducible with another
decision problem called the {\em non-zero} problem,
which asks if a given $\raq$ can output a~non-zero value for some input word.
Note that the non-zero problem is a generalization of the non-emptiness problem of RA,
if we assume that an RA outputs the constant~1 in its final states.
We show that the non-zero problem is decidable in exponential time
(in the number of control variables, while polynomial in other parameters).
Our algorithm is based on the well-known Karr's algorithm~\cite{Karr76,MSeidl04}
for deciding the same problem for AP.

We should remark that the exponential complexity is in the bit model,
i.e., rational numbers are represented in their bit forms.
If we assume that each rational number occupies only a constant space,
e.g., the Blum-Shub-Smale model~\cite{BSS89}, the non-zero problem is $\pspace$-complete,
which matches the non-emptiness problem of standard RA~\cite{DemriL09}.

In addition, we also prove a small model property on the length of the shortest word
leading to a non-zero output.
From that, we derive a polynomial space algorithm for the non-zero problem for 
the so-called~{\em copyless} $\raq$, i.e., 
$\raq$ where reassignments to data variables are copyless\footnote{The 
copyless constraint of $\raq$ is inspired by and in the same flavour of the one for streaming transducers in \cite{Alur11}.}.
In fact, the non-zero problem becomes $\pspace$-complete.
It should be remarked that copyless $\raq$ already subsume standard RA.

The separation of control and data variables is the key
to make the non-zero problem decidable. 
In fact, allowing $\raq$ to access just the {\em least
significant bit} of their data variables is already enough to make them Turing-complete,
and so is allowing order comparison between data variables.
Without control variables, $\raq$ become AP, positioning
their invariant problem in $\ptime$~\cite{Karr76,MSeidl04}.
$\raq$ without data variables are copyless, which makes their invariant problem
$\pspace$-complete (as mentioned above).

We also study the \emph{commutativity} and \emph{equivalence} problems for $\raq$.
The former asks whether a~given $\raq$ is commutative.
A~commutative $\raq$ is an~$\raq$ that, given a word~$w$ as its input, 
outputs the same value on any permutation of~$w$. 
The latter problem asks if two $\raq$ are essentially the same, i.e.,
for every input word, the two $\raq$ output the same set of values.
The equivalence problem is known to be undecidable already for RA~\cite{NSVianu04} 
via a reduction from \emph{Post correspondence problem} (PCP).
The same reduction can be used to show that the commutativity problem for RA is also undecidable.
For deterministic $\raq$, we show that the commutativity, equivalence, and
invariant problems are inter-reducible to each other in polynomial time.
Thus, for deterministic copyless $\raq$
(and therefore also for deterministic RA),
all problems mentioned above can be decided in polynomial space, 
and are, in fact, $\pspace$-complete.

Finally, we also study the \emph{reachability} problem for $\raq$.
This problem asks if a given $\raq$ can output $0$ for some input word.
We show that although the reachability problem is undecidable in general, even when the $\raq$ is
deterministic, 
it is in $\nexptime$ for nondeterministic copyless~$\raq$ 
with non-strict transition guards\footnote{A guard is non-strict if it \emph{does not} contain negations, 
i.e., it is a positive Boolean combination of inequalities $z \le z'$.}.
The decision procedure is obtained by 
a reduction to the configuration coverability problem of
\emph{rational vector addition systems with states} ($\bbQ$-VASS).
Since there is an exponential blow-up in the reduction and the configuration coverability problem 
of $\bbQ$-VASS is in $\nptime$, 
we get a~nondeterministic exponential-time decision procedure for the reachability problem of 
copyless $\raq$ with non-strict transition guards.

\begin{table*}[t]
\caption{Overview of the results
\textup{(SV- means \emph{single-valued},
CL- means \emph{copyless},
NSTG- means \emph{with non-strict transition guards},
\mbox{reachability for (deterministic) RA means \emph{state reachability},
-c means \emph{complete},
UNDEC means \emph{undecidable}})}
}
\begin{center}
\begin{tabular}{|l||r|r|r|r|}
\hline \multicolumn{1}{|c||}{\textbf{Model}}     & \multicolumn{1}{c|}{\textbf{Non-zero (Emptiness)}} &  \multicolumn{1}{c|}{\textbf{Equivalence}}  & \multicolumn{1}{c|}{\textbf{Commutativity}}        &  \multicolumn{1}{c|}{\textbf{Reachability}} \\
\hline
\hline RA~\cite{DemriL09}   & $\pspace$-c~\cite{DemriL09}                   & UNDEC~\cite{NSVianu04}                        & UNDEC (Thm.~\ref{thm:raq-undecidable})       & $\pspace$-c~\cite{DemriL09}         \\
\hline deterministic RA~\cite{DemriL09} & $\pspace$-c~\cite{DemriL09}                   & $\pspace$-c~\cite{DemriL09}        & $\pspace$-c~(Cor.~\ref{theo:detra})        & $\pspace$-c~\cite{DemriL09}         \\
\hline $\raq$               & $\exptime$ (Thm.~\ref{theo:non-zero-exptime}) & UNDEC (Thm.~\ref{thm:raq-undecidable})       & UNDEC (Thm.~\ref{thm:raq-undecidable})       & UNDEC (Thm.~\ref{thm-reach-und})      \\
\hline SV-$\raq$            & $\exptime$ (Thm.~\ref{theo:non-zero-exptime}) & UNDEC (Thm.~\ref{thm:raq-undecidable})       & UNDEC (Thm.~\ref{thm:raq-undecidable})       & UNDEC (Thm.~\ref{thm-reach-und})      \\
\hline CL-$\raq$            & $\pspace$-c (Thm.~\ref{theo:copyless})        & UNDEC (Thm.~\ref{thm:raq-undecidable})       & UNDEC (Thm.~\ref{thm:raq-undecidable})       & $?$                                 \\
\hline deterministic\ $\raq$             & $\exptime$ (Thm.~\ref{theo:non-zero-exptime}) & $\exptime$~(Cor.~\ref{theo:comm-exptime})  & $\exptime$~(Cor.~\ref{theo:comm-exptime})  & UNDEC (Thm.~\ref{thm-reach-und})      \\
\hline deterministic CL-$\raq$          & $\pspace$-c (Thm.~\ref{theo:copyless})        & $\pspace$-c (Cor.~\ref{theo:comm-exptime}) & $\pspace$-c (Cor.~\ref{theo:comm-exptime}) & $?$                                 \\
\hline NSTG-CL-$\raq$       & $\pspace$-c (Thm.~\ref{theo:copyless})        & $?$                                        & $?$                                        & $\nexptime$ (Thm.~\ref{thm-reach-dec}) \\
\hline
\end{tabular}

\end{center}
\label{tab:overview-table}
\end{table*}

An overview of the results obtained in this paper can be found in Table~\ref{tab:overview-table}. All decision problems we consider are natural and have corresponding applications.
The invariant, equivalence, and reachability problems are
standard decision problems considered in formal verification.
RA and $\raq$ are natural models of Reducer
programs~\cite{ChenDaBeast16,NevenSST15} in the MapReduce paradigm~\cite{DeanG04}, where commutativity is an important property required for Reducers~\cite{ChenHSW15,ChenDaBeast16,comm-harmful}.

Lastly, let us explain the main differences between the decision
procedures for RA and those presented for $\raq$.
The non-emptiness and reachability problems for RA can essentially be reduced to the
reachability problem in a finite-state system, where one can bound the number
of data values and consider a finite alphabet. The commutativity and
equivalence problems for deterministic RA can then be reduced to the non-emptiness
problem.
On the other hand, due to the use of arithmetic operations, similar techniques
are no longer applicable in $\raq$,
thus, a different set of tools is then required such as Karr's
algorithm and those from algebra and linear programming as used in this paper.

\paragraph*{Organization}
We review some basic linear algebra tools and Karr's algorithm in Section~\ref{sec:prelim}.
In Section~\ref{sec:snt}, we present the formal definition of $\raq$.
We discuss the invariant and non-zero problems in Section~\ref{sec:non-zero},
and the commutativity and equivalence problems in Section~\ref{sec:three-problem}.
In Section~\ref{sec:other} we discuss the reachability problem.
We conclude with some discussions on related works and remarks in
Sections~\ref{sec:related} and~\ref{sec:conclusion}.
All missing technical details and proofs can be found in the appendix.


\section{Preliminaries}
\label{sec:prelim}

In this paper, a {\em word}~$w$ is a finite sequence of rational numbers $w =
d_1\cdots d_n \in \bbQ^*$.
The \emph{length} of $w$ is $n$, denoted by $|w|$.
The term {\em data value}, or {\em value} for short, means a rational number.
Matrices and vectors are  over the rational numbers $\bbQ$,
where $\bbQ^{m\times n}$ and $\bbQ^{k}$ denote
the sets of matrices of size $m\times n$ and 
column vectors of size $k$ (i.e., $\bbQ^{k}= \bbQ^{k\times 1}$), respectively. 
All vectors in this paper are understood as column vectors.

We use $A,B,\ldots$ to denote matrices, where
$A(i,j)$ is the component in row $i$ and column $j$ of matrix $A$.
We denote the transpose of $A$ by $A^t$, and the determinant of a square matrix~$A$ by $\det(A)$.
We use $\va,\vb,\vu,\vv,\ldots$ to denote vectors,
where $\vu(i)$ is the $i$-th component of vector $\vu$ (numbered from~1).

When $\vu \in \bbQ^k$ and $\vv\in \bbQ^l$,
we write $\myvec {\vu \\ \vv}$ to denote
a vector in $\bbQ^{k+l}$
composed as the concatenation of~$\vu$ and~$\vv$.
Abusing the notation, we write $0$ to also denote both the zero vector and the zero matrix.

For two vectors $\vu,\vv \in \bbQ^k$,
we write $\vu \geq \vv$ when $\vu(i)\geq \vv(i)$
for each component $i=1,\ldots,k$.
The dot product of $\vu$ and $\vv$ is denoted by $\vu\dotprod \vv$.

\subsubsection*{Affine spaces}
Recall that a \emph{vector space} $\bbV$ in $\bbQ^k$ is a subset of $\bbQ^k$ 
that forms a~group under addition $+$ and 
is closed under scalar multiplication, i.e., for all $\vv\in \bbV$ and $\alpha\in \bbQ$,
it holds that $\alpha\vv \in \bbV$.
The dimension of $\bbV$ is denoted by $\dim(\bbV)$.
The \emph{orthogonal complement} of $\bbV$ is the vector space
$\bbV^{\perp}=\{\vu\mid \vu \dotprod \vv = 0 \ \textrm{for every }\vv \in \bbV\}$.
It is known that  $\dim(\bbV^{\perp})+\dim(\bbV)=k$.

An \emph{affine space} $\bbA$ in $\bbQ^k$ is a set of the form $\va+\bbV$,
where $\va \in \bbQ^k$ and $\bbV$ is a vector space in $\bbQ^k$.
Here, $\va + \bbV$ denotes the set $\{\va+\vu \mid \vu \in \bbV\}$.
The dimension of~$\bbA$, denoted
$\dim(\bbA)$, is defined as $\dim(\bbV)$.

A vector $\vu$ is an \emph{affine combination} of $V=\{\va_1,\ldots,\va_n\}$, 
if there are $\lambda_1,\ldots,\lambda_n \in \bbQ$
such that $\sum_{i=1}^n \lambda_i = 1$ and $\vu = \sum_{i=1}^n \lambda_i \va_i$.
We use $\aff(V)$ to denote the
space of all affine combinations of $V$.
It is known that for every affine space $\bbA$,
there is a set $V$ of size~$\dim(\bbA) + 1$ such that $\aff(V)=\bbA$.

An \emph{affine transformation} $T:\bbQ^k \to \bbQ^l$
is defined by a~matrix $M \in \bbQ^{l\times k}$ and a vector $\va \in \bbQ^l$,
such that~$T\vx=M\vx +\va$.
When $\va=0$, $T$~is called a~\emph{linear transformation}.
From basic linear algebra, when $k=l$, it holds that $T$ is a~one-to-one mapping iff
$\det(M)\neq 0$.

For convenience, we simply write \emph{transformation}
to mean affine transformation.
Note that composing two transformations $T_1$ and $T_2$
yields another transformation $\vx \mapsto T_2T_1\vx$,
where $\vx \mapsto T_2T_1\vx$ denotes a~function that maps $\vx$ to $T_2T_1\vx$.

The following two lemmas will be useful.

\begin{lemma}
\label{lem:affine-space}
Let $\bbA\subseteq \bbQ^k$ be an affine space
and $T: \bbQ^{k+1}\to \bbQ^k$ be a transformation.
Suppose there is a~vector~$\vv \in \bbQ^k$ and values $d_1, d_2 \in \bbQ$, where $d_1\neq d_2$, such that
both $T\myvec{\vv \\ d_1}$ and $T \myvec{\vv\\ d_2}$ are in $\bbA$.
Then, $T \myvec{\vv \\ d} \in \bbA$ for every $d\in \bbQ$.
\end{lemma}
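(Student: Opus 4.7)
The plan is to show that the image of the line $L = \{\myvec{\vv \\ d} : d \in \bbQ\}$ under $T$ lies in $\bbA$, by exhibiting every such image as an affine combination of the two known points $T\myvec{\vv \\ d_1}$ and $T\myvec{\vv \\ d_2}$, and then invoking the fact that affine spaces are closed under affine combinations.

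First, I would observe that since $d_1 \neq d_2$, every rational $d$ admits a unique representation $d = \lambda d_1 + (1-\lambda) d_2$ with $\lambda = (d - d_2)/(d_1 - d_2) \in \bbQ$. Consequently the vector $\myvec{\vv \\ d}$ is itself an affine combination of $\myvec{\vv \\ d_1}$ and $\myvec{\vv \\ d_2}$, namely
\[
\myvec{\vv \\ d} \;=\; \lambda \myvec{\vv \\ d_1} \;+\; (1-\lambda) \myvec{\vv \\ d_2},
\]
because the first $k$ coordinates give $\lambda \vv + (1-\lambda) \vv = \vv$ and the last coordinate gives $\lambda d_1 + (1-\lambda) d_2 = d$.

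Next I would use the fact that an affine transformation $T\vx = M\vx + \vb$ preserves affine combinations: whenever $\sum_i \mu_i = 1$, we have $T\bigl(\sum_i \mu_i \vx_i\bigr) = \sum_i \mu_i (M\vx_i + \vb) = \sum_i \mu_i T\vx_i$. Applying this to the above identity yields
\[
T\myvec{\vv \\ d} \;=\; \lambda \, T\myvec{\vv \\ d_1} \;+\; (1-\lambda) \, T\myvec{\vv \\ d_2}.
\]
Finally, since $\bbA = \va + \bbV$ is an affine space and both $T\myvec{\vv \\ d_1}$ and $T\myvec{\vv \\ d_2}$ lie in $\bbA$, their difference lies in the underlying vector space $\bbV$; writing the right-hand side as $T\myvec{\vv \\ d_2} + \lambda\bigl(T\myvec{\vv \\ d_1} - T\myvec{\vv \\ d_2}\bigr)$ then places it in $\bbA$. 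Hence $T\myvec{\vv \\ d} \in \bbA$ for all $d \in \bbQ$, as required.

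There is no real obstacle here: the entire argument is a one-line application of ``affine maps preserve affine combinations, and affine spaces are closed under them.'' The only thing to be careful about is to make the affine-combination representation of $\myvec{\vv \\ d}$ completely explicit so that the reader sees why two distinct values $d_1,d_2$ suffice to pin down the whole line (any $\lambda \in \bbQ$ is achievable precisely because $d_1 \neq d_2$).
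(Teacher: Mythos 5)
Your proof is correct and follows essentially the same route as the paper's: both arguments express $T\myvec{\vv\\ d}$ as the known point plus a scalar multiple of the difference $T\myvec{\vv\\ d_1}-T\myvec{\vv\\ d_2}$, which lies in the underlying vector space of $\bbA$. The only difference is presentational — you phrase it abstractly as ``affine maps preserve affine combinations,'' whereas the paper extracts the last column $\vb$ of the matrix of $T$ and computes $(d_1-d_2)\vb\in\bbV$ explicitly; these are the same computation.
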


\begin{lemma}
\label{lem:finite-dim}
Let $T_1,\ldots,T_{m}$ be transformations and
$\vu_1,\vu_2,\ldots,\vu_{m+1}$ be vectors such that
$\vu_{i+1}= T_i \vu_i$
for every $i = 1,\ldots, m.$
Let $\bbH$ be an affine space such that $\vu_{m+1} \notin \bbH$ and $m\geq \dim(\bbH)+2$.
Then, there is a set of indices $J=\{j_1,\ldots,j_n\}$ 
such that $1\leq j_1 < j_2 < \cdots < j_n \leq m$, $|J| \leq \dim(\bbH)+1$, and
$T_{j_n} T_{j_{n-1}}\cdots T_{j_1}\vu_1 \notin \bbH$.
\end{lemma}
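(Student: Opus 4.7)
My plan is to prove the lemma by induction on $d = \dim(\bbH)$. For the base case $d = 0$, the space $\bbH$ is a single point $\{\vh\}$ and $m \geq 2$. If $\vu_1 \neq \vh$ I take $J = \emptyset$; otherwise, since $\vu_{m+1}\neq \vh$, there is a smallest index $k$ with $\vu_{k+1}\neq \vh$, and because $\vu_k = \vh$ I get $T_k\vu_1 = T_k\vu_k = \vu_{k+1}\neq \vh$, so $J = \{k\}$ works.

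For the inductive step, I will work with the nested chain of affine spaces $W_q = \aff\{T_J\vu_1 : J \subseteq \{1,\ldots,q\}\}$ for $q = 0,\ldots,m$. Assuming $\vu_1 \in \bbH$ (otherwise $J = \emptyset$ again suffices), we have $W_0 \subseteq \bbH$. Let $q^* = \max\{q : W_q \subseteq \bbH\}$; this is well defined and satisfies $q^* \leq m - 1$ since $\vu_{m+1}\in W_m\setminus\bbH$. Any $T_J\vu_1 \in W_{q^*+1}\setminus \bbH$ must have $q^*+1\in J$, for otherwise it would already lie in $W_{q^*}\subseteq\bbH$. In the easy sub-case $q^* \leq d$, this witness already has $|J| \leq q^* + 1 \leq d + 1$ and I am done.

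The interesting sub-case is $q^* \geq d + 1$. I define $\bar{S} = W_{q^*}\cap T_{q^*+1}^{-1}(\bbH)$, which is a proper affine subspace of $W_{q^*}$ (since $T_{q^*+1}$ sends some point of $W_{q^*}$ out of $\bbH$), so $\dim(\bar S) \leq \dim(W_{q^*}) - 1 \leq d - 1$. I then pick $J_0 \subseteq \{1,\ldots,q^*\}$ of minimum cardinality with $T_{J_0}\vu_1 \notin \bar S$ (such a $J_0$ exists since $W_{q^*}\not\subseteq \bar S$). Applying the induction hypothesis to the relabeled sub-sequence $T_{j_1},\ldots,T_{j_s}$ indexed by $J_0 = \{j_1 < \cdots < j_s\}$ against the target $\bar S$ of dimension at most $d-1$, any $s = |J_0| \geq \dim(\bar S) + 2$ would produce a strictly smaller $J' \subsetneq J_0$ with $T_{J'}\vu_1 \notin \bar S$, contradicting minimality. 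Hence $|J_0| \leq \dim(\bar S) + 1 \leq d$, and setting $J = J_0\cup\{q^*+1\}$, using $\max J_0 \leq q^* < q^*+1$ so that $T_J = T_{q^*+1}\circ T_{J_0}$, I obtain $|J| \leq d + 1$ and $T_J\vu_1 = T_{q^*+1}(T_{J_0}\vu_1) \notin \bbH$ because $T_{J_0}\vu_1 \in W_{q^*}\setminus\bar S$.

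The main obstacle will be setting up the reduction to the induction hypothesis properly: identifying $\bar S$ as the correct lower-dimensional target, verifying that $\bar S$ is indeed a proper subspace of $W_{q^*}$ (and thus strictly lower-dimensional than $\bbH$), and ensuring that the premise $|J_0| \geq \dim(\bar S) + 2$ of the relabeled instance is what drives the minimality contradiction. Everything else — the base case, the dichotomy between $q^*\leq d$ and $q^*\geq d+1$, and the final bound $|J|\leq d+1$ — follows from routine dimension arithmetic.
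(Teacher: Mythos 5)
Your proof is correct, and while it shares the same engine as the paper's proof---induction on $\dim(\bbH)$, with the preimage of $\bbH$ under a single transformation serving as the strictly lower-dimensional target---it is organized quite differently. The paper peels off the \emph{last} transformation $T_m$: setting $\bbK = T_m^{-1}(\bbH)$, it applies the inductive hypothesis to the first $m-1$ steps against $\bbH\cap\bbK$ and either keeps the resulting index set or appends $m$. Crucially, this only establishes the weaker claim that some index can be dropped ($|J|\leq m-1$), and the stated bound $\dim(\bbH)+1$ is then obtained by iterating the removal (the ``it suffices to show'' reduction at the top of the paper's argument). You instead work forward, locating the pivotal index $q^*+1$ at which the affine hull $W_q$ of all reachable composites first escapes $\bbH$; since $W_{q^*}\subseteq\bbH$ caps the relevant dimension at $d$ from the outset, and the minimal-cardinality witness for $\bar S = W_{q^*}\cap T_{q^*+1}^{-1}(\bbH)$ cannot have $\geq\dim(\bar S)+2$ indices without the inductive hypothesis contradicting its minimality, you get the tight bound in a single induction with no outer descent. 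The trade-off is that you introduce the chain $W_q$ and an extremal argument where the paper gets by with a one-line reduction plus iteration. Two small points to tie up in the full write-up: the degenerate case $\bar S=\emptyset$ (then $J_0=\emptyset$ and $J=\{q^*+1\}$ works directly, so the inductive hypothesis is never invoked on an empty target), and the justification that some \emph{generator} $T_J\vu_1$ of $W_{q^*+1}$ lies outside $\bbH$ (immediate because $\bbH$ is affine and the generators affinely span $W_{q^*+1}$); both are routine.
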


\paragraph*{Affine programs}
An \emph{affine program} (AP) with $n$~variables is a~tuple $\cP =(S,s_0,\mu)$,
where $S$ is a~finite set of states, $s_0 \in S$ is the initial state,
and $\mu$ is a~finite set of transitions of the form
$(s_1,T,s_2)$, where $s_1, s_2 \in S$ and $T:\bbQ^n\to \bbQ^n$ is a~transformation. 
Intuitively, $\cP$~represents a~program with $n$~rational variables, say $z_1,\ldots,z_n$.
A transition $(s_1,T,s_2) \in \mu$ means that the program can
move from state $s_1$ to $s_2$ while reassigning the contents of variables via 
$\vz \mapsto T \vz$, where $\vz$ denotes the column vector of the variables $z_1,\ldots,z_n$.

A \emph{configuration} of $\cP$ is a~pair $(s,\vu) \in S\times \bbQ^n$
where $s$~is a state of~$\cP$ and $\vu$ represents the contents of its variables.
An \emph{initial} configuration is a configuration $(s_0, \vu)$.
A~path in $\cP$ from a configuration $(s,\vu)$ to a~configuration $(s',\vv)$
is a~sequence of transitions $(p_0,T_1,p_1),\ldots,(p_{m-1},T_m,p_m)$ of~$\cP$
such that $p_0=s$, $p_m=s',$ and $\vv = T_m\cdots T_1 \vu$.

The \emph{AP invariant} problem is defined as follows:
{\em Given an AP $\cP$, a~vector $\vu \in \bbQ^n$, a state $s'\in S$, and an affine space~$\bbH$,
decide if there is a path in $\cP$ from $(s_0,\vu)$ to $(s',\vv)$ for some $\vv\notin\bbH$}.
If there is such a~path, then $\bbH$ is not an invariant for $s'$ in $\cP$ w.r.t. the initial configuration $(s_0, \vu)$.
The input affine space $\bbA=\va+\bbV$ can be represented
either as a pair $(\va,V)$ where $V$ is a basis of~$\bbV$,
or as a set of vectors~$V$ where $\aff(V)=\bbA$.
Either representation is fine as one can be easily
transformed to the other.

The AP invariant problem can be solved in a~polynomial time by the so-called Karr's algorithm~\cite{MSeidl04,Karr76}.
The main idea of Karr's algorithm is to compute, for every state~$s \in S$, a~set of vectors~$V_s$
such that the existence of a path from $(s_0, \vu)$ to $(s,\vv)$
implies $\vv\in \aff(V_s)$.
The algorithm works as follows:
At the beginning, it sets $V_{s_0}=\{\vu\}$
and $V_s = \emptyset$ for all other $s\neq s_0$.
Then, using a worklist algorithm, it starts propagating the values of $V_s$
over transitions such that
for each transition $(s_1,T,s_2) \in \mu$ and each vector $\vv \in V_{s_1}$ that has not been
processed before, it adds the vector $T\vv$ into $V_{s_2}$, if $T\vv \notin \aff(V_{s_2})$.
It holds that there is a path from $(s_0,\vu)$ to $(s',\vv)$, for some $\vv \notin\bbH$,
iff $V_{s'} \nsubseteq \bbH$.
Note that we can limit the cardinality of $V_s$ to be at most $n+1$,
hence the algorithm runs in a~polynomial time.
We refer the reader to~\cite{MSeidl04,Karr76} for more details.

\begin{remark}
\label{rem:karr-small-model}
From Karr's algorithm, we can infer a small model property for the invariant problem.
That is, if there is a~path from $(s_0,\vu)$ to $(s',\vv)$, for some $\vv\notin \bbH$,
then there is such a path of length at most $(n+1)|S|$.
Such a~bound can also be derived in a more straightforward manner via Lemma~\ref{lem:finite-dim},
which we believe is interesting on its own.
In fact, if all\linebreak
transformations in an AP are one-to-one,
the bound $(n+1)|S|$ can be lowered to $(\dim(\bbH)+2)|S|$,
which can be particularly useful when $\dim(\bbH)$ is small.
\end{remark}


\section{Register automata over the rationals ($\raq$)}
\label{sec:snt}


In the following, we fix $X=\{x_1,\ldots,x_k\}$ and $Y=\{y_1,\ldots,y_l\}$,
two disjoint sets of variables called {\em control} and {\em data} variables, respectively.
The vector $\vx$ always denotes a~vector of size $k$ 
where $\vx(i)$ is $x_i$. Likewise, $\vy$ is of size~$l$ and $\vy(i)=y_i$.
We also reserve a special variable $\cur \notin X\cup Y$ to denote the data value
currently read by the automaton.

Each variable in $X\cup Y$ can store a data value (these variables are sometimes called \emph{registers}).
When a vector $\vu \in \bbQ^{k+l}$ is used to represent the contents of variables in $X\cup Y$,
the first $k$ components of $\vu$ represent the contents of control variables, denoted by $\vu\ssX$,
and the last $l$ components represent the contents of data variables, denoted by $\vu\ssY$. We also use $\vu(x_i)$ and $\vu(y_j)$ to denote the contents of $x_i$ and $y_j$ in $\vu$, respectively.

A {\em linear constraint} over $X \cup \{\cur\}$ is
a Boolean combination of atomic formulas of the form $z \leq z'$,
where $z,z' \in X\cup \{\cur\}$.
We write $\cC(X,\cur)$ to denote the set of all
linear constraints over~$X \cup \{\cur\}$. For convenience, we use $z < z'$ as an abbreviation of $\neg(z' \le z)$.
In the following, $\bbP^{k\times (k+1)}$ denotes
the set of all 0-1 matrices in which the number of $1$'s in each row is exactly one. 
Intuitively, a matrix $A\in \bbP^{k\times(k+1)}$
denotes a mapping from $\{x_1,\ldots,x_k \}$ to $\{x_1,\dots, x_k, \cur\}$.

\begin{definition}
\label{def:snt} 
A~\emph{register automaton over the rationals} ($\raq$) with control and data
variables $(X,Y)$ is a~tuple $\cA=\langle Q,q_0,F,\vu_0,\delta,\zeta\rangle$
defined as follows:
\begin{itemize}\itemsep=0pt
\item
$Q$ is a~finite set of states, $q_0 \in Q$ is the initial state, and $F \subseteq Q$
is the set of final states.
\item
$\vu_0 \in \bbQ^{k+l}$ is the initial contents of variables in $X\cup Y$.
\item
$\delta$ is a set of transitions whose elements are of the form
\begin{eqnarray}
\label{eq:transition}
t: \quad (p,\varphi(\vx,\cur)) & \to &
(q,A,B,\vb),
\end{eqnarray}
where $p,q \in Q$ are states, $\varphi(\vx,\cur)$ is a~linear constraint from $\cC(\vx,\cur)$,
and $A\in \bbP^{k\times (k+1)}$, $B \in \bbQ^{l \times (k+l+1)}$, $\vb \in \bbQ^l$.
The formula $\varphi(\vx,\cur)$ is called the {\em guard} of~$t$
and the triple $(A,B,\vb)$ its {\em variable reassignment}.
\item
$\zeta$ is a mapping that maps each final state $q_f$ to 
a linear function/expression $g(\vx,\vy)= \va\cdot \vx + \vb\cdot \vy + c$,
where $\va\in \bbQ^{k}$, $\vb\in \bbQ^{l}$, and $c\in \bbQ$.
\end{itemize}
\end{definition}

The intuitive meaning of the transition in~(\ref{eq:transition}) is as follows.
Suppose the contents of variables in $\vx$ and $\vy$ are $\vu$ and $\vv$, respectively.
If $\cA$ is in state $p$, currently reading data value~$c$,
and the guard $\varphi(\vu,c)$ holds,
then $\cA$~can enter state $q$ and reassign the variables $\vx$ with $A \myvec {\vu \\ c}$
and $\vy$ with $B \myvec {\vu \\ \vv \\ c} + \vb$.

Note that the matrix representation of the reassignment can be equivalently written as
$(i)$~a~reassignment of each control variable in $X$ with a variable
in $X\cup \{\cur\}$, and
$(ii)$~a~reassignment of each data variable in $Y$ with a~linear combination of
variables in $X\cup Y\cup\{\cur\}$ and constants.
We will therefore sometimes write the matrices $A$, $B$, and the vector $\vb$ as reassignments to variables
of the form
$\{x_1 := r_1; \ldots; x_k := r_k; y_1 := s_1; \ldots; y_l := s_l\}$ or
$\{\vx:= f(\vx,\cur);\ \vy:= g(\vx,\vy,\cur)\}$. 
For readability, we omit identity reassignments such as $x_i:= x_i$ or
$y_j: = y_j$ from the first form
since the values of these variables do not change.
A~variable $x_i \in X$ is said  to be \emph{read-only} if, for each
transition~$t$, the reassignment to $x_i$ in $t$ is always of the form
$x_i:=x_i$.

\begin{remark}
Note that in the definition above, the guards only allow comparisons
among the current data value and the contents of variables in~$\vx$. 
One can easily generalize the guards so that comparisons with constants are allowed. 
Such a~generalization does not affect the expressive power of~$\raq$,
since every such a~constant~$c$ can be stored into a~fresh read-only control
variable $x_c$ in the initial assignment~$\vu_0$. This notation is chosen for technical convenience. On the other hand, for readability, in some of the examples later on, we do use comparisons with constants, which, strictly speaking, should be taken as comparisons with read-only control variables.
\end{remark}

A {\em configuration} of $\cA$ is a pair $(q,\vu) \in Q\times \bbQ^{k+l}$,
where $\vu$~denotes the contents of the variables.
The {\em initial configuration} of $\cA$ is $(q_0,\vu_0)$,
while final configurations are those with a~final state in the left-hand component.
A transition $t= (p,\varphi(\vx,\cur))\to(q,A,B,\vb)$ and a value~$d$
entail a binary relation
$(p,\vu) \vdash_{t,d} (q,\vv)$, if
\begin{itemize}\itemsep=0pt
\item
$\varphi(\vu\ssX,d)$ holds and
\item
$\vv \ssX= A \myvec{\vu\ssX\\ d}$ and
$\vv\ssY = B\myvec{\vu\ssX\\ \vu\ssY\\ d} + \vb$.
\end{itemize}
For a sequence of transitions $P=t_1\cdots t_n$,
we write $(q_0,\vu_0) \vdash_{P} (q_n,\vu_n)$
if there is a word $d_1\cdots d_n$ such that
$(q_0,\vu_0)\vdash_{t_1,d_1}
(q_1,\vu_1)\vdash_{t_2,d_2}\cdots \vdash_{t_n,d_n} (q_n,\vu_n)$.
In~this case, we say that {\em $d_1\cdots d_n$ is compatible with $t_1\cdots t_n$}.
As~usual, we write $(q_0,\vu_0)\vdash^{\ast} (q_n,\vu_n)$
if there exists a~sequence of transitions $P$ such that
$(q_0,\vu_0)\vdash_P (q_n,\vu_n)$. 

For an~input word $w=d_1\cdots d_n$,
a {\em run} of $\cA$ on $w$ is a~sequence
$(q_0,\vu_0)\vdash_{t_1,d_1}
(q_1,\vu_1)\vdash_{t_2,d_2}\cdots \vdash_{t_n,d_n} (q_n,\vu_n)$,
where $(q_0,\vu_0)$ is the initial configuration
and $t_1, \ldots, t_n \in \delta$. In this case, we also say $(q_0,\vu_0) \vdash_{\cA, w} (q_n,\vu_n)$. 
The run is {\em accepting} if $q_n \in F$,
in which case $\cA$ outputs the value $g(\vu_n)$,
where $\zeta(q_n)=g$, and we say that $\cA$ accepts $w$.
We write $\cA(w)$ to denote the set of all outputs of $\cA$ on $w$
(i.e., if $\cA$ does not accept $w$, we write $\cA(w)=\emptyset$).

We say that $\cA$ is \emph{deterministic}
if, for any state~$p$ and a~pair of transitions $(p,\varphi(\vx,\cur)) \to (q,A, B,\vb)$ and $(p,\varphi'(\vx,\cur)) \to (q',A', B',\vb')$ starting in~$p$,
the formula $\varphi(\vx,\cur)\wedge \varphi'(\vx,\cur)$ is unsatisfiable. 
$\cA$ is \emph{complete} if, for every state~$p$, the disjunction of guards on
all transitions starting in~$p$ is valid.
We say that $\cA$ is \emph{single-valued} if for every word $w$, $|\cA(w)|\leq 1$.
Note that different input words may yield different outputs.
Evidently, every deterministic $\raq$ is single-valued.

We call $\cA$ {\em copyless}
if the reassignment of its data variables is of the form
$\vy:= A \vy + f(\vx,\cur)$,
where $f$ is a linear function
and $A$ is a~0-1 matrix where 
each column contains at most one $1$.
The intuition is that each variable $y_i$ appears at most once
in the right-hand side of the reassignment.
For example, when $l=2$, the reassignment $\{y_1:=y_1+y_2;y_2:=2x_1\}$
is copyless, while $\{y_1:=y_1+y_2; y_2:=y_1\}$ is not,
since $y_1$ appears twice in the right-hand side.
Our definition of copyless is similar to the one for streaming transducers in~\cite{Alur11}.

Note that the standard register automata (RA) studied in~\cite{KaminskiF94,DemriL09,NSVianu04} 
can be seen as a~special case of $\raq$ without the data variables $Y$.
Moreover, we can view the output function in each final state of an~RA
as a~constant function that always outputs~$1$.
Then, a~standard RA can be seen as a single-valued as well as copyless $\raq$.
Also note that affine programs in the sense of Karr~\cite{MSeidl04,Karr76} are also a special case of $\raq$
in which control variables and input words are ignored.

We present some typical aggregate functions
that can be computed with $\raq$.

\subsubsection*{Computing the minimal value}
The $\raq$ has one control variable $x$, as pictured below.
The output function is $\zeta(q)=x$.
The transition is pictured as $\varphi(\vx,\cur),\{M\}$, where
$\varphi(\vx,\cur)$ is the guard and
$M$ denotes the variable reassignment.
\begin{center}
\begin{tikzpicture}[thick,scale=0.7, every node/.style={transform shape}]

\node[state,initial,initial where=above] (q0)  {$q_0$};
\node[state,,accepting,double distance=1pt] (q) [below=1.5cm of q0] {$q$};

\tikzset{->-/.style={
        decoration={markings,
            mark= at position 0.99 with {\arrow{stealth}} ,
        },
        postaction={decorate}
    }
}

	\draw[->-] (q0) to node[right] {$(\mathit{true}), \{x:=\cur\}$} (q);

	\tikzset{loop/.style={in=330,out=30, looseness=10}}
	\draw[->-,loop below] (q) to node[right] {$(\cur< x),\{x:=\cur\}$} (q);

	\tikzset{loop/.style={in=210,out=150, looseness=10}}
	\draw[->-,loop below] (q) to node[left] {$(\cur\geq x), \{\}$} (q);

\end{tikzpicture}
\end{center}
Intuitively, the $\raq$ starts by storing the first value in $x$.
Every subsequent value $\cur$ is then compared with $x$ and
if $\cur < x$, it is stored in $x$ via the reassignment $x:= \cur$.

\subsubsection*{Computing the second largest element}
The $\raq$ has control variables $x_1$ and $x_2$ and its
output function is $\zeta(q_2)=x_2$.
\begin{center}
\begin{tikzpicture}[thick,scale=0.6, every node/.style={transform shape}]

\node[state,initial,initial where=left] (q)  {$q_0$};
\node[state] (q1) [right=3cm of q] {$q_1$};

\node[state,accepting,double distance=1pt] (q2) [right=4cm of q1] {$q_2$};

\tikzset{->-/.style={
        decoration={markings,
            mark= at position 0.99 with {\arrow{stealth}} ,
        },
        postaction={decorate}
    }
}

\draw[->-] (q) to node[above] {$(\mathit{true}), \{x_1:= \cur\}$} (q1);

\draw[->-,bend left=30] (q1) to node[above] {$(\cur \leq x_1), \{x_2:= \cur\}\qquad\qquad$} (q2);
\draw[->-,bend right=30] (q1) to node[below] {$(\cur >  x_1), \{x_1:= \cur;x_2:=x_1\}\qquad\qquad$} (q2);

\tikzset{loop/.style={in=300,out=240, looseness=10}}
\draw[->-,loop below] (q2) to node[below] {$(\cur> x_1),\{x_1:=\cur,x_2:=x_1\}$} (q2);

\tikzset{loop/.style={in=330,out=30, looseness=10}}
\draw[->-,loop below] (q2) to node[right] {$(\cur\leq x_2),\{\}$} (q2);

\tikzset{loop/.style={in=120,out=60, looseness=10}}
\draw[->-,loop below] (q2) to node[above] {$(x_1 \geq \cur > x_2),\{x_2:=\cur\}$} (q2);

\end{tikzpicture}
\end{center}
Intuitively, the $\raq$ stores the first two values in $x_1$ and $x_2$
in a~decreasing order. 
Each subsequent value $\cur$ is compared with $x_1$ and $x_2$,
which are updated if necessary.

\subsubsection*{Computing the number of elements larger than $M$}
The $\raq$ has one control variable $x$ and one data variable $y$ with initial values $M$ and $0$, respectively.
The output function is $\zeta(q_0)=y$.
\begin{center}
\begin{tikzpicture}[thick,scale=0.7, every node/.style={transform shape}]

\node[state,initial,accepting,double distance=1pt, initial where=above] (q)  {$q_0$};

\tikzset{->-/.style={
        decoration={markings,
            mark= at position 0.99 with {\arrow{stealth}} ,
        },
        postaction={decorate}
    }
}

	\tikzset{loop/.style={in=330,out=30, looseness=10}}
	\draw[->-,loop below] (q) to node[right] {$(\cur\leq x),\{\}$} (q);

	\tikzset{loop/.style={in=210,out=150, looseness=10}}
	\draw[->-,loop below] (q) to node[left] {$(\cur> x), \{y:=y+1\}$} (q);

\end{tikzpicture}
\end{center}
Intuitively, each input value $\cur$ is compared with $x$.
If $\cur > x$, the $\raq$ increments $y$ by $1$ via the reassignment $y:= y+1$.

\subsubsection*{Computing the number of occurrences of the maximal element}
The $\raq$ has  one control variable $x$ and one data variable $y$, with the initial value $0$.
The output function is $\zeta(q_1)=y$.
\begin{center}
\begin{tikzpicture}[thick,scale=0.6, every node/.style={transform shape}]

\node[state,initial,initial where=left] (q)  {$q_0$};
\node[state,accepting,double distance=1pt] (p) [right=4.5cm of q] {$q_1$};

\tikzset{->-/.style={
        decoration={markings,
            mark= at position 0.99 with {\arrow{stealth}} ,
        },
        postaction={decorate}
    }
}

\draw[->-] (q) to node[above] {$(\mathit{true}), \{x:= \cur;y:=1\}$} (p);

\tikzset{loop/.style={in=300,out=240, looseness=10}}
\draw[->-,loop below] (p) to node[below] {$(\cur< x),\{\}$} (p);

\tikzset{loop/.style={in=330,out=30, looseness=10}}
\draw[->-,loop below] (p) to node[right] {$(\cur= x),\{y:=y+1\}$} (p);

\tikzset{loop/.style={in=120,out=60, looseness=10}}
\draw[->-,loop below] (p) to node[above] {$(\cur> x),\{x:=\cur;y:=1\}$} (p);

\end{tikzpicture}
\end{center}
Intuitively, it stores the first value in $x$
and reassigns $y:=1$.
Every subsequent value $\cur$ is then compared with $x$.
If it is the new largest element, it will be
stored in~$x$ and the contents of $y$ is
reset to $1$.

Proposition~\ref{prop:linear-raq} below will be useful later on.
Intuitively, it states that for a fixed sequence of transitions $t_1\cdots t_n$,
the contents of variables of $\cA$ are a linear combination
of the values in the input word~$d_1\cdots d_n$, 
provided that $d_1\cdots d_n$ is compatible with $t_1\cdots t_n$.
Its proof can be done by a~straightforward induction on~$n$ and is,
therefore, omitted.

\begin{proposition}
\label{prop:linear-raq}
{\bf (Linearity of $\raq$)}
Let $\cA$ be an $\raq$ over $(X,Y)$.
For every sequence $t_1\cdots t_n$ of transitions of $\cA$,
there is a~matrix $M \in \bbQ^{(k+l)\times n}$ and a~vector $\va \in \bbQ^{k+l}$ such that
for every word $d_1\cdots d_n$ compatible with $t_1\cdots t_n$, where
$(q_0,\vu_0)\vdash_{t_1,d_1}\cdots \vdash_{t_n,d_n} (q_n,\vu_n)$,
it holds that
\begin{eqnarray*}
\vu_n & = & M \myvec {d_1\\ \vdots \\ d_n} + \va .
\end{eqnarray*}
\end{proposition}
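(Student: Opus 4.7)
The plan is to proceed by straightforward induction on $n$, using the affine structure of the variable reassignment in a single transition and then composing. The core observation is that although the guard $\varphi$ appearing in transition $t_i$ gates whether a word is compatible with $t_1\cdots t_n$, the reassignment itself (the triple $(A,B,\vb)$) defines an affine map from $(\vu_{i-1}, d_i)$ to $\vu_i$ that is independent of the actual values; consequently, compatibility of the word only needs to be used to ensure the trace is well-defined, while the linear-algebraic form follows from composing affine maps.

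For the base case $n = 0$ one takes $M$ to be empty and $\va = \vu_0$. For the inductive step, write the last transition $t_n = (p,\varphi)\to(q,A,B,\vb)$, and decompose the matrices according to their block structure: $A = [A_X \mid \va_c]$ with $A_X \in \bbQ^{k\times k}$ and $\va_c\in \bbQ^k$ corresponding to $\cur$, and $B = [B_X \mid B_Y \mid \vb_c]$ with $B_X\in\bbQ^{l\times k}$, $B_Y\in\bbQ^{l\times l}$, $\vb_c\in\bbQ^l$. Combining the equations $\vu_n\ssX = A_X\vu_{n-1}\ssX + \va_c d_n$ and $\vu_n\ssY = B_X\vu_{n-1}\ssX + B_Y\vu_{n-1}\ssY + \vb_c d_n + \vb$ yields a single affine update
\[
\vu_n \;=\; N_n\,\vu_{n-1} + \vc_n\, d_n + \vd_n,
\]
where $N_n = \begin{pmatrix} A_X & 0 \\ B_X & B_Y \end{pmatrix} \in \bbQ^{(k+l)\times(k+l)}$, $\vc_n = \begin{pmatrix}\va_c \\ \vb_c\end{pmatrix}$, and $\vd_n = \begin{pmatrix}0 \\ \vb\end{pmatrix}$.

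By the induction hypothesis there exist $M' \in \bbQ^{(k+l)\times(n-1)}$ and $\va' \in \bbQ^{k+l}$ such that $\vu_{n-1} = M'(d_1,\dots,d_{n-1})^t + \va'$ holds for every word $d_1\cdots d_{n-1}$ compatible with $t_1\cdots t_{n-1}$. Substituting into the update yields
\[
\vu_n \;=\; [\, N_n M' \,\mid\, \vc_n\,]\begin{pmatrix} d_1 \\ \vdots \\ d_n \end{pmatrix} + \bigl(N_n \va' + \vd_n\bigr),
\]
so one sets $M = [N_n M' \mid \vc_n]$ and $\va = N_n\va' + \vd_n$. Since any word $d_1\cdots d_n$ compatible with $t_1\cdots t_n$ has its prefix $d_1\cdots d_{n-1}$ compatible with $t_1\cdots t_{n-1}$, the induction hypothesis applies and the claim follows.

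The only mildly delicate point, and the single place one has to be careful, is the block-decomposition bookkeeping: $A$ acts on a $(k+1)$-dimensional input while $B$ acts on a $(k+l+1)$-dimensional input, and one must correctly pad $A$ with a zero block so that both reassignments are expressed as a single affine map on $\bbQ^{k+l}$ plus a term linear in $d_n$. Beyond that the argument is purely mechanical, with no genuine obstacle; in particular, the guards play no role in the algebraic identity, which is precisely what makes the linearity hold uniformly across all compatible inputs.
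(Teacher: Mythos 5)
Your proof is correct and is exactly the ``straightforward induction on $n$'' that the paper itself invokes (the paper omits the details, noting only that the proposition follows by such an induction); your block decomposition of $A$ and $B$ into an affine update $\vu_n = N_n\vu_{n-1} + \vc_n d_n + \vd_n$ and the composition step are the intended argument. No issues.
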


The following example shows that $\raq$ can be used to represent positive integers succinctly.
Let $p$ and $n$ be positive integers, $k$ be an integer such that $k=\lceil\log n\rceil$, and
$\cA$ be an $\raq$ as illustrated below.
\begin{center}
\resizebox{0.5\linewidth}{!}{
\begin{tikzpicture}[>=stealth,thick,auto,node distance=2cm]

\node[state,initial,initial where=left,inner sep=5pt,minimum size=0pt] (q0) {\small $q_0$};
\node[state,accepting,double distance=1.5pt,inner sep=5pt,minimum size=0pt] (q1) [right=of q0] {\small $q_1$};

\path[->] 
(q0) edge node {\footnotesize $t_{0}$} (q1);

\path[->,every loop/.append style={looseness=15,loop above}]
(q0) edge [loop] node  {\footnotesize $t_{1},\ldots,t_{k}$} (q0);

\end{tikzpicture}}
\end{center}
$\cA$ is over $X=\{x_1,\ldots,x_k\}$ and $Y=\{y\}$.
The initial state of~$\cA$ is~$q_0$ and $q_1$ is its final state.
The initial contents of the variables are $(b_1,\ldots,b_k,1)$,
where $b_k\cdots b_1$ is the binary representation of $n$, i.e.,
$n = \sum_{i=1}^k b_i 2^{i-1}$.

The transition $t_0$ is
$(q_0,\bigwedge_{i=1}^k x_i=0) \to (q_1,\{\})$.
For each $i=1,\ldots,k$, the transition $t_i$ is defined as follows:
\begin{eqnarray*}
& & 
(q_0, x_i = 1 \wedge \bigwedge_{j=1}^{i-1} x_{j}=0)
\\
& &
\to
(q_0,\{x_1:= 1; \ldots; x_{i-1}:=1; x_i := 0; y:= p\cdot y\}).
\end{eqnarray*}
Recall that when a variable's value in a~reassignment is not specified,
its value stays the same.
Therefore, in $t_0$, the contents of all variables stay the same,
while in $t_i$, the contents of $x_{i+1},\ldots,x_k$ stay the same.
We define the output function $\zeta(q_1)$ to output~$y$.

Intuitively, the contents of variables $\vx = (x_1,\ldots,x_k)$ represent
a number between $n$ and~$0$ in binary, where the least significant bit is stored in $x_1$.
$\cA$ starts with $\vx$ containing the binary representation of $n$,
and iterates through all integers from $n$ down to $1$.
On each iteration, it takes one of the transitions $t_1,\ldots,t_k$
that ``decrements'' the number represented by~$\vx$, and
multiplies the contents of $y$ by~$p$.
When the number in $\vx$ reaches $0$, it takes transition $t_0$
and moves to state $q_1$.
Note that the outcome of $\cA$ does not depend on the input word
and it always output $p^n$ regardless on the input.
Moreover, $\cA$~has only $\lceil\log(n)\rceil+1$ transitions.
In fact, one can obtain an $\raq$ that always outputs $p_1^{n_1}\cdots p_k^{n_k}$
by constructing one $\raq$ for each $p_i^{n_i}$
and composing them sequentially.
The final $\raq$ has at most $\sum_{i=1}^k (\lceil\log(n_i)\rceil+1)$ transitions. 

Motivated by the example above, 
we say that {\em an $\raq$ $\cA$ represents a positive integer $n$}
if it has exactly one possible output $n$.
With this representation, $\raq$ can simulate arithmetic circuits as stated below.

\begin{theorem}
\label{theo:circuit}
For every arithmetic circuit $C$ (division-free and without indeterminates),
there is an~$\raq$ $\cA$ that represents the same number as $C$
with the number of transitions linearly proportional to the number of edges in $C$.
If $C$ is additive or multiplicative,
$\cA$ uses only one data variable.
Moreover, $\cA$~can be constructed in time linear
in the size of $C$.
\end{theorem}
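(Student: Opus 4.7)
The plan is to construct the $\raq$ $\cA$ by inductive simulation of $C$'s evaluation, processing the gates of $C$ in topological order. To each gate $g$ I would associate a dedicated data variable $y_g$ and append a small gadget of transitions that sets $y_g$ to the value computed by $g$; the output function at the unique final state will then return $y_{g_{\text{out}}}$, the data variable of $C$'s output gate.

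The easy cases are input gates and addition gates. An input gate with constant value~$c$ is realised by a single transition performing the assignment $y_g := c$, where the constant is stored in a read-only control variable as allowed by the Remark following Definition~\ref{def:snt}. An addition gate $g = g_1 + g_2$ is realised by one linear transition $y_g := y_{g_1} + y_{g_2}$. Both cases contribute only $O(1)$ transitions per incoming edge, which already settles the additive special case: the value of an additive circuit is a linear combination of its leaves and can therefore be accumulated into a single data variable.

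The main obstacle is the multiplication gate $g = g_1 \times g_2$, since the product of two data variables is not a linear update and cannot be carried out by a single transition. My plan is to generalise the $p^n$-style construction from the example just above the theorem: realise the multiplication by iteratively adding $y_{g_1}$ to an accumulator $y_g$, with the iteration structure provided by the subcircuit of $g_2$ itself. Since the value of $y_{g_2}$ can be doubly exponential in $|C|$ and cannot be stored bit-by-bit in a linear number of control variables, the gadget is built compositionally on $g_2$'s subcircuit: each subgate of $g_2$ (constant, $+$, or $\times$) is simulated by a constant-sized $\raq$ fragment that, by distributivity, accumulates the right multiple of $y_{g_1}$ into $y_g$. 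This amortises the gadget's transitions against the edges of $g_2$'s subcircuit and keeps the overall transition count linear in $|E|$. For a purely multiplicative circuit, this collapses into the sequential chaining of $p^n$-style gadgets already described for $p_1^{n_1}\cdots p_k^{n_k}$, using only a single accumulator data variable.

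Correctness would follow by induction on the topological order: Proposition~\ref{prop:linear-raq} certifies that, along any accepting run through the fragment for $g$, the content of $y_g$ is the intended linear expression in the leaf constants, which coincides with $g$'s value in~$C$ because the circuit has no indeterminates. The entire construction is a single pass over~$C$ and is therefore linear-time in~$|C|$. The delicate part on which the whole argument rests is the correctness and the $O(|E|)$ accounting of the multiplication gadget, and that is where I expect the proof to require the most care.
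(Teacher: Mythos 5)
Your overall framing is sound up to the point you yourself flag as delicate, but that point is a genuine gap rather than a routine verification, and the bottom-up architecture you chose makes it hard to close. Concretely, a bottom-up, per-gate gadget cannot realise $g = g_1 \times g_2$ by ``accumulating multiples of $y_{g_1}$ driven by the subcircuit of $g_2$'' within the stated budget: if each multiplication gate carries its own embedded simulation of its operand's subcircuit, then shared subcircuits are re-simulated once per multiplication gate that depends on them (already $\Theta(|C|^2)$ transitions for a chain of $n$ multiplications each taking the previous gate as an operand), and nested multiplications inside $g_2$ force either an exponential tree-unfolding of the gadget or a call-and-return mechanism between gadgets that you have not described. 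Your correctness argument also does not go through as stated: the value of a gate lying below a multiplication is not a linear expression in the leaf constants, and Proposition~\ref{prop:linear-raq} certifies linearity in the \emph{input word} $d_1\cdots d_n$ --- which this automaton ignores --- so it says nothing about which constant ends up in $y_g$.

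The paper resolves exactly this difficulty by inverting the direction of the simulation. Instead of computing each gate's value bottom-up into its own register, it performs a single top-down depth-first traversal of the DAG starting from the output node, maintaining a running \emph{multiplier} in the data variables: the invariant is that entering state $q_u$ with multiplier $N$ and returning to $q_u$ leaves $N$ times the value of $u$. Multiplication gates then come for free (visit the left child, then the right child, composing the two scalings sequentially); constants are handled by $y := p\cdot y$ at the leaves; and addition gates use two auxiliary data variables whose \emph{sum} carries the multiplier, implementing distributivity. The traversal order and backtracking are encoded by two $0$--$1$ control variables per edge, so every edge contributes only a constant number of transitions to $\delta$ even though the run revisits shared subcircuits exponentially often --- the length of the run, not the size of the automaton, absorbs the unfolding. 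This is the mechanism your proposal is missing; once adopted, the single-data-variable claims for purely additive and purely multiplicative circuits fall out directly, whereas your reduction of the multiplicative case to chained $p^n$-style gadgets would first require computing the exponentially large multiplicity of each leaf, which is not available in linear time or size.
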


The number of transitions in $\cA$ is roughly twice the number of edges in $C$,
plus the number of constants in $C$.


\section{The invariant problem for $\raq$}
\label{sec:non-zero}

In this section, we will, in the same spirit as Karr~\cite{Karr76}, consider
the {\em invariant} problem for $\raq$.
For an $\raq$ $\cA = \langle Q,q_0,F,\vu_0,\delta,\zeta\rangle$
and a state $q \in Q$,
define $\vAq{\cA}{q} = \{\vv \mid (q_0,\vu_0) \vdash^{\ast} (q,\vv)\}$,
i.e., $\vAq{\cA}{q}$ contains all vectors representing the contents of control and data variables
when $\cA$ reaches state $q$.
The {\em invariant} problem is defined as: 
{\em Given an $\raq$ $\cA$, a state $q$ of $\cA$, and an affine space $\bbH$,
decide whether $\vAq{\cA}{q} \nsubseteq \bbH$}.
Again, an affine space $\bbA=\va+\bbV$ can be represented
either as a pair $(\va,V)$ where $V$ is a basis of $\bbV$,
or as a set $V$ where $\aff(V)=\bbA$. The invariant problem is tightly related
to the \emph{must-constancy} problem for programs~\cite{MR00}, which asks, for
a~given program location $\ell$, a given variable $z$ and a given constant $c$, whether the value of
$z$ in $\ell$ must be equal to $c$. 

Instead of the invariant problem, it will be more convenient to consider another,
but equivalent, problem, which we call the {\em non-zero} problem, defined as
follows. 
{\em Given an $\raq$ $\cA$, decide whether there is $w$ such that $\cA(w)\nsubseteq \{0\}$,
i.e., whether $\cA$ outputs a non-zero value on some word $w$}.
We abuse notation
and simply write $\cA(w) \neq 0$ to denote that there is $c \in \cA(w)$ such that
$c\neq 0$.
The non-zero problem can, therefore, be written as:
{\em Given an $\raq$ $\cA$, decide whether there is $w$ such that $\cA(w)\neq 0$.}

The two problems are, in fact, Karp inter-reducible.
The reduction from the non-zero problem to the invariant problem is as follows.
Let $\cA$ be the input to the non-zero problem,
$q_{1},\ldots,q_{m}$ be the final states of~$\cA$, and $\zeta$ be the mapping that
specifies the output functions for the final states.
Let $\cA'$ be the $\raq$ obtained by adding
a new state $q_f$ into $\cA$, and for every $q_i$ adding
the following transition:
$(q_i, \mathit{true}) \to (q_f, \{y_1 := \zeta(q_i))\})$.
$\cA'$ has only one final state $q_f$, whose
output function yields $y_1$.
The reduction follows by
the fact that there is $w$ such that $\cA(w)\neq 0$ iff
$\vAq{\cA}{q_f} \nsubseteq \bbH$, where $\bbH$ is the space 
of the solutions $\zeta(q_f)(\vx,\vy)=0$.

Vice versa, the invariant problem reduces to the non-zero problem as follows.
Let $\cA$, $q$, and $\bbH=\va+\bbV$ be the input to the invariant problem.
Let $\{\vv_1,\ldots,\vv_m\}$ be a~basis of $\bbV^{\perp}$, the orthogonal complement of $\bbV$,
which can be obtained by Gaussian elimination on a~basis of $\bbV$ in polynomial time.
Let $\cA'$ be the $\raq$ obtained by adding the following into $\cA$:
\begin{itemize}\itemsep=0pt
\item
$m+1$ new states $q_1,\ldots,q_{m}$ and $p$,
\item
$m+1$ new data variables $y_1,\ldots,y_m$ and $z$,
\item
for each $q_i$, the pair of transitions
$(q, \mathit{true}) \to (q_i, \{y_i := (\myvec{\vx\\ \vy} - \va)\dotprod \vv_i\})$ and
$(q_i, \mathit{true}) \to (p, \{z := y_i\})$.
\end{itemize}
Further, set $p$ as the only final state of~$\cA'$ and set its output function
to yield $z$.
The reduction follows from the fact that
$\vu \in \bbH$ iff $(\vu-\va)\dotprod \vv_i=0$ for every $i=1,\ldots,m$,
thus, $\vAq{\cA}{q} \nsubseteq \bbH$ iff there is a~word~$w$ such that $\cA'(w)\neq 0$.

\subsection{The algorithm and a~small model property}

In this section, we present an exponential-time algorithm
for the non-zero problem of~$\raq$.
Let $\cA=\langle Q,q_0,F,\vu_0,\delta,\zeta\rangle$ be the input $\raq$ over $(X,Y)$,
where $X=\{x_1,\ldots,x_k\}$ and $Y=\{y_1,\ldots,y_l\}$.
The main idea of the presented algorithm is to transform $\cA$ into an affine
program~$\cP_\cA$ and analyse $\cP_\cA$ using Karr's algorithm.

We start with some necessary definitions.
An {\em ordering} of $X$ is a total preorder $\phi$ on~$X$, 
i.e., $\phi=z_{1}\circledast_1 z_{2} \circledast_2 \cdots \circledast_{k-1} z_{k}$,
where $(z_{1},\ldots,z_{k})$ is a permutation of $(x_1,\ldots,x_k)$
and each $\circledast_i$ is either $<$ or $=$.
An ordering $\phi$ is {\em consistent} with 
a transition $(p,\varphi(\vx,\cur))\to (q,A,B,\vb)$,
if $\varphi(\vx,\cur)\wedge \phi$ is satisfiable.
We say that an ordering $\phi$ holds in a~configuration~$(q,\vu)$
if it holds when we substitute $(x_1,\ldots,x_k)$ with $\vu\ssX$.
In this case, we say that the ordering of $(q,\vu)$ is~$\phi$.

The construction of $\cP_\cA$ is based on the following lemma.

\begin{lemma}
\label{lem:reduction-affine}
Let $\bbH$ be an affine space and 
\begin{equation*}
(q_1,\vu_{1}) \ \vdash_{t_1,d_1} \
\cdots\cdots \
\vdash_{t_{m},d_m} \ (q_{m+1},\vu_{m+1})
\end{equation*}
be a run of $\cA$ on a word $d_1\cdots d_m$ 
such that $\vu_{m+1}\notin \bbH$.
Then there is a run of $\cA$ on a word $c_1\cdots c_m$, say 
\begin{equation*}
(q_1,\vv_{1}) \ \vdash_{t_1,c_1} \
\cdots\cdots \
\vdash_{t_{m},c_m} \ (q_{m+1},\vv_{m+1}),
\end{equation*}
such that
$\vu_1=\vv_1$,
$\vv_{m+1}\notin \bbH$, and for every $i=1,\ldots,m+1$ the following holds:
\begin{enumerate}[(a)]\itemsep=0pt
\item
$(q_i,\vu_i)$ and $(q_i,\vv_i)$ have the same ordering.
\item
If $d_i = \vu_{i}\ssX(j)$ for some $j$,
then $c_i = \vv_{i}\ssX(j)$.
\item
If $d_i < \vu_{i}\ssX(j)$, where $\vu_{i}\ssX(j)$ is the minimal value in $\vu_{i}\ssX$,
then either $c_i = \vv_{i}\ssX(j)-1$ or $c_i=\vv_{i}\ssX(j)-2$.
\item
If $d_i > \vu_{i}\ssX(j)$, where $\vu_{i}\ssX(j)$ is the maximal value in $\vu_{i}\ssX$,
then either $c_i = \vv_{i}\ssX(j)+1$ or $c_i=\vv_{i}\ssX(j)+2$.
\item
If $\vu_{i}\ssX(j)< d_i < \vu_{i}\ssX(j')$, where

$\vu_{i}\ssX(j)$ is the maximal value in $\vu_{i}\ssX$
less than $d_i$ and

$\vu_{i}\ssX(j')$ is the minimal value in $\vu_{i}\ssX$
greater than $d_i$,

then either $c_i = \frac{1}{3}\vv_{i}\ssX(j) + \frac{2}{3}\vv_{i}\ssX(j')$ 
or $c_i=\frac{2}{3}\vv_{i}\ssX(j) + \frac{1}{3}\vv_{i}\ssX(j')$.
\end{enumerate}
\end{lemma}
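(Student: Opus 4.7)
The plan is to construct the word $c_1\cdots c_m$ step by step, using Lemma~\ref{lem:affine-space} to rule out ``bad'' choices of each $c_i$. First, I would reduce condition (a) to conditions (b)--(e). By induction on $i$, the structure of the reassignment matrix $A\in\bbP^{k\times(k+1)}$ (one $1$ per row) forces $\vv_{i+1}\ssX$ to be obtained componentwise by copying either an entry of $\vv_i\ssX$ or the current input $c_i$; consequently, the ordering of $\vv_{i+1}\ssX$ is a function of the ordering of the $(k{+}1)$-tuple $(\vv_i\ssX,c_i)$. By (b)--(e), $c_i$ occupies the same relative position within $\vv_i\ssX$ as $d_i$ occupies within $\vu_i\ssX$; combined with the induction hypothesis that $\vv_i\ssX$ and $\vu_i\ssX$ share the same ordering, this forces the $(k{+}1)$-orderings to coincide, and hence also the orderings of $\vv_{i+1}\ssX$ and $\vu_{i+1}\ssX$. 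This simultaneously certifies that the guard of $t_{i+1}$ is satisfied in the new run, so the run constructed below is indeed valid.

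For the core existence argument, I would perform induction on $i$ under the following invariant: having fixed $c_1,\ldots,c_{i-1}$ satisfying (b)--(e), there exist auxiliary values $e_i,\ldots,e_m\in\bbQ$ such that the combined word $c_1\cdots c_{i-1}\,e_i\cdots e_m$ is compatible with $t_1\cdots t_m$ and reaches a final configuration outside $\bbH$. The base case $i=1$ is witnessed by $e_j:=d_j$, giving the original run. For the inductive step, fix any such witness $(e_i,\ldots,e_m)$. By Proposition~\ref{prop:linear-raq} applied to the fixed transition sequence, the final configuration, viewed as a function of the single variable $c_i$ with the remaining data held fixed, extends to an affine function $c_i\mapsto T\myvec{\vw\\c_i}$ on all of $\bbQ$. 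Since the choice $c_i=e_i$ maps outside $\bbH$, the contrapositive of Lemma~\ref{lem:affine-space} yields that at most one value of $c_i\in\bbQ$ maps into $\bbH$. In cases (c), (d), and (e), the admissible set for $c_i$ has two distinct elements and at least one of them avoids $\bbH$; choose it and retain $e_{i+1},\ldots,e_m$. In case (b), the value $c_i=\vv_i\ssX(j)$ is forced; however, because ties in $\vv_i\ssX$ mirror those in $\vu_i\ssX$ (by the first stage), any $e_i$ having type (b) with respect to $\vv_i\ssX$ already equals $c_i$, so the witness persists unchanged.

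I anticipate that the most delicate point will be the distinction between the ``type interval'' of $c_i$---within which the fixed transition sequence actually fires in the true run---and the affine extension of the final-configuration map to all of $\bbQ$ required to invoke Lemma~\ref{lem:affine-space}. The abstract affine expression given by Proposition~\ref{prop:linear-raq} is well defined on the whole line, while the two admissible values prescribed by (c), (d), (e), together with the auxiliary witness $e_i$, all lie strictly inside the type interval by construction; the lemma may therefore be applied to the extended affine map on $\bbQ$ and then specialized to the admissible set without trouble. After $m$ iterations, the constructed word $c_1\cdots c_m$ satisfies (b)--(e), and (by stage one) also (a), while the vacuously extended invariant ensures $\vv_{m+1}\notin\bbH$.
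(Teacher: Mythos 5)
Your overall strategy---perturb each $d_i$ to one of two prescribed candidates and use Lemma~\ref{lem:affine-space} to argue that at least one candidate escapes $\bbH$---is the right core idea, and your stage-one reduction of (a) to (b)--(e) is sound. However, the forward induction in your second paragraph has a genuine gap at the step ``choose it and retain $e_{i+1},\ldots,e_m$.'' Replacing $e_i$ by a different value $c_i$ changes the contents of the control variables from step $i+1$ onwards (whenever $t_i$ copies $\cur$ into some $x_h$), so the retained suffix need not satisfy the guards of $t_{i+1},\ldots,t_m$ against the new contents. Concretely: take one control variable $x$ initialized to $0$, a transition $t_1$ with guard $\cur>x$ and reassignment $x:=\cur$, and a transition $t_2$ with guard $\cur=x$. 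The witness $e_1=e_2=5$ is compatible, case (d) prescribes $c_1\in\{1,2\}$, and for either choice the word $c_1\,e_2$ violates the guard of $t_2$. Thus the invariant is not maintained, and Lemma~\ref{lem:affine-space} ends up being applied to formal affine expressions that no longer correspond to actual runs. A second, smaller gap is case (b) under a non-equality guard: $d_i$ may coincide with some $\vu_i\ssX(j)$ even though $\cur$ is not bound by the guard of $t_i$; then only one value of $c_i$ is admissible, and your witness $e_i$ need not already equal it, so the two-point argument via Lemma~\ref{lem:affine-space} is unavailable there.

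Repairing this requires knowing that the perturbed configuration $\vv_{i+1}$ still admits \emph{some} compatible completion escaping $\bbH$---that is, that $\vv_{i+1}$ lies outside the pre-image of $\bbH$ under the remaining transitions. This is exactly what the paper's proof tracks: it inducts backwards on $m$, defines the affine pre-image $\bbK$ of $\bbH$ under the last transition $t_m$ (treating a free $\cur$ as ranging over all of $\bbQ$), applies the induction hypothesis to the length-$(m-1)$ prefix with target space $\bbK$, and only then chooses $c_m$ among the two candidates via Lemma~\ref{lem:affine-space}. I would recommend restructuring your argument along those lines; the backward induction makes the ``suffix can still escape'' obligation an explicit part of the induction hypothesis rather than something to be re-verified after each perturbation.
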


Intuitively,
Lemma~\ref{lem:reduction-affine} states that
for a run $(q_1,\vu_1)\vdash_{\cA, w} (q_{m+1},\vu_{m+1})$ on a word $w=d_1\cdots d_m$ such that
$\vu_{m+1}$ does not belong to the affine space $\bbH$,
we can assume that 
each $d_i$ is a linear combination of components in $\vu_i$.
We note that
the choice of the constants $\pm1, \pm2, \frac{1}{3}, \frac{2}{3}$ in items (c)-(e) is arbitrary and was made to ensure that there are at least two possible different values for~$c_i$,
since by Lemma~\ref{lem:affine-space}, one of them is guaranteed to hit outside~$\bbH$.
Other constants satisfying the right conditions would work, too.

With Lemma~\ref{lem:reduction-affine}, we then transform $\cA$ to an affine program~$\cP_\cA$
and apply Karr's algorithm on $\cP_\cA$ to decide the non-zero problem.
Essentially, the set of states in $\cP_\cA$ is the Cartesian product of $Q$ and the set of orderings of $X\cup\{\cur\}$.
The number of variables in $\cP$ is $k+l$.
There are altogether $2^k(k+1)!$ orderings of $X\cup \{\cur\}$,
so the algorithm runs in an exponential time, as stated in Theorem~\ref{theo:non-zero-exptime} below.

\begin{theorem}
\label{theo:non-zero-exptime}
The non-zero problem for $\raq$ is in $\exptime$.
\end{theorem}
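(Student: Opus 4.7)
The plan is to reduce the non-zero problem for $\raq$ to the AP invariant problem (which, by the Karp-inter-reducibility shown above, amounts to deciding whether $\vAq{\cA}{q} \nsubseteq \bbH$ for some affine space $\bbH$) and then to invoke Karr's algorithm. First I would build an affine program $\cP_\cA$ whose states are pairs $(q,\phi)$, where $q \in Q$ and $\phi$ is one of the $2^k(k+1)!$ orderings of $X\cup\{\cur\}$. Intuitively, the ordering $\phi$ records the relative position of the value to be read next with respect to the current contents of the control variables, which is exactly the information the guards of $\cA$ depend on.

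Next, for every transition $t=(p,\varphi(\vx,\cur))\to(q,A,B,\vb)$ of $\cA$ and every ordering $\phi$ consistent with $\varphi$, I would add transitions from each state $(p,\phi')$ (where $\phi'$ is an ordering of $X$ consistent with $\phi$ restricted to $X$) to state $(q,\phi'')$ in $\cP_\cA$, one for each of the canonical substitutions of $\cur$ allowed by Lemma~\ref{lem:reduction-affine}: namely $\cur := x_j$ when $\phi$ forces $\cur=x_j$; $\cur := x_j \pm 1$ or $x_j \pm 2$ when $\cur$ is below (resp.\ above) the minimum (resp.\ maximum) of $\vu\ssX$; and $\cur := \tfrac{1}{3}x_j + \tfrac{2}{3}x_{j'}$ or $\tfrac{2}{3}x_j + \tfrac{1}{3}x_{j'}$ when $\cur$ is between two consecutive values. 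Plugging each such substitution into the reassignment $(A,B,\vb)$ of $t$ yields a genuine affine transformation on $\vx\vy$, so the resulting $\cP_\cA$ is a bona fide affine program with $k+l$ variables and at most $|Q|\cdot 2^k(k+1)!$ states.

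Then I would run Karr's algorithm on $\cP_\cA$ with initial configuration $(q_0,\vu_0)$ (or rather $((q_0,\phi_0),\vu_0)$ for the ordering $\phi_0$ of $\vu_0\ssX$) and the target affine space $\bbH$. Since Karr's algorithm runs in time polynomial in the size of the AP, and the size of $\cP_\cA$ is exponential only in $k$ and polynomial in everything else, the overall procedure runs in exponential time as required.

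The main obstacle is proving the correctness of this reduction. The easy direction is that every run in $\cP_\cA$ corresponds to a run of $\cA$ on a compatible word (since the canonical substitution of $\cur$ respects the ordering $\phi$ and thus satisfies the guard), so if $\cP_\cA$ reaches a vector outside $\bbH$ then so does $\cA$. The harder direction is the converse: starting from an arbitrary witness run of $\cA$ leaving $\bbH$, I must produce a witness run of $\cP_\cA$ leaving $\bbH$. This is precisely what Lemma~\ref{lem:reduction-affine} provides—combined with Lemma~\ref{lem:affine-space}, which guarantees that among the (at least two) canonical choices of $c_i$ offered in items (c)--(e), at least one leads to a vector outside $\bbH$; an inductive argument along the length of the run then extracts a run of $\cP_\cA$ that stays faithful to the original and ends outside $\bbH$.
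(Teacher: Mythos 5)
Your proposal is correct and follows essentially the same route as the paper: build the affine program $\cP_\cA$ over states $(q,\phi)$ for the $2^k(k+1)!$ orderings, instantiate $\cur$ with the canonical substitutions from Lemma~\ref{lem:reduction-affine}, and run Karr's algorithm, with correctness of the harder direction resting on Lemma~\ref{lem:reduction-affine} (itself proved via Lemma~\ref{lem:affine-space}). This matches the paper's construction and complexity analysis.
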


Note that the number of states in the affine program $\cP_\cA$
is $|Q|2^k (k+1)!$.
By Remark~\ref{rem:karr-small-model},
we can obtain a~similar small model property for~$\raq$,
as stated below.

\begin{theorem}
\label{theo:small-model-snt}
{\bf (A small model property for $\raq$)}
If there is a~word $w \in \bbQ^*$ such that $\cA(w)\neq 0$,
then there is a~word $w' \in \bbQ^*$ such that $\cA(w')\neq 0$ and $|w'|\leq |Q|(k+l+1)2^k (k+1)!$.
\end{theorem}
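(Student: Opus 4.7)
The plan is to ride directly on the affine-program construction $\cP_\cA$ sketched immediately before the theorem and then invoke the small-model property of Karr's algorithm recorded in Remark~\ref{rem:karr-small-model}. Concretely, the non-zero problem for $\cA$ is Karp-reducible to the invariant problem (as shown at the beginning of Section~\ref{sec:non-zero}), and by Lemma~\ref{lem:reduction-affine} every witness run of $\cA$ can be replaced by one in which each input value $d_i$ is a prescribed linear expression in the contents of the control variables at step~$i$. This is exactly what lets us fold $\cA$ into an affine program.

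The first step is to spell out $\cP_\cA$ precisely enough to read off its parameters. Its states are pairs $(q,\phi)$ where $q\in Q$ and $\phi$ is an ordering of $X\cup\{\cur\}$; there are $2^k(k+1)!$ such orderings, so $|S_{\cP_\cA}| = |Q|\cdot 2^k(k+1)!$. The variables of $\cP_\cA$ are exactly the $k+l$ variables $\vx,\vy$ of $\cA$, so $n = k+l$. For each transition $t=(p,\varphi)\to(q,A,B,\vb)$ of~$\cA$ and each ordering $\phi$ of $X\cup\{\cur\}$ consistent with $\varphi$, we install in $\cP_\cA$ a transition from $(p,\phi')$ (for each $\phi'$ in which the control part of $\phi$ can hold) to $(q,\phi'')$, whose transformation is obtained by substituting into $(A,B,\vb)$ the canonical expression for $\cur$ dictated by $\phi$ according to items (b)--(e) of Lemma~\ref{lem:reduction-affine}. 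By Lemma~\ref{lem:reduction-affine} each such substitution preserves the property of witnessing $\vv_{m+1}\notin\bbH$, and by Lemma~\ref{lem:affine-space} at least one of the two or three candidate constants in items (c)--(e) still produces a vector outside~$\bbH$, so nothing is lost.

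The second step is to invoke Remark~\ref{rem:karr-small-model} on $\cP_\cA$: if there is \emph{any} path in $\cP_\cA$ from $(q_0,\phi_0)$ with $\vu_0$ to a configuration $(s',\vv)$ with $\vv\notin\bbH$ (where $\bbH$ encodes the ``output $=0$'' constraint obtained from the non-zero-to-invariant reduction), then there is such a path of length at most $(n+1)|S_{\cP_\cA}| = (k+l+1)\cdot|Q|\cdot 2^k(k+1)!$. Finally, I translate this bounded path back to a run of $\cA$: each transition of $\cP_\cA$ was built from a single transition of $\cA$ together with a fixed recipe for reading off a compatible input value, so a path of length $m$ in $\cP_\cA$ yields a word $w'$ of length $m$ with $(q_0,\vu_0)\vdash_{\cA,w'}(q_f,\vv)$ and $\vv\notin\bbH$, i.e., $\cA(w')\neq 0$. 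The desired length bound follows immediately.

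I do not expect a real obstacle; the only place that requires a little care is the simulation between $\cA$ and $\cP_\cA$ in both directions. Forward (run of $\cA$ $\Rightarrow$ path of $\cP_\cA$) is where Lemma~\ref{lem:reduction-affine} does the work, guaranteeing that the ``rational input'' $d_i$ can always be chosen from a finite catalog of linear expressions in $\vx$, so that the transformation in $\cP_\cA$ is genuinely affine and the witness is preserved outside~$\bbH$. Backward (path of $\cP_\cA$ $\Rightarrow$ run of $\cA$) requires checking that the ordering component $\phi$ of each state is consistent with the actual contents of $\vx$ along the simulated run, which is built into the construction of the transitions of $\cP_\cA$. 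Once these two simulations are set up cleanly, the length bound is just arithmetic on $(n+1)|S_{\cP_\cA}|$.
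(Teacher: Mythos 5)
Your proposal is correct and follows essentially the same route the paper itself uses to justify this theorem in the main text: build the affine program $\cP_\cA$ with states $Q\times\{\text{orderings of }X\cup\{\cur\}\}$, use Lemma~\ref{lem:reduction-affine} (plus Lemma~\ref{lem:affine-space}) for the two-way simulation, and read the bound $(k+l+1)\cdot|Q|2^k(k+1)!$ off Remark~\ref{rem:karr-small-model}. The appendix additionally gives an alternative, Karr-free proof (pigeonholing on repeated state--ordering pairs and applying Lemma~\ref{lem:finite-dim} directly to the composed transformations), but your derivation is the one the paper's exposition actually relies on and is sound.
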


One can also prove Theorem~\ref{theo:small-model-snt} without relying on Karr's
algorithm.
Moreover, we can show that the exponential bound is, in fact, tight
(see the appendix for proofs of both claims).

We should remark that the exponential complexity is in the bit model,
i.e., rational numbers are represented in their bit forms.
As stated in Theorem~\ref{theo:circuit},
an $\raq$ can simulate an~arithmetic circuit and store in its data variables
values that are doubly-exponentially large (w.r.t.~the number of control variables), which
occupy an~exponential space.
For example, if the initial value of a~data variable~$y$ is 1 and
every transition contains the reassignment $y := 2y$, 
the final value of~$y$ may be up to $2^{|Q|(k+l+1)2^k (k+1)!}$.
However, if we assume that rational numbers between $-1$ and $1$ occupy only constant space,
the non-zero problem is in $\pspace$ (by guessing a path of exponential length as in Theorem~\ref{theo:small-model-snt}),
which matches the non-emptiness problem of standard RA~\cite{DemriL09}.

\subsection{Polynomial-space algorithm for copyless $\raq$}

In the following, let $\cA$ be a copyless $\raq$ with $k$ control variables
and $l$ data variables.
W.l.o.g.,
we assume that every transition of~$\cA$ is of the form $(p,\varphi(\vx,\cur))\to (q,A,B,0)$,
i.e., $\vb=0$ (every $\raq$ can be transformed to this form by adding new
control variables to store the non-zero constants in $\vb$).
Recall that copyless $\raq$ are still a~generalization of standard RA,
thus, the non-zero problem is $\pspace$-hard.
In the following we will show that the non-zero problem is in $\pspace$.
We need the following lemma.

\begin{lemma}
\label{lem:forgetful-raq}
Let $\bbH$ be an affine space and 
\begin{equation*}
(q_1,\vu_{1}) \ \vdash_{t_1,d_1} \
\cdots\cdots \
\vdash_{t_{m},d_m} \ (q_{m+1},\vu_{m+1}),
\end{equation*}
be a run of $\cA$ on a word $d_1 \cdots d_m$ 
such that $\vu_{m+1}\notin \bbH$.
Then, there exists a run of $\cA$ on a word $c_1 \cdots c_m$, say
\begin{equation*}
(q_1,\vv_{1}) \ \vdash_{t_1,c_1} \
\cdots\cdots \
\vdash_{t_{m},c_m} \ (q_{m+1},\vv_{m+1}),
\end{equation*}
such that
$\vu_1=\vv_1$,
$\vv_{m+1}\notin \bbH$, and for every $i=1,\ldots,m+1$,
if $c_i$ does not appear in $\vv_i\ssX$,
then $c_i \neq c_{j}$ for every $j \leq i-1$.
\end{lemma}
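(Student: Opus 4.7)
The plan is to parameterize the set of compatible words yielding the same \emph{ordering pattern} (the collection of equalities and strict inequalities between $\cur$ and the control variables at each step) as the original run, then select a rational point avoiding finitely many forbidden affine subspaces. Let $\bbV \subseteq \bbQ^m$ denote this parameter set. By Proposition~\ref{prop:linear-raq}, $\bbV$ is the relative interior of a polyhedron inside a rational affine subspace of $\bbQ^m$, the original word $(d_1,\ldots,d_m)$ lies in $\bbV$, and the induced final configuration is an affine map $F : \bbV \to \bbQ^{k+l}$. Let $E = \bbV \cap F^{-1}(\bbH)$, which is a proper subset of $\bbV$ since $F(d_1,\ldots,d_m) \notin \bbH$. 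For each position $i$ where $c_i$ does not appear in $\vv_i\ssX$ (according to the shared pattern) and each $j < i$, let $H_{i,j} = \{c \in \bbV : c_i = c_j\}$.

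The crux is the claim that each $H_{i,j}$ is a \emph{proper} affine subset of $\bbV$. Once shown, a standard rational dimension argument (a nonempty relative-open subset of a rational affine space cannot be covered by finitely many proper affine subspaces) picks a rational $(c_1,\ldots,c_m) \in \bbV \setminus (E \cup \bigcup_{i,j} H_{i,j})$, yielding the required word. To prove the claim, I would analyse the equivalence relation $\sim$ on \emph{slots} $\{c_1,\ldots,c_m\} \cup \{\vv_0\ssX(b)\}_{b=1}^{k}$ generated by the ordering pattern's equalities. Rewritten in slot form, each such equality equates $c_r$ with $\mathrm{src}(a,r)$ or equates $\mathrm{src}(a,r)$ with $\mathrm{src}(b,r)$, where $\mathrm{src}(a,r)$ is the unique source slot whose value is carried by $x_a$ at step $r$; this source is well-defined because each row of the control-reassignment matrix in $\bbP^{k\times(k+1)}$ contains exactly one $1$. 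The key monotonicity observation is: if $\mathrm{src}(a,r) = s$ with $s$ born at step $s' \le r$, then $s$'s value is carried continuously by some control variable from step $s'$ through step $r$, hence appears in $\vv_t\ssX$ for every $s' \le t \le r$.

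Using this monotonicity, I would show by induction on the transitive closure of $\sim$ that $[c_i]_\sim$ contains only slots born at step $\ge i$: any attempt to extend the class by a slot $s$ born at step $s' < i$ would force $s$'s value, which equals $c_i$'s value by the chain of equalities in the class, to appear in $\vv_i\ssX$, contradicting the freshness hypothesis at $i$. In particular $c_j \notin [c_i]_\sim$ for every $j < i$, so the equality $c_i = c_j$ is not already implied by the ordering pattern, making $H_{i,j}$ proper. The main obstacle is keeping the equivalence-closure bookkeeping clean; the underlying monotonicity reduces it to a straightforward induction.
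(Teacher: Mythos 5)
Your proof is correct, but it takes a genuinely different route from the paper's. The paper argues locally and inductively, in the same style as Lemma~\ref{lem:reduction-affine}: at any position $j$ where $d_j$ does not occur in $\vu_j\ssX$, density of $\bbQ$ gives infinitely many values of $\cur$ satisfying the guard, and Lemma~\ref{lem:affine-space} guarantees that at most one of them sends the configuration into the pre-image of $\bbH$ under the remaining transformations; so $c_j$ can be re-picked to avoid that one value and the finitely many earlier values. You instead argue globally: fix the ordering pattern of the whole run, view the compatible words as a nonempty relatively open subset $\bbV$ of a rational affine subspace, check that the bad sets ($F^{-1}(\bbH)$ and the diagonals $c_i=c_j$ at free positions) lie in proper affine subspaces, and take a generic rational point. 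The real work in your version is the properness of the diagonals, which you get from the slot equivalence relation plus the observation that any slot born before step $i$ and equated to $c_i$ must survive in the control variables through step $i$; this is a correct, if heavier, combinatorial substitute for what the paper extracts directly from density. Two points to tighten: the pattern you fix must be the full total preorder on $X\cup\{\cur\}$ at each step (not only the comparisons of $\cur$ against $X$), since guards may compare two control variables --- your slot analysis already presupposes this, as equalities of the form $\mathrm{src}(a,r)=\mathrm{src}(b,r)$ arise only from such a preorder; and when deriving the contradiction you should note explicitly that the generating equalities all hold in the original run, so what is contradicted is $d_i\in\vu_i\ssX$. The payoff of your approach is that one genericity step simultaneously yields $\vv_{m+1}\notin\bbH$ and every freshness condition; the paper's approach is shorter and reuses machinery it already has.
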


Intuitively, Lemma~\ref{lem:forgetful-raq} states that for the non-zero problem,
it~is sufficient to consider only words $c_1\cdots c_m$
such that if $\cA$~encounters a value $c_i$ that is not in its control variables,
then $c_i$ is indeed new, i.e., it has not appeared in~$c_1\cdots c_{i-1}$.
Another way of looking at it is that
once $\cA$ ``forgets'' a value~$c_i$,
i.e., $c_i$ no longer appears in its control variables,
then $c_i$ will never appear again in the future.
 
We start with the following observation.
Let $w=d_1\cdots d_m$ be a word.
Suppose $(q_0,\vu_0)\vdash_{\cA,w} (q_m,\vu_m)$,
where $q_m$ is a final state.
By linearity of $\raq$ (cf.~Proposition~\ref{prop:linear-raq}),
$\vu_m = M \myvec {d_1 \cdots d_m}^t + \va$,
for some~$M$ and~$\va$.
Let $\zeta(q_m)(\vx,\vy)=\vc\dotprod \myvec{\vx \\ \vy}+b$.
Then, for some $\alpha_1,\ldots,\alpha_m,\beta$,
\begin{eqnarray*}
\zeta(q_m)(\vu_m) & = &
\alpha_1 d_1 + \cdots + \alpha_m d_m +\beta.
\end{eqnarray*}
Suppose $d_1',\ldots,d_n'$ are the distinct values occurring in $d_1\cdots d_m$.
Therefore, for some $\alpha_1',\ldots,\alpha_n'$, it holds that
\begin{eqnarray*}
\zeta(q_m)(\vu_m) & = &
\alpha_1' d_1' + \cdots + \alpha_n' d_n' +\beta .
\end{eqnarray*}
We can assume that the values $d_1',\ldots,d'_{k}$ are
the initial contents of control variables.
For simplicity, we can also assume that the initial contents of control
variables are pairwise different and that all initial values stored in control variables occur
in $d_1\cdots d_m$.
We observe the following:
\begin{itemize}\itemsep=0pt
\item
If there is $i> k$ such that $\alpha_i' \neq 0$,
then we can assume that $\zeta(q_m)(\vu_m) \neq 0$.
To show why, suppose to the contrary that $\zeta(q_m)(\vu_m) = 0$.
From the assumption, the value $d_i'$~does not appear in the initial contents of control variables.
When $d_i'$ first appears in the input word, 
by density of rational numbers,
we can increase $d_i'$ by some small number $\epsilon>0$,
i.e., replace $d_i'$ with $d_i'+\epsilon$,
and still obtain a run from $q_0$ to $q_m$.
The output will now be
$$
\alpha_1' d_1' + \cdots + \alpha_i'(d_i'+\epsilon)+\cdots + \alpha_n' d_n' +\beta
=\alpha_i' \epsilon,
$$
which will be non-zero, since both $\epsilon$ and $\alpha_i'$ are non-zero.
\item
If, for all $i>k$, it holds that $\alpha_i' = 0$,
then $\zeta(q_m)(\vu_m)\neq 0$ if and only if
$\alpha_1' d_1'+\cdots+\alpha_k'd'_{k} +\beta \neq 0$.
\end{itemize}
Note that our observation above holds for general $\raq$.
In general, the number of bits for storing $\alpha_i'$ can be exponentially large,
but as we will see later, is polynomially bound for copyless $\raq$.

The algorithm works by
simulating a~run of $\cA$ of length at most $|Q|(k+l+1)2^{k}(k+1)!$
starting from the initial configuration.
During the simulation, when $\cA$ assigns new values into control variables,
the algorithm only remembers the ordering of control variables, not the actual
data values assigned.
Such an ordering is sufficient to simulate a~run.
The algorithm will then try to nondeterministically guess the first position of the word where a value $d_i'$  such that $i > k$ and $\alpha_i' \neq 0$ occurs.
Again, the algorithm does not guess the actual value $d_i'$
but, instead, only remembers the names of the control variables that~$d'_i$ is assigned to.

In the rest of the simulation, the algorithm keeps track of how many ``copies'' of $d_i'$
have been added to each data variable.
For example, suppose $d_i'$ is stored in a~control variable~$x_j$
and the reassignment for~$y$ in a transition~$t$ is of the form $y:= y+ y' + c' x_j$.
Then, the number of copies of $d_i'$ in $y$ is obtained by adding the number of copies of~$d_i'$ in~$y$ and~$y'$,
plus $c'$ copies of $d_i'$.
Note that due to being copyless, 
the assignment to $y'$ in $t$ cannot use the original value of $y'$, which is lost.

When the value $d_i'$ is forgotten in $\cA$, i.e., $d'_i$ is not stored in any
control variable any more, we can assume $d'_i$ will not appear again in the
input word (by Lemma~\ref{lem:forgetful-raq}).
During the simulation, the algorithm keeps for every data variable~$y$ a~track of how many copies of $d_i'$
are stored in~$y$.
Every time the algorithm reaches a~final state, it applies the output function
of the state and checks whether~$\alpha_i' \neq 0$.

Based on the property that $\cA$ is copyless,
we notice that in one step, the sum of {\em all} data variables may increase by at most $f(\vx,\cur)$,
for some linear function $f$, where the constants in $f$ come from
those in the transition.
By Theorem~\ref{theo:small-model-snt}, during any run,
the number of bits occupied by the sum of 
the ``coefficients'' of $d'_i$ in {\em all} data variables is at most
$c\cdot \log(|Q|(k+l+1)2^k(k+1)!)$,
where $c$ is the sum of all bits occupied by the constants in the transitions.
Thus, each $\alpha_i'$ occupies only a~polynomial space. 

If for all $i > k$ it holds that $\alpha_i'= 0$,
the algorithm counts $\alpha_1',\ldots,\alpha_k'$ instead.
Recall that $d_1',\ldots,d_k'$ are the data values of the initial contents of control variables.
The algorithm performs the counting during the simulation of a~run in a~similar way as above.
When $d_i'$ no longer appears in any control variable,
the counting stops and the simulation simply continues
by remembering the state and the ordering of control variables.
When a~final state is reached, the algorithm verifies that
$\alpha_1'd_1' + \cdots + \alpha_k'd_k' +\beta \neq 0$.
Again, each $\alpha_i'$ occupies only a~polynomial space,
thus, the whole algorithm runs in a~polynomial space.
Since the non-emptiness problem for standard RA
is already $\pspace$-hard,
we conclude with the following theorem.

\begin{theorem}
\label{theo:copyless}
The non-zero problem for copyless $\raq$
is $\pspace$-complete.
\end{theorem}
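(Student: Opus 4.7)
The $\pspace$-hardness is inherited: since copyless $\raq$ subsume the standard RA (an RA is simply a~copyless $\raq$ outputting $1$ in final states), and the non-emptiness problem for RA is already $\pspace$-hard~\cite{DemriL09}, the same lower bound applies. The upper bound is the real content, and my plan is to design a~nondeterministic polynomial-space procedure that simulates an accepting run of $\cA$ producing a~non-zero output, without ever materialising the rational values flowing through $\cA$. Savitch's theorem then yields $\pspace$.

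By Theorem~\ref{theo:small-model-snt}, a~witnessing run may be assumed to have length at most $N=|Q|(k+l+1)2^{k}(k+1)!$; since $\log N$ is polynomial in $|\cA|$, a~step counter fits in polynomial space. At each step, only three items are stored: the current state, a~total preorder on the control variables (which suffices to evaluate guards), and the coefficient bookkeeping described below. Reading of fresh inputs is handled via Lemma~\ref{lem:forgetful-raq}: whenever the next input is not one of the existing control-variable values, we may treat it as a~truly new value that will never appear again, so we never need to assign it a~name---only to place it in the ordering. For the output, I would write $\zeta(q_m)(\vu_m) = \sum_{i=1}^{n} \alpha_i' d_i' + \beta$, where $d_1',\ldots,d_k'$ are the initial control-variable values and $d_{k+1}',\ldots,d_n'$ are the fresh inputs. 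Following the case split preceding the theorem, a~non-zero output is certified either by some $\alpha_i'$ with $i>k$ being non-zero, or by $\sum_{i\le k}\alpha_i' d_i' + \beta \ne 0$. The algorithm guesses which case applies and, in the first case, the critical index $i>k$ (via the position where $d_i'$ first appears in the input); in both cases, it then maintains, for each data variable $y_j$, a~single integer tracking the current coefficient of the chosen $d_i'$ in $y_j$, updated according to the linear reassignment of each transition.

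The main obstacle, and the precise spot where copylessness is essential, is bounding the bit-length of these coefficients. Since the reassignment of $\vy$ takes the form $\vy:=A\vy + f(\vx,\cur)$ with $A$ a~0-1 matrix having at most one~$1$ per column, the sum of absolute values of the coefficients of the chosen $d_i'$ across all data variables can grow per step by only an additive term bounded by the constants appearing in that transition. Summing over at most $N$ steps, the tracked coefficients have magnitude polynomial in $N$ and in the input constants, hence polynomial \emph{bit}-length---in sharp contrast to the doubly-exponential values achievable in the general case via Theorem~\ref{theo:circuit}. With polynomial bounds on the state, the ordering, the step counter, and each coefficient, the whole simulation runs in nondeterministic polynomial space, and Savitch's theorem upgrades this to $\pspace$.
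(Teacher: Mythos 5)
Your proposal is correct and follows essentially the same route as the paper's proof: the small model bound from Theorem~\ref{theo:small-model-snt}, the freshness reduction of Lemma~\ref{lem:forgetful-raq}, the case split on whether some coefficient $\alpha_i'$ with $i>k$ is non-zero, and the use of copylessness to bound the tracked coefficients' bit-length additively over the run. The only cosmetic differences are that you invoke Savitch explicitly (the paper leaves NPSPACE $=$ PSPACE implicit) and call the tracked coefficients integers where they are in general rationals of polynomial bit-length; neither affects the argument.
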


It is tempting to directly use Theorem~\ref{theo:small-model-snt}
to prove Theorem~\ref{theo:copyless} by simulating the run directly instead of tracing the coefficients of input values.
In doing so, however,
the number of bits may increase in each step.
For example, suppose an~$\raq$ has two control variables $x_1$ and $x_2$
storing $0.01$ and $0.1$ (in binary), respectively,
and its transitions have the guard $x_1< \cur < x_2$ with 
the reassignment $\{x_1:= \cur\}$.
Straightforward guessing by adding one bit $1$ at the end of $x_1$
will result in the number of bits in $x_1$ increasing by one in each step.
A~similar thing can happen if
guessing a path in the affine program generated from
Lemma~\ref{lem:reduction-affine}: the numbers cannot be represented in
a~polynomial space because of the multiplication with $\frac{1}{3}$ and~$\frac{2}{3}$.

Furthermore, note that the algorithm is correct also for general (i.e., not only
copyless) $\raq$.
The restriction to copyless~$\raq$ allows us to guarantee that the space used
by the algorithm is polynomial.
If we applied the algorithm to general $\raq$, it would require an exponential space.

\subsection{Some remarks on the non-zero problem}

In this section, we have shown that the non-zero problem for $\raq$ is in $\exptime$,
and the problem itself is a generalization of the non-emptiness problem
for standard RA.
Our algorithm relies heavily on the fact that
the variables are partitioned into two groups:
control variables, which ``control'' the computation flow,
and data variables, which accumulate data about the input word.
Without control variables, an~$\raq$ is similar to an affine program,
thus the non-zero problem drops to $\ptime$.
Without data variables,
the non-zero problem becomes $\pspace$-complete,
as an~$\raq$ without data variables is a special case of a~copyless $\raq$
but still a generalization of a~standard~RA.

It is also important that $\raq$ have no access to data variables at all.
If we allow comparison between two data variables,
the non-zero problem becomes undecidable.
In fact, even if we allow $\raq$ to access only one bit of information from data variables,
say, the least significant bit of the integer part of a rational number,
$\raq$ can simulate Turing machines.
In particular,
the contents of a Turing machine (two-way infinite) tape
can be represented as a rational number.
For example, if the contents of the tape is $\sqcup^{\omega}$0\underline{0}1$\sqcup^{\omega}$
(with the underline indicating the position of the head and $\sqcup$ denoting a~blank space),
its representation by a rational number can be e.g.~$1010.11$,
where $0,1$, and $\sqcup$ are encoded by $10,11$, and $00$, respectively.
The head moving right and left can be simulated by multiplying the data variable
by 4 and $\frac{1}{4}$, respectively.


\section{The equivalence and commutativity problems}
\label{sec:three-problem}

In this section we study the equivalence and the commutativity problems.
The equivalence problem is defined as follows:
{\em Given two $\raq$ $\cA$ and $\cA'$, decide if $\cA(w)=\cA'(w)$ for all words $w$}.
On the other hand, the commutativity problem is defined as follows:
{\em Given an $\raq$ $\cA$, decide if for all words $w$ and $w'$ such that
$w'\in perm(w)$, it holds that $\cA(w)=\cA(w')$,
where $perm(w)$ is the set of all permutations of the word $w$.}

\begin{theorem}
For single-valued $\raq$, the commutativity and equivalence problems are both undecidable.\label{thm:raq-undecidable}
\end{theorem}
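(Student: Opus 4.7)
My plan is to derive both undecidabilities from the Post Correspondence Problem (PCP) via a uniform embedding of standard register automata (RA) into single-valued $\raq$: assign the constant output $1$ at every final state (and no output at non-final states). Then for RA $\cA_1,\cA_2$, $\cA_1(w)=\cA_2(w)$ for every $w$ iff $L(\cA_1)=L(\cA_2)$, and $\cA$ is commutative as an $\raq$ iff $L(\cA)$ is closed under letter permutations. Crucially, even for \emph{nondeterministic} RA this embedding produces a single-valued $\raq$, because all accepting runs produce the same output $1$, so $|\cA(w)|\le 1$ always.

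For equivalence, the claim is then immediate: the PCP reduction for RA language equivalence of~\cite{NSVianu04} transfers verbatim, after realising the finite PCP alphabet by distinct rationals stored in read-only control variables.

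For commutativity, I~would construct, from a PCP instance $P=((u_i,v_i))_{i=1}^n$, a single RA $\cA_P$ whose language is the complement (inside $\bbQ^*$) of the set of ``well-formed encodings'' of genuine PCP solutions. The well-formed encodings use fresh rational position markers arranged in a pattern checkable by equality tests only (since RA may only test equality of data values). If $P$ has no solution, then $L(\cA_P)=\bbQ^*$, which is trivially permutation-closed and $\cA_P$ is commutative. If $P$ has a solution, its encoding $w$ lies outside $L(\cA_P)$, but the encoding pattern is designed to be fragile under permutation: some rearrangement $\pi(w)$ destroys well-formedness and thus lands in $L(\cA_P)$, witnessing non-commutativity. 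Hence $\cA_P$ is commutative iff $P$ has no solution.

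The main obstacle is designing the encoding to meet three competing requirements simultaneously: (a)~well-formedness must be recognisable by equality patterns alone; (b)~among well-formed words, those that are \emph{not} PCP solutions must still be accepted by an RA (this is essentially the standard NSVianu construction, with fresh data values used as position markers to locate a letter-level mismatch between the two halves); and (c)~every solution encoding admits a permutation that is not well-formed. Arranging the position markers in a matched, bracket-like pattern---which a bounded-register RA can verify by equality tests on registers---suffices for (a) and (c), since any transposition of two non-adjacent marker blocks breaks the matching. Composing this with the standard PCP encoding for (b) yields an RA $\cA_P$ of polynomial size with the desired property, completing the reduction.
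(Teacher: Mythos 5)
Your proposal is correct and follows essentially the same route as the paper: embed nondeterministic RA into single-valued $\raq$ by outputting the constant~$1$ at final states, inherit undecidability of equivalence from the PCP-based universality reduction of~\cite{NSVianu04}, and observe that the RA accepting all non-solution encodings is commutative iff the PCP instance has no solution (a solution encoding is rejected while some permutation of it is accepted). The only difference is that you spell out more carefully why a solution encoding always admits a permutation that falls outside the solution set, a point the paper asserts without elaboration.
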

Theorem~\ref{thm:raq-undecidable} can be proved by a~reduction similar to the one used
in~\cite{NSVianu04} for proving undecidability of the equivalence problem for
standard RA. 
For deterministic $\raq$, however,
both problems become inter-reducible (via a~Cook reduction) with the non-zero problem,
as stated below.

\begin{theorem}
\label{thm:detraq}
For deterministic $\raq$, 
the equivalence problem,
the commutativity problem,
the non-zero problem, and
the invariant problem are all inter-reducible
in polynomial time. 
\end{theorem}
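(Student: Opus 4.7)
The plan is to exploit the polynomial-time inter-reducibility between the non-zero and invariant problems already established in Section \ref{sec:non-zero}, and to add polynomial-time reductions (i) between non-zero and equivalence, and (ii) between non-zero and commutativity. To reduce non-zero to equivalence, I take a deterministic $\raq$ $\cA$ and build $\cA'$ identical to $\cA$ but with the output function in every final state replaced by the constant~$0$; since $\cA$ and $\cA'$ accept exactly the same words, $\cA(w)=\cA'(w)$ for every $w$ iff $\cA(w)\subseteq\{0\}$ for every $w$, i.e., iff the non-zero answer is no. For the converse, given two deterministic $\raq$ $\cA_1,\cA_2$, I first complete each by adding a non-final absorbing state and then form the product deterministic $\raq$ $\cB$ over $Q_1\times Q_2$ with the disjoint union of control and data variables; each $\cB$-transition pairs compatible transitions of $\cA_1$ and $\cA_2$, conjoining their guards and concatenating their reassignments. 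The output at $(q_1,q_2)$ is $\zeta_1(q_1)-\zeta_2(q_2)$ when both components are final, the constant~$1$ when exactly one is final, and $(q_1,q_2)$ is otherwise non-accepting; then $\cB$ outputs a non-zero value on some word iff $\cA_1$ and $\cA_2$ disagree on some word.

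For the commutativity side, the reduction from non-zero is by a deterministic $\cB$ that reads the first two characters $d_1,d_2$: if $d_1 < d_2$, $\cB$ simulates $\cA$ on the suffix $d_3\cdots d_n$ and inherits its output; otherwise $\cB$ transitions to an accepting sink with constant output~$0$. If $\cA$ is identically zero then so is $\cB$, hence $\cB$ is trivially commutative; if $\cA(w^*)\neq 0$ for some $w^*$, then $\cB(0 \cdot 1 \cdot w^*)=\cA(w^*)\neq 0$ while the adjacent transposition gives $\cB(1 \cdot 0 \cdot w^*)=0$, witnessing non-commutativity. For the converse, I exploit the fact that adjacent transpositions generate the symmetric group, so $\cA$ is commutative iff $\cA(uabv)=\cA(ubav)$ for all $u,a,b,v$, and build an $\raq$ $\cB$ running two copies of $\cA$ in parallel. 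In a pre-swap phase both copies step in lock-step; at a non-deterministically chosen position $\cB$ enters a swap phase where copy~$1$ processes the current input normally while copy~$2$ buffers it into a fresh control variable~$z$. At the next input, copy~$1$ again steps normally while copy~$2$ executes two $\cA$-transitions inside one $\cB$-transition: first on the new input, then on the buffered $z$. Afterwards both copies resume lock-step, and the output at a joint final state is $\zeta_1-\zeta_2$; thus $\cB$ outputs a non-zero value iff $\cA$ is non-commutative.

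The main technical obstacle is encoding copy~$2$'s two successive $\cA$-steps as a single $\cB$-transition while staying inside the $\raq$ format. Given consecutive $\cA$-transitions $(p,\varphi_1(\vx,\cur))\to(q,A_1,B_1,\vb_1)$ and $(q,\varphi_2(\vx,\cur))\to(r,A_2,B_2,\vb_2)$, the combined guard becomes $\varphi_1(\vx_2,\cur)\wedge\varphi_2(A_1\myvec{\vx_2 \\ \cur},z)$, which after substitution remains a Boolean combination of inequalities over $\vx_2\cup\{\cur,z\}$ and is hence a valid linear constraint; the composed variable reassignment remains affine since affine maps are closed under composition, and the copyless property is also preserved because the $0$-$1$ matrix $A_2A_1$ still has at most one $1$ per column, which is crucial for lifting the bound to copyless instances. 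Enumerating over all compatible pairs $(t_1,t_2)$ of $\cA$-transitions produces only a polynomial number of composed $\cB$-transitions, so all the reductions above are polynomial.
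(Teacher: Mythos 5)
Most of your reductions line up with the paper's: the product construction for equivalence~$\to$~non-zero is the same, your non-zero~$\to$~equivalence via zeroing the outputs is a harmless variant of comparing against a trivial always-zero automaton, and your non-zero~$\to$~commutativity trick (guarding on $d_1<d_2$ rather than on a fixed prefix $12$) is a legitimate cosmetic variation. Two smaller issues first: the non-zero~$\leftrightarrow$~invariant reductions of Section~\ref{sec:non-zero} that you invoke wholesale cannot be reused as-is, because they add $\mathit{true}$-guarded transitions out of several states and therefore produce \emph{nondeterministic} $\raq$; the paper replaces them by Cook reductions (one invariant query per final state, resp.\ one non-zero query per basis vector of $\bbV^\perp$), and you need to say this. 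Also, your composed-guard argument conflates the control matrix $A$ with the data matrix $B$ when arguing copylessness is preserved, though the claim itself is correct.

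The genuine gap is in commutativity~$\to$~non-zero. Your automaton $\cB$ enters the swap phase ``at a non-deterministically chosen position,'' so $\cB$ is a nondeterministic $\raq$. The theorem asserts inter-reducibility of the four problems \emph{restricted to deterministic} $\raq$; a reduction that maps a deterministic commutativity instance to a nondeterministic non-zero instance does not establish that, and it also breaks the intended use of the theorem (e.g.\ deriving Corollary~\ref{theo:detra} for deterministic RA, where one wants to stay inside the deterministic class). This is exactly the obstacle the paper's proof is built to circumvent: it decomposes permutations into the two \emph{fixed} generators $\pi_1$ (swap the first two symbols) and $\pi_2$ (move the first symbol to the end), reduces commutativity to the two equivalence checks $\cA\equiv\cA\circ\pi_1$ and $\cA\equiv\cA\circ\pi_2$, and then constructs $\cA\circ\pi_2$ deterministically by buffering the first input in a register $x_t$ and, at every step, pre-computing in a primed copy of the variables what the output would be if the word ended here and $x_t$ were appended --- thereby replacing the ``guess that this is the last transition'' by a deterministic bookkeeping device with disjoint guards. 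Your adjacent-transposition scheme does not obviously admit such a determinization, since the swap can occur at any of linearly many positions and the two copies may diverge arbitrarily afterwards; you would either have to supply a determinization argument or switch to the generator pair $\pi_1,\pi_2$ as the paper does.
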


In the following paragraph, we present the main ideas of the proofs.

\subsubsection*{From non-zero to invariant and vice versa}

Note that the Karp reductions from Section~\ref{sec:non-zero} cannot be used,
because they construct nondeterministic~$\raq$.
Instead, we modify them into Cook reductions such that for every final state of
the $\raq$ in the non-zero problem, we create one invariant test.
On the other hand, for the invariant problem, suppose that $\bbH = \va + \bbV$ and
let us take a basis $\{\vv_1,\ldots, \vv_m\}$ for $\bbV^\bot$ (the orthogonal
complement of $\bbV$).
Then for each $\vv_i$ in the basis, we create a~new
$\raq$ with a single final
state with a corresponding output function.
Notice that the reductions preserve the (deterministic) structure of the $\raq$.

\subsubsection*{From equivalence to non-zero}
The proof is by a~standard product construction.
Given two deterministic $\raq$ $\cA_1$ and~$\cA_2$ (w.l.o.g. we assume they are
both complete), we can construct in a~polynomial time a deterministic~$\raq$ $\cA$
such that $\cA_1$ and $\cA_2$ are equivalent iff $\cA(w)=0$ for all $w$. 
The states of $\cA$ are of the form $(q_1,q_2)$, where $q_1$ is a state from $\cA_1$ and
$q_2$ from $\cA_2$.
A~state~$(q_1,q_2)$ is final iff at least one of $q_1$ and $q_2$ is final, and the
output function
is defined as either $(i)$ the difference of the outputs of $q_1$ and $q_2$ if both
$q_1$ and $q_2$ are final, or
$(ii)$ the constant $1$ if exactly one of them is final.

\subsubsection*{From non-zero to commutativity}
Let $\cA$ be a deterministic $\raq$.
We assume w.l.o.g. that
for all $w$, $|\cA(w)| =1 $, i.e., $\cA$ outputs a~value on all inputs~$w$.
We construct a deterministic $\raq$ $\cA'$
with outputs defined as follows:
\begin{itemize}\itemsep=0pt
\item
For words where the first and the second values are $1$ and $2$ respectively, i.e., words
of the form $v = 12w$, we define $\cA'(v) = \cA(w)$
\item
For all other words, $\cA'$ outputs $0$.
\end{itemize}
The construction of $\cA'$ takes only linear time
by adding two new states that check the first two values and
a~new final state that outputs the constant $0$ for words not in the form of $12w$.
If~there is a~word~$w$ such that $\cA(w)\neq 0$, then $\cA'(12w) \neq 0$, so
$\cA'$ is not commutative (because $\cA'(21w) = 0$).
On the other hand, if $\cA'$ is not commutative, it means that there is an
input for which the output is not 0.

\subsubsection*{From commutativity to equivalence}
The idea of the proof is similar to the one used in~\cite{ChenHSW15} to prove decidability of
the commutativity problem of two-way finite automata.
A~similar idea was also used in~\cite{ChenDaBeast16} for the same problem
over \emph{symbolic numerical transducers}, which are a strict subclass of~$\raq$.

We define two permutation functions $\pi_{1}$ and $\pi_{2}$ on words as follows:
let
$\pi_{1}(d_1d_2\cdots d_n) = d_2d_1\cdots d_n$ (swap the first two symbols)
and
$\pi_{2}(d_1d_2\cdots d_n) = d_2\cdots d_nd_1$ (move the first symbol to the end of the input word).
It is known that every permutation is a composition of $\pi_1$ and $\pi_2$~\cite{Hungerford03}.

Given a deterministic $\raq$~$\cA$, it holds that
$\cA$ is commutative iff
the following equations hold for every word $w$:
\begin{equation*}
  \cA(w) \ = \ \cA(\pi_1(w)) \ = \ \cA(\pi_2(w))
\end{equation*}
As a~consequence, we can reduce
the commutativity problem to the equivalence problem
by constructing deterministic $\raq$ $\cA_1$ and $\cA_2$ such that for every word $w$,
$\cA_1(w)=\cA(\pi_1(w))$ and $\cA_2(w)=\cA(\pi_2(w))$.

While the construction of $\cA_1$ is straightforward, the construction of
$\cA_2$ is more involved.
The standard way to construct $\cA_2$ involves nondeterminism
to ``guess'' that the next transition is the last one.
However, here we require $\cA_2$ to be deterministic.
Our trick is to use a~new set of variables to simulate the process of guessing
in a~deterministic way.

\begin{corollary}
\label{theo:comm-exptime}
The commutativity and equivalence problems for deterministic $\raq$ are in
$\exptime$.
They become $\pspace$-complete for deterministic copyless $\raq$.
\end{corollary}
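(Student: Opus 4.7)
The plan is to reduce everything to the non-zero problem via Theorem~\ref{thm:detraq} and then invoke the complexity bounds already established in this section: Theorem~\ref{theo:non-zero-exptime} for general deterministic $\raq$ and Theorem~\ref{theo:copyless} for the copyless case.

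For the $\exptime$ upper bound on the commutativity and equivalence problems for deterministic $\raq$, Theorem~\ref{thm:detraq} already supplies polynomial-time reductions of both problems to the non-zero problem for deterministic $\raq$. Composing these reductions with the $\exptime$ algorithm of Theorem~\ref{theo:non-zero-exptime} immediately yields membership in $\exptime$.

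For the $\pspace$ upper bound in the copyless case, the central task is to verify that each of the polynomial-time reductions sketched in the proof of Theorem~\ref{thm:detraq} preserves copylessness. Concretely, I would examine: the product construction for equivalence-to-non-zero, which keeps the data variables of $\cA_1$ and $\cA_2$ disjoint so that each still appears at most once on any right-hand side; the prefix-check construction for non-zero-to-commutativity, which only introduces new states and either identity or fresh-target assignments; and the commutativity-to-equivalence construction of $\cA_1$ and $\cA_2$ that simulate $\cA$ on $\pi_1(w)$ and $\pi_2(w)$. Once these reductions are shown to be copylessness-preserving, Theorem~\ref{theo:copyless} directly yields the $\pspace$ upper bound. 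For matching $\pspace$-hardness, I would observe that standard deterministic RA, with the output at every final state fixed to the constant~$1$, form a subclass of deterministic copyless $\raq$; their non-emptiness is $\pspace$-hard by~\cite{DemriL09}, and the reductions of Theorem~\ref{thm:detraq} from the non-zero problem to equivalence and to commutativity then transfer this hardness.

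The main obstacle I anticipate is the deterministic construction of $\cA_2$ in the commutativity-to-equivalence reduction. Reading the first input symbol at the end of the word naturally invites a nondeterministic ``guess the last position''; avoiding nondeterminism forces a shadow computation that maintains, in fresh data variables, the state that would result if the current symbol were the last. I would need to check carefully that this shadow bookkeeping never duplicates a data variable on a single right-hand side, since any such duplication would break copylessness and invalidate the appeal to Theorem~\ref{theo:copyless}. Provided the shadow variables are introduced in a one-for-one fashion and combined only through the original copyless assignments of~$\cA$, the property is preserved and the corollary follows.
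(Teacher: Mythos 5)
Your overall route is the same as the paper's: compose the polynomial-time inter-reductions of Theorem~\ref{thm:detraq} with the $\exptime$ algorithm of Theorem~\ref{theo:non-zero-exptime} for the general case and the $\pspace$ algorithm of Theorem~\ref{theo:copyless} for the copyless case, and obtain $\pspace$-hardness by viewing deterministic RA (with constant output~$1$) as deterministic copyless $\raq$ and transferring the hardness of their non-emptiness problem. You are also right to single out the deterministic construction of $\cA_2$ in the commutativity-to-equivalence step as the delicate point. The trouble is that the condition you rely on there --- that the shadow variables are ``combined only through the original copyless assignments'' so that no data variable is duplicated --- fails for the natural construction: in a transition that summarizes two consecutive steps of $\cA$, the old value of a data variable $y_i$ feeds both the update of $\vy$ (via $B_1$) and the update of the shadow copy $\vy'$ (via $B_2B_1$), so $y_i$ occurs on two right-hand sides and the resulting automaton is not literally copyless. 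Since the commutativity upper bound in the copyless case goes through commutativity $\to$ equivalence $\to$ non-zero, your appeal to Theorem~\ref{theo:copyless} is not yet justified at that point. (The equivalence upper bound is unaffected: the product construction keeps the two variable sets disjoint and is genuinely copylessness-preserving, as is the $12w$ prefix trick for the hardness transfer.)

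The gap is repairable, but it needs an argument you have not supplied: the primed variables are overwritten on every transition and their old values never appear on any right-hand side, so the total number of ``copies'' of an input value held in $Y\cup Y'$ is at most twice the number held in $Y$, and the latter evolves copylessly. Hence the coefficient-tracking algorithm behind Theorem~\ref{theo:copyless} still uses only polynomially many bits even though the syntactic copyless condition is violated. Either make this observation explicit, or replace the reduction by one that genuinely preserves copylessness; as written, the final step of your plan rests on a premise that does not hold.
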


All upper bounds follow from the results in Section~\ref{sec:non-zero}.
The $\pspace$-hardness can be obtained using a~reduction similar to the one
in~\cite{DemriL09}.
Moreover,
we can use the ideas in the proof of Theorem~\ref{thm:detraq} also for standard RA to obtain the following corollary.

\begin{corollary}
\label{theo:detra}
The commutativity problem for deterministic
RA is $\pspace$-complete.
\end{corollary}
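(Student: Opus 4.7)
The plan is to derive both bounds by leveraging results already established for the more general $\raq$ model and by specializing the reduction from Theorem~\ref{thm:detraq}. For the upper bound, observe that, as discussed in Section~\ref{sec:snt}, every standard RA can be viewed as a deterministic copyless $\raq$ whose output function is the constant~$1$ at final states. Hence the commutativity problem for deterministic RA is a~particular instance of the commutativity problem for deterministic copyless $\raq$, and Corollary~\ref{theo:comm-exptime} immediately gives membership in $\pspace$.

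For the lower bound, I would reduce from the non-emptiness problem for deterministic RA, which is $\pspace$-complete by~\cite{DemriL09}. Mimicking the ``non-zero to commutativity'' step in the proof of Theorem~\ref{thm:detraq}, given a deterministic RA $\cA$ I construct in polynomial time a deterministic RA $\cA'$ accepting exactly $\{\,1\,2\,w \mid w \in L(\cA)\,\}$. Concretely, $\cA'$ uses two fresh read-only control variables $x_1^{\ast}, x_2^{\ast}$ whose initial contents are $1$ and~$2$ (permitted by the remark following Definition~\ref{def:snt}), and two fresh prefix states that consult the guards $\cur = x_1^{\ast}$ and $\cur = x_2^{\ast}$ on the first and second input values, routing any deviation to a non-accepting sink; after these two checks, $\cA'$ proceeds as $\cA$. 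Since the only new guards are equality tests between control variables and~$\cur$, $\cA'$ is a~bona fide deterministic RA.

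The correctness of the reduction is then a~one-line argument: if $w \in L(\cA)$, then $1\,2\,w \in L(\cA')$ while the permutation $2\,1\,w$ is rejected (the sink is reached after reading the first symbol), so $\cA'$ is not commutative; conversely, if $L(\cA) = \emptyset$ then $L(\cA') = \emptyset$ and $\cA'$ is vacuously commutative. Thus $L(\cA) \neq \emptyset$ iff $\cA'$ is not commutative, which together with the upper bound yields $\pspace$-completeness. No real obstacle is expected here; the only minor technicality is justifying that the constants $1$ and $2$ can be used in the guards of a standard RA, and this is handled by the read-only control variables $x_1^{\ast}, x_2^{\ast}$ loaded in the initial assignment, as already noted in the remark after Definition~\ref{def:snt}.
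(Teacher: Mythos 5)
Your proof is correct and follows essentially the same approach as the paper: the upper bound is the same specialization of Corollary~\ref{theo:comm-exptime} to deterministic copyless $\raq$, and the lower bound rests on the same asymmetric-prefix idea as the paper's Reduction~3 in the appendix. The only cosmetic difference is that the paper inlines the trick into the $\pspace$-hardness construction of~\cite{DemriL09} (forcing the first two values to be $0$ and the rest non-zero), whereas you reduce black-box from non-emptiness of deterministic RA via the $1\,2\,w$ gadget of Theorem~\ref{thm:detraq}; both arguments are sound.
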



\section{The reachability problem}
\label{sec:other}

The reachability problem is defined as follows:
{\em Given an $\raq$ $\cA$, decide if there is a~word $w$ such that $0 \in \cA(w)$.}
The reachability problem is tightly related to the \emph{may-constancy} problem
for programs \cite{MR00}, which asks, for a~given program location $\ell$,
a~given variable $z$, and a~given constant $c$,  whether the value of $z$ in
$\ell$ may be equal to $c$, that is, there is an execution path leading to
$\ell$ such that the value of $z$ in $\ell$ is~$c$.

\begin{theorem}\label{thm-reach-und}
The reachability problem for $\raq$ is undecidable, even for deterministic~$\raq$.
\end{theorem}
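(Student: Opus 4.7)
The plan is to reduce from the \emph{Post Correspondence Problem} (PCP), which is well-known to be undecidable. Given a PCP instance $\mathcal{I} = \{(u_1,v_1),\ldots,(u_n,v_n)\}$ over a finite alphabet $\Sigma$, I would construct in polynomial time a deterministic $\raq$ $\cA_{\mathcal{I}}$ such that $0\in\cA_{\mathcal{I}}(w)$ for some input $w$ if and only if $\mathcal{I}$ admits a solution. The key observation is that strings over $\Sigma$ can be encoded as rational numbers so that concatenation becomes an affine operation on the codes, and such affine updates are exactly what an $\raq$ performs on its data variables.

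Concretely, fix a base $B > |\Sigma|$ and an injection $\iota:\Sigma\to\{1,\ldots,B-1\}$, and let $\mathrm{val}(s_1\cdots s_m) := \sum_{j=1}^m \iota(s_j)\cdot B^{m-j}$. Then $\mathrm{val}(s\,s') = B^{|s'|}\cdot\mathrm{val}(s) + \mathrm{val}(s')$ and $\mathrm{val}$ is injective on $\Sigma^*$. The $\raq$ $\cA_{\mathcal{I}}$ has read-only control variables $c_0,c_1,\ldots,c_n$ initialized to $0,1,\ldots,n$, and two data variables $y_u,y_v$ initialized to $0$. Its states are the initial state $q_0$, an intermediate state $q_1$, a unique final state $q_f$ with output $\zeta(q_f) = y_u - y_v$, and a non-final sink $q_d$. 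From both $q_0$ and $q_1$, for each $i\in\{1,\ldots,n\}$ there is a transition with guard $\cur = c_i$ (expressible as $(\cur\leq c_i)\wedge(c_i\leq \cur)$) leading to $q_1$ and performing $y_u := B^{|u_i|}\,y_u + \mathrm{val}(u_i)$, $y_v := B^{|v_i|}\,y_v + \mathrm{val}(v_i)$. From $q_1$, the guard $\cur = c_0$ leads to $q_f$. All remaining inputs at any state are routed to $q_d$ by the ``otherwise'' guard, which is the conjunction of the negations of the above equality guards. Since the guards of outgoing transitions are pairwise disjoint and together exhaustive, $\cA_{\mathcal{I}}$ is deterministic and complete.

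A straightforward induction on the number of selected pairs shows that after reading $c_{i_1}\cdots c_{i_k}$ from the initial configuration, $\cA_{\mathcal{I}}$ is in $q_1$ with $y_u = \mathrm{val}(u_{i_1}\cdots u_{i_k})$ and $y_v = \mathrm{val}(v_{i_1}\cdots v_{i_k})$. Accepting runs are therefore exactly those on inputs of the form $c_{i_1}\cdots c_{i_k}\,c_0$ with $k\geq 1$, and the output on such a run is $\mathrm{val}(u_{i_1}\cdots u_{i_k}) - \mathrm{val}(v_{i_1}\cdots v_{i_k})$, which vanishes iff $u_{i_1}\cdots u_{i_k} = v_{i_1}\cdots v_{i_k}$, i.e., iff $(i_1,\ldots,i_k)$ is a solution of $\mathcal{I}$. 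Hence $0\in\cA_{\mathcal{I}}(w)$ for some $w$ iff $\mathcal{I}$ has a solution, and since the reduction is clearly polynomial-time computable (all constants $B^{|u_i|}$, $B^{|v_i|}$, $\mathrm{val}(u_i)$, $\mathrm{val}(v_i)$ have polynomial bit size), the theorem follows.

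The main subtlety I anticipate is guaranteeing that $0$ can be produced \emph{only} by a genuine PCP solution. In particular, the empty selection $k=0$ must be excluded, since otherwise the initial values $y_u = y_v = 0$ would yield a spurious zero output; this is why every accepting run is forced to pass through $q_1$ before reaching $q_f$. Similarly, any input not matching one of $c_0,\ldots,c_n$ must produce no output at all, which is achieved by diverting to the non-final sink $q_d$ and leaving $q_f$ with no outgoing transitions. Once these routing details are in place, the rest is a direct translation of the string-equality condition of PCP into the linear-arithmetic output function of a deterministic $\raq$.
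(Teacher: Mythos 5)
Your proposal is correct and follows essentially the same route as the paper's proof: a reduction from PCP that encodes strings as base-$B$ numerals in two data variables updated affinely per selected pair, with output $y_u - y_v$ vanishing exactly on solutions. The only differences (explicit read-only control variables for the index constants, an end-marker transition to a separate final state, and a sink for completeness) are cosmetic packaging of the same argument.
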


The proof of Theorem~\ref{thm-reach-und} is obtained by a reduction from
PCP~\cite{Post46}. 
On the other hand, we show that for copyless $\raq$ with non-strict guards, the reachability problem is decidable.

Let $X$ be a set of control variables. A transition guard $\varphi(\vx, \cur)$ is \emph{non-strict}
if it does not contain negations, i.e., it is a positive Boolean combination of inequalities 
$z \leq z'$ for $z, z' \in X \cup \{\cur\}$. 
An $\raq$ $\cA$  is said to have \emph{non-strict transition guards} if the guard in each transition of $\cA$ 
is non-strict.

\begin{theorem}\label{thm-reach-dec}
The reachability problem for (nondeterministic) copyless $\raq$ with non-strict transition guards is in \nexptime.
\end{theorem}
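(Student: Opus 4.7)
The plan is to follow the hint in the paragraph preceding the theorem and reduce reachability for copyless $\raq$ with non-strict transition guards to the configuration coverability problem in a rational VASS ($\bbQ$-VASS), which is known to be in NP (Blondin et al.). Since the reduction will incur a single exponential blow-up in the state space, applying the NP algorithm to the exponentially larger $\bbQ$-VASS gives the desired NEXPTIME bound.

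The first step is to absorb the ordering of control variables into the finite control, in the same style as the proof of Theorem~\ref{theo:non-zero-exptime}. Because the guards are non-strict positive Boolean combinations of $\leq$, the truth of any guard at a configuration is determined by the total preorder $\phi$ over $X\cup\{\cur\}$, so it is sound to replace each state $q$ by a pair $(q,\phi)$. The $2^k(k+1)!$ such orderings are the sole source of the exponential blow-up in the reduction. A non-strict guard $z\leq z'$ then translates naturally into a non-negativity test on an auxiliary counter that stores $z'-z$, which is exactly the kind of guard that a $\bbQ$-VASS supports.

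The second and main step is to design the $\bbQ$-VASS counters so that copyless data-variable updates become fixed rational vector additions. The difficulty is that an update $\vy := A\vy + f(\vx,\cur)$ adds a live linear combination $f(\vx,\cur)$ of control values, which is not a~priori a constant. I would handle this by maintaining, for each ordering slot and each data variable, a separate counter that accumulates the contribution of whichever control variable currently sits in that slot; whenever a transition modifies the ordering (in one of finitely many ways), these slot-indexed counters are shuffled among themselves by a fixed rational transfer. Both hypotheses of the theorem are essential at this point: non-strictness lets ordering changes be modelled as pure slot relabelings with no loss of configurations, and copylessness prevents the duplications that would otherwise require multiplicative behaviour outside the scope of $\bbQ$-VASS. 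Finally, the condition ``output $=0$'' at a final state~$q$ with output function $g(\vx,\vy)=\va\cdot\vx+\vb\cdot\vy+c$ is captured by auxiliary counters that compute the positive and negative parts of $g$ separately and a coverability target asking them to coincide. Plugging the resulting exponentially large $\bbQ$-VASS into the NP coverability procedure yields NEXPTIME, and the main obstacle throughout the proof is the faithful encoding of live, non-constant additions by the fixed rational vector updates available in a $\bbQ$-VASS, which is precisely what copylessness and non-strictness together make possible.
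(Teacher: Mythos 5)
There is a genuine gap at the heart of your reduction: you never explain how the \emph{actual rational values} of the control variables and of $\cur$ are eliminated, and the ``slot-indexed counters shuffled by fixed rational transfers'' do not accomplish this. A copyless update $\vy := A\vy + f(\vx,\cur)$ adds a linear combination of \emph{unknown} input values constrained only by order relations, whereas a $\bbQ$-VASS transition can only add a \emph{fixed} rational vector. Tracking which control variable sits in which ordering slot lets you accumulate the \emph{coefficient} with which each slot contributes to the output, but the output is that (unboundedly growing) coefficient times an unknown value ranging over an interval, and such a product cannot be realized by fixed additions. The paper closes exactly this gap with an argument you are missing: along a fixed transition sequence the guard constraints $\Phi(\vz)$ define a convex polyhedron (this is where non-strictness and the absence of disjunctions matter), so the set of achievable outputs of the linear function $f$ is an interval and $0$ is achievable iff the infimum of $f$ is $\leq 0$ and the supremum is $\geq 0$; by linear programming these extrema are attained at corner points whose coordinates lie in the finite set $\cN \cup \{-\infty,+\infty\}$ of initial constants. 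Only after this discretization does each control variable carry a symbolic value $\eta(x_i) \in \cN \cup \{\pm\infty\}$, so that $f(\eta(\vx),\eta(\cur))$ is a genuine constant and the update becomes a fixed vector addition in a $2l$-dimensional $\bbQ$-VASS that tracks two candidate corner points simultaneously. Without the inf/sup reformulation and the corner-point argument, your construction has no mechanism for turning live additions into VASS increments.

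A second, smaller error: you assert that a non-strict guard $z \leq z'$ ``translates into a non-negativity test on an auxiliary counter\dots which is exactly the kind of guard that a $\bbQ$-VASS supports.'' The $\bbQ$-VASS model for which reachability is in $\nptime$ has \emph{no} tests on counters at all --- that is precisely why its reachability reduces to satisfiability of an existential Presburger formula via Parikh images; allowing non-negativity tests at every step would yield a far stronger (counter-machine-like) model. In the correct construction the guards are discharged entirely in the finite control, via the orderings and the guessed specifications $\eta$, not by counter tests.
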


The rest of this section is devoted to the proof of Theorem~\ref{thm-reach-dec}.
Suppose $\cA=\langle Q,q_0,F,\vu_0,\delta,\zeta\rangle$
is a copyless $\raq$ with non-strict transition guards over $(X,Y)$, where $X=\{x_1,\ldots,x_k\}$ and $Y=\{y_1,\ldots,y_l\}$. Let
$\cN$ be the set of constants appearing in $\vu_0\ssX$.
For simplicity, we assume that all control variables initially
contain different values.

Suppose there is a word $w=d_1\cdots d_n$ that leads to a~zero output.
Let 
$(q_0,\vu_0)\vdash_{t_1,d_1}
(q_1,\vu_1)\vdash_{t_2,d_2}\cdots \vdash_{t_n,d_n} (q_n,\vu_n)$
be the run of $\cA$ on $w$.
By Proposition~\ref{prop:linear-raq}, there are $M$ and $\vb$ such that
\begin{eqnarray*}
\vu_n & = & M \myvec {d_1\\ \vdots \\ d_n} + \vb.
\end{eqnarray*}
The values $d_1,\ldots,d_n$ satisfy a~set of inequalities
imposed by the transitions $t_1,\ldots,t_n$. Let $\Phi(\vz)$ denote the conjunction of those inequalities, where $\vz = (z_1,\ldots, z_n)^t$ are variables representing the data values $d_1,\ldots,d_n$. 
For simplicity, we assume that the guards in $t_1,\dots, t_n$ contain \emph{no disjunctions}, which means that the set of points (vectors) satisfying $\Phi(\vz)$ is a~convex polyhedron.

Suppose the output function of $q_n$ is $\zeta(q_n) = \va\cdot \myvec{\vx\\ \vy} + a'$.
We define the following function:
\begin{eqnarray*}
f(\vz) & = & \va \cdot M \myvec {z_1\\ \vdots\\ z_n} +\va\cdot \vb +a'.
\end{eqnarray*}
Thus, by our assumption that $d_1\cdots d_n$ leads to zero,
we have:
\begin{eqnarray*}
f((d_1,\ldots,d_n)^t) = 0& \wedge & \Phi((d_1,\ldots,d_n)^t)=\ltrue,
\end{eqnarray*}
which is equivalent to: 
\begin{eqnarray}\label{eq:reach_bound}
	\exists\vz_1,\vz_2 \in \bbQ^{n}: f(\vz_1)  \leq   0  \leq  f(\vz_2) \wedge  \Phi(\vz_1)\wedge\Phi(\vz_2).
\end{eqnarray}
Observe that (\ref{eq:reach_bound}) holds iff the following two constraints hold simultaneously:
\begin{description}\itemsep=0pt
\item[{\bf [F1]}]
the infimum of $f(\vz)$ w.r.t. $\Phi(\vz)$ is $\leq 0$,
\item[{\bf [F2]}]
the supremum of $f(\vz)$ w.r.t. $\Phi(\vz)$ is $\geq 0$.
\end{description}
From the Simplex algorithm for linear programming~\cite{chvatal},
we know that the points that yield the optimum, i.e., the infimum and the supremum,
are at the ``corner'' points of convex polyhedra.
The constraints in $\Phi(\vz)$ only contain the constants from $\cN$ (as a result of the fact that the initial contents of control variables are a~fixed vector of constants),
so the corner points of the convex polyhedron of $\Phi(\vz)$ only take values from the set $\cN\cup\{-\infty,+\infty\}$.

To establish constraints {\bf F1} and {\bf F2}, it is sufficient to find 
two corner points $\vz_1$ and $\vz_2$ such that
$f(\vz_1)\leq 0 \leq f(\vz_2)$.
To find these two points, we will construct
a~corresponding \mbox{$\bbQ$-VASS} (rational vector addition systems with states), where the configuration reachability can be decided in $\nptime$.

In the following, we shows how to construct the $\bbQ$-VASS
from  $\cA$.
For simplicity of presentation, we make the following assumptions:
\begin{itemize}
\item $\cA$ is {\em order-preserving} on $X$.
That is, at all times the contents of control variables must satisfy the constraint

\smallskip
\hspace{18mm}$
x_1 \ \le \ x_2 \ \le \ \cdots \ \le \ x_k$.
\smallskip

\item The reassignments of data variables
are of the form $y_j := y_j + f(\vx,\cur)$ for each $y_j \in Y$.
\end{itemize}
The construction can be generalized to arbitrary copyless $\raq$ with non-strict guards without the two assumptions.

Moreover, we can ``split'' each transition of $\cA$ into several ones by pinpointing the place of $\cur$ w.r.t. the linear order $x_1 \le x_2 \le \cdots \le x_k$, so that the guard in each transition is of the form: $\cur= x_i$,
$\cur \le x_1$, $x_i \le \cur \le x_{i+1}$, or $x_k \le \cur$.

Let $\vu_0\ssX=(c_1,\dots,c_k)^t$. Then $\cN= \{c_1,\ldots,c_k\}$ and $c_1 < \dots < c_k$.
Let $\cN_{\infty}= \{-\infty, +\infty\}\cup \cN$.
A {\em specification} is a mapping $\eta$ from $X$ to $\cN_{\infty}$
that respects the ordering of $\cN_{\infty}$,
i.e., for $i \le j$, $\eta(x_i) \le \eta(x_j)$.
Intuitively, $\eta$ encodes the value of $x_i$ in a corner point.
We have $\eta(x_i)=c_j$ when $x_i$ is either assigned to $c_j$ or to a~value arbitrarily close to $c_j$.

We will construct a $2l$-dimensional $\bbQ$-VASS $(S,\Delta)$
with variables $\vy_1 = (y_{1,1},\ldots,y_{1,l})$ 
and $\vy_2 = (y_{2,1},\ldots,y_{2,l})$ as follows.
The set of states $S$ of the $\bbQ$-VASS is 
$Q\times \{(\eta_1,\eta_2) \mid \eta_1,\eta_2 \ \textrm{are specifications}\}$.
A \emph{configuration} is of the form $((q,\eta_1,\eta_2),\vy_1,\vy_2)$, where
$(\eta_1,\vy_1)$ and $(\eta_2,\vy_2)$ summarize the information of the
components of the two corner points that have been acquired so far (in other
words, the input data values that have been read by the~$\raq$ so far). 
The details of the transition relation $\Delta$ can be found in the appendix.

Consider the initial configuration 
$((q_0,\eta,\eta),\vu_0\ssY,\vu_0\ssY)$,
where $\eta(x_i)=\vu_0(x_i)$ for each $x_i \in X$.
It holds that there is $w$ such that $0 \in \cA(w)$
iff there is a configuration
$((q',\eta_1,\eta_2),\vv_1,\vv_2)$
reachable from the initial configuration
such that $q' \in F$ and one of the following holds.

\medskip
\hspace{1cm} $
\zeta(q')(\eta_1(\vx),\vv_1) \ \leq \ 0
\ \leq \
\zeta(q')(\eta_2(\vx),\vv_2)
$
\medskip

or

\smallskip
\hspace{12mm}$
\zeta(q')(\eta_2(\vx),\vv_2) \ \leq \ 0
\ \leq \
\zeta(q')(\eta_1(\vx),\vv_1).
$
\smallskip

The existence of such a~configuration
can be encoded as configuration reachability in the constructed $\bbQ$-VASS,
which, in turn, can be reduced to satisfiability of 
an existential Presburger formula.


\section{Related Work}\label{sec:related}

The literature provides many different formal models with registers or arithmetics.
Here we just mention those that are closely related to $\raq$.
One of the most general models with registers and arithmetics are \emph{counter automata}~\cite{Minsky61} (over finite alphabets), which
are essentially finite automata equipped with a~bounded number of registers
capable of holding an integer, which can be tested and updated using
Presburger-definable relations.
General counter automata with two or more registers are
Turing-complete~\cite{Minsky61}, which makes any of their non-trivial problems
undecidable.

One way of restricting the expressiveness of counter automata to obtain
a~decidable model are the so-called \emph{integer vector addition systems with
states} ($\bbZ$-VASS)~\cite{Haase14}, where testing values of registers is forbidden and the only allowed updates to a~register are addition or
subtraction of a~constant from its value.
This restriction makes the configuration reachability problem for $\bbZ$-VASS much easier
($\nptime$-complete) and the \emph{equivalence of reachability sets} problem decidable
(co$\nexptime$-complete).
For completeness, we also mention \emph{vector addition systems with states}
(denoted as VASS without the initial~$\bbZ$), where registers can only
hold values from~$\bbN$ (and thus
transitions that would decrease the current value below zero are disabled).
This makes VASS equivalent to Petri nets.
In VASS, configuration reachability is $\expspace$-hard~\cite{Lipton76} (but
decidable~\cite{Mayr84}) and equivalence is undecidable~\cite{Hack76}.

Another way of restricting counter automata to decidable subclasses is via their
structure.
One important subclass of this kind are the so-called \emph{flat counter
automata}~\cite{LG05}, i.e., counter automata without nested loops, where
configuration reachability and equivalence are decidable.

\emph{Register automata}
(RA)~\cite{ShemeshF94,KaminskiF94,NSVianu04,DemriL09}---sometimes also called
\emph{finite-memory automata}---
is a model of automata over infinite alphabets
where registers can store values
copied from the input and transition guards
can only test equality between the input value and the values stored in registers.
For (nondeterministic) RA, the emptiness problem is $\pspace$-complete, while the inclusion, equivalence, and universality problems are all
undecidable.
Register automata can also be extended~\cite{Figueira10} to allow transition
guards to test the order relation between data values (denoted by RA$_\le$), in which case they are able to simulate \emph{timed
automata}~\cite{Alur94} by encoding timed words with data words. The model of $\raq$ can be seen as an extension of RA$_\le$ with data variables and linear arithmetics on them.
There is also another RA model over the alphabet $\bbN$
with order and successor relations in guards, but no arithmetic on the input word~\cite{BLT17}.

As mentioned in the introduction, the model of $\raq$ is inspired by the model of
\emph{streaming data string transducers} (SDST), proposed by Alur and
\v{C}ern\'{y} in \cite{Alur11}.
SDST are an extension of  
deterministic RA$_\le$ with \emph{data string variables (registers)}, which 
can hold data strings obtained by concatenating some of the input values that
have been read so far.
There are two major restrictions imposed on the data strings variables of SDST:
$(i)$~they are \emph{write-only}, in the sense that they are
forbidden to occur in transition guards, and
$(ii)$~the reassignments that update them are \emph{copyless}.
These two restrictions are essential for obtaining the $\pspace$-completeness
result of the equivalence problem for SDST. 

\emph{Cost register automata} (CRA)~\cite{Alur13} is a model over finite
alphabets where a~finite number of \emph{cost registers} are used to store
values from a (possibly infinite) cost domain, and these cost registers are
updated by using the operations specified by \emph{cost grammars}.
A cost domain and a cost grammar, together with its interpretation on the cost
domain, are called a \emph{cost model}.
An example of a~cost model is $(\bbQ, +)$, where the cost domain is $\bbQ$, the
set of rational numbers, and the cost grammar is the set of linear arithmetic
expressions on $\bbQ$, with $+$ interpreted as the addition operation on
$\bbQ$.
Decidability and complexity of decision problems for CRA depend on the underlying \emph{cost model}.
For instance, the equivalence problem for CRA over the $(\bbQ, +)$ cost model is
decidable in $\ptime$, while, on the other hand, for CRA over the $(\bbN,
\min, +c)$ cost model (which are equivalent to weighted automata), the equivalence problem becomes undecidable.

The work related closest to $\raq$ are \emph{streaming numerical transducers}
(SNT) introduced in our previous work~\cite{ChenDaBeast16} for investigating
the commutativity problem of Reducer programs in the MapReduce
framework~\cite{DeanG04}.
The model of SNT is a~strict subclass of $\raq$ that satisfies several
additional constraints; in particular, SNT are copyless and their transition
graph is \emph{deterministic} and  \emph{generalized flat} (any two loops share
at most one state).
In~\cite{ChenDaBeast16},  by using a completely different proof strategy than
in the current paper, we provided an exponential-time algorithm for the
non-zero, equivalence, and commutativity problems of SNT. 
We did not consider the reachability problem for SNT in~\cite{ChenDaBeast16}.

\emph{Weighted register automata} (WRA)~\cite{Babari16} is a model that
combines register automata with \emph{weighted automata}~\cite{Schutzenberger61}.
Using the framework of this paper, the model of WRA can be seen as a variant of
$\raq$ with
exactly one data variable that is used to store the weight, with the following
differences:
$(i)$~the input data values in WRA can be compared using an arbitrary collection of
binary data relations in the data domain, and
$(ii)$~the data variable can be updated using an arbitrary collection of binary
data functions from the data domain to the weight domain.
The work~\cite{Babari16} focused on the expressibility issues and did not
investigate the decision problems.

Finally, let us mention \emph{symbolic automata} and \emph{symbolic
transducers}~\cite{Veanes12,Veanes13,DAntoni15}. They are extensions of finite
automata and transducers where guards in transitions are predicates from an
alphabet theory (which is a~parameter of the model), thus preserving many of
their nice properties.
Extending these models with registers in a~straightforward way yields
undecidable models.
Imposing a~register access policy (such as that a~register always holds the
previous value) can bring some decision problems back to the realm of
decidability~\cite{DAntoni15,Czyba15}.
It is an interesting open problem to find a way of combining symbolic automata with
$\raq$.


\section{Concluding Remarks}\label{sec:conclusion}

In this paper, we defined $\raq$ over the rationals.
To the best of our knowledge, it is the first such model over infinite alphabets 
that allows arithmetic on the input word, while keeping some interesting decision problems decidable. 
We study some natural decision problems such as
the invariant/non-zero problem, which is a generalization of
the standard non-emptiness problem,
as well as the equivalence, commutativity, and reachability problems.
$\raq$ is also quite a general model subsuming
at least three well-known models, i.e.,
the standard RA, affine programs, and arithmetic circuits.

It will be interesting to investigate 
the configuration reachability and coverability problems for copyless $\raq$.
Both of them subsume the corresponding problems for $\bbZ$- and $\bbQ$-VASS,
since such VASS can be viewed as $\raq$ where data variables represent the counters in the VASS.
From Theorem~\ref{thm-reach-und},
we can already deduce that they are undecidable for general $\raq$.
We leave the corresponding problems for copyless $\raq$ as future work.


\subsubsection*{Acknowledgement}

We thank Rajeev Alur for valuable discussions and
the anonymous reviewers for their helpful suggestions about how to
improve the presentation of the paper.
Yu-Fang Chen is supported by the MOST grant No.\ 103-2221-E-001-019-MY3.
Ond\v {r}ej Leng\'{a}l is supported by
the Czech Science Foundation (project 17-12465S),
the BUT FIT project FIT-S-17-4014,
and the IT4IXS: IT4Innovations Excellence in Science project (LQ1602).
Tony Tan is supported by the MOST grant No.\ 105-2221-E-002-145-MY2.
Zhilin Wu is supported by the NSFC grants No.\ 61472474, 61572478, 61100062, and 61272135.





%

\bibliographystyle{IEEEtran}

\newpage
\onecolumn
\appendix


\section{Missing proofs}


\subsection{Proof of Lemma~\ref{lem:affine-space}}
\label{app:proof:lem:affine-space}

Let $\bbA=\vu+\bbV$, where $\bbV$ is a vector space.
Let $T\vx = M\vx +\va$.
Furthermore, let $M = [M_0 \mid \vb]$, i.e., $\vb$ is the last column of matrix $M$.
Then, $T \myvec{\vv\\ d_i} = M_0\vv  +  d_i \vb + \va \ \in\ \bbA$,
and hence, $M_0\vv  +  d_i \vb + \va-\vu \in \bbV$, for every $i=1,2$.
Subtracting one from the other, we get $(d_1-d_2)\vb \in \bbV$.

Since $\alpha(d_1-d_2)\vb \in \bbV$, for every $\alpha \in \bbQ$,
and $T \myvec{\vv\\ d_i} \in \bbA$, 
\begin{eqnarray*}
T \myvec{\vv\\ d_i} + \alpha(d_1-d_2)\vb & \in & \bbA,
\qquad\qquad\mbox{for every}\ \alpha \in \bbQ.
\end{eqnarray*}
For every $d \in \bbQ$, if we take $\alpha = (d-d_1)/(d_1-d_2)$,
we have $T \myvec{\vv\\ d}  = 
T \myvec{\vv\\ d_1} + \alpha(d_1-d_2)\vb$, which by the equation above, is in $\bbA$.
This completes our proof of Lemma~\ref{lem:affine-space}.

\subsection{Proof of Lemma~\ref{lem:finite-dim}}
\label{app:proof:lem:finite-dim}

It suffices to show that for $m\geq \dim(\bbH)+2$,
there is such a set $J$ such that $|J|\leq m-1$.
The proof is by induction on $\dim(\bbH)$.
The base case is $\dim(\bbH)=0$, i.e., $\bbH$ consists of only one point, say $\vc$.
We pick the smallest index $j\in \{1,\ldots,m+1\}$ such that $\vu_j\neq \vc$.
If $j=1$, we set $J=\emptyset$, and 
the claim holds trivially.
If $j\neq 1$, then $\vu_1=\dots = \vu_{j-1}=\vc$, we can set $J=\{j-1\}$,
where $T_{j-1} \vu_1 = T_{j-1} \vc \notin \bbH$.

For the induction step, let
$m\geq \dim(\bbH)+2$ and $\vu_1,\ldots,\vu_{m+1}$ be vectors such that $\vu_{i+1}= T_i \vu_i$.
If $\vu_{m} \notin \bbH$,
we can take $J=\{1,\ldots,m-1\}$.
So, we assume that $\vu_{m}\in \bbH$.

Let $\bbK$ be the pre-image of $\bbH$ under $T_m$,
i.e., $\bbK=\{\vu \mid T_m \vu \in \bbH\}$.
Since $T_m(\vu_{m}) \notin \bbH$,
we have $\vu_{m} \ \notin \ \bbK$.
Thus, $\bbH\cap\bbK \subsetneq \bbH$, and therefore,
$\dim(\bbH\cap \bbK)\leq \dim(\bbH)-1$.
Applying the induction hypothesis,
we have $J_0=\{j_1,\ldots,j_n\}$ such that $|J_0|\leq m-2$ and $T_{j_n}\cdots T_{j_1} \vu_1 \notin \bbH\cap \bbK$.

Let $\vz = T_{j_n}\cdots T_{j_1} \vu_1$.
If $\vz \notin \bbH$, the set $J=J_0$ is as desired.
Assume that $\vz \in \bbH$.
Since $\vz \notin \bbH\cap \bbK$, 
it should be that $T_m \vz \notin \bbH$.
So, the set $J= J_0 \cup \{m\}$ is as desired.
This completes our proof of Lemma~\ref{lem:finite-dim}.

\subsection{Proof for Remark~\ref{rem:karr-small-model}}
\label{app:proof:rem:karr-small-model}

Let $\cP=(S,s_0,\mu)$ be an AP.
Let the path from $(s_0,\vu)$ to $(s',\vv)$, for some $\vv\notin \bbH$, as follows.
$$
(s_0,T_1,s_1),(s_1,T_2,s_2),\ldots,(s_{\ell-1},T_{\ell},s_{\ell})
$$
If $\ell > (n+1)|S|$,
there is a state that appears at least $n+2$ times in the path.
Let us denote by $p$ such state.

Let $T_0'$ be the composition of the transformations
from $s_0$ to the first appearance of $p$,
and $T_1'$ the composition of the transformations
from the first appearance of $p$ to the second, and 
$T_2'$ the composition of the transformations from the second appearance of $p$
to the third, and so on.
Let $m\geq n+2$ be the number of appearances of $p$ in the path.
So, we have:
$$
R T_{m-1}' \cdots T_1'T_0'\vu\ =\ \vv\ \notin\ \bbH
$$
where $R$ is the composition of the transformations from the last appearance of $p$ to $s'$.

Let $\bbK$ be the pre-image of $\bbH$ under $R$.
So, 
$$
T_{m-1}' \cdots T_{1}'T_0'\vu\ \notin\ \bbK
$$
Now, $m-1 \geq n+1$.
On the other hand, $\dim(\bbK)\leq n-1$,
as otherwise, $\bbK$ is the whole space $\bbQ^n$.
So, $m-1 \geq \dim(\bbK)+2$.
By Lemma~\ref{lem:finite-dim}, there is $J=\{j_1,\ldots,j_n\}\subseteq\{1,\ldots,m-1\}$ 
such that $|J|\leq \dim(\bbK)+1$ and
$$
T_{j_n}' \cdots T_{j_1}'T_0'\vu\ \notin\ \bbK
$$
and thus, $RT_{j_n}' \cdots T_{j_1}'T_0'\vu\ \notin\ \bbH$. 
The path for such composition of transformations contains the state $p$ only
for at most $\dim(\bbK)+2$ number of times.
So, we have the bound that the path contains the state $p$ at most $n+1$ times.

Note that if all the transformations in $\cP$ are one-to-one,
$R$ is also one-to-one, and thus, $\dim(\bbK)=\dim(\bbH)$.
So, the path contains $p$ at most $\dim(\bbH)+2$ times.


\subsection{A brief review of arithmetic circuits and proof of Theorem~\ref{theo:circuit}}
\label{app:proof:theo:circuit}

Briefly, a (division-free) arithmetic circuit (AC) is  a directed acyclic graph with nodes
labeled with constants from $\bbZ$, or with some indeterminates $X_1,\ldots,X_m$,
or with the operators $+,-,\times$.
The nodes labeled with constants are called constant nodes,
while those labeled with indeterminates are called input nodes.
Both constant and input nodes don't have incoming edges.
Internal nodes are those labeled with $+,-,\times$.
Output node is one without out-going edges.

We assume that all the operators $+,-,\times$ are binary,
so all the internal nodes have in-degree $2$.
We also assume that there is only one output node.
Each node $u$ in a circuit represents a multivariate polynomial $\bbZ[X_1,\ldots,X_m]$, denoted by $\val_u$.
Here we are only interested in arithmetic circuits without input nodes,
thus, $\val_u$ is an integer.

In the following, $\oplus$-nodes and $\otimes$-nodes
refer to nodes with label $+$ and $\times$, respectively.
Since $-$ can be rewritten with $+$ and multiplication with $-1$,
we assume our AC contains only internal $\oplus$- and $\otimes$-nodes.
A circuit is called {\em multiplicative}, if all the internal nodes are $\otimes$-nodes.
Likewise, it is additive, if all the internal nodes are $\oplus$-nodes.

%


The rest of this section is devoted to the proof of Theorem~\ref{theo:circuit}.
We will first describe the proof for multiplicative/additive circuits.
The general case will follow after that.

We need some terminologies.
For an edge $e=(u,v)$, we say that $u$ is the {\em source node} of $e$,
while $v$ is the {\em target node} of $e$.
A node {\em $u$ is below $v$} in a circuit $C$, 
if either $u=v$ or there is a path from $u$ to $v$ in $C$.
An edge $e$ is below $v$, if the target node of $e$ is below $v$.
Intuitively, if $u$ is below $v$, the value of $v$ depends on the value of $u$.
Likewise, if an edge $e$ is below $v$, the value of $v$ depends
on the value of both the source and the target nodes of $v$.

To avoid cumbersome case-by-case analysis, we assume that
there is an ``imaginary'' loop on each constant node in the AC,
as it will help reduce the complication in our proof.
With such loops, all nodes in AC have incoming edges.

\paragraph*{Multiplicative/additive circuits}
We will only describe our proof for multiplicative circuits.
The additive case can be treated in a similar manner.

Before we present the formal detail,
we would like to explain our proof via the example in Figure~\ref{fig:ac-raq}.
Figure~\ref{fig:ac-raq1} shows an AC, where $\otimes,u_i$ indicate the node $u_i$ with label $\times$.
Figure~\ref{fig:ac-raq2} shows the topology of the constructed $\raq$
with initial state $q_1$ corresponding to the output node $u_1$, 
final state $q_f$ and four states $q_2,q_3$ and $q_{c_1},q_{c_2}$
corresponding to the internal nodes $u_2,u_3$ and the constant nodes $c_1,c_2$ in AC, respectively.

\begin{figure}
\centering

\subfigure[]{\label{fig:ac-raq1}
\begin{tikzpicture}[scale=0.8,every node/.style={minimum size=1pt}]

    \node (u1) at (0,5) {$\otimes$\scriptsize{,$u_1$}};
    \node (u2) at (3,2.5) {$\otimes$\scriptsize{,$u_2$}};
    \node (u3) at (0,0) {$\otimes$\scriptsize{,$u_3$}};

    \node (c1) at (0,-2) {$c_1$\scriptsize{,$u_1'$}};
    \node (c2) at (3,-2) {$c_2$\scriptsize{,$u_2'$}};

	\node (x) at (0,-3) {};

    \draw[->] (u3) to (u2);
    \draw[->] (u3) to (u1);
    \draw[->] (u2) to (u1);
    \draw[->] (c1) to (u3);
    \draw[->] (c2) to (u2);
    \draw[->] (c2) to (u3);

	\draw[->,loop below] (c1) to (c1);
	\draw[->,loop below] (c2) to (c2);
\end{tikzpicture}
}\hspace{3cm}
\subfigure[]{\label{fig:ac-raq2}
\begin{tikzpicture}[scale=0.7,every node/.style={minimum size=1pt}]

    \node (q1) at (0,6) {$q_1$};
    \node (q2) at (4,3) {$q_2$};
    \node (q3) at (0,0) {$q_3$};

    \node (c1) at (0,-3) {$q_{c_1}$};
    \node (c2) at (4,-3) {$q_{c_2}$};

	\node (qf) at (4,6) {$q_f$};

\tikzset{->-/.style={
        decoration={markings,
            mark= at position 0.5 with {\arrow{latex}} ,
        },
        postaction={decorate}
    }
}
    \draw[->-] (q1) to node[above] {\scriptsize$t_f$} (qf);

    \draw[->-, bend right=10] (q1) to node[left] {\scriptsize$t_1$} (q3);
    \draw[->-, bend right=10] (q3) to node[right] {\scriptsize$t_1^r$} (q1);

    \draw[->-, bend right=10] (q1) to node[left] {\scriptsize$t_2$} (q2);
    \draw[->-, bend right=10] (q2) to node[right] {\scriptsize$t_2^r$} (q1);

    \draw[->-, bend right=10] (q2) to node[left] {\scriptsize$t_3$} (q3);
    \draw[->-, bend right=10] (q3) to node[right] {\scriptsize$t_3^r$} (q2);

    \draw[->-, bend right=10] (q3) to node[left] {\scriptsize$t_4$} (c1);
    \draw[->-, bend right=10] (c1) to node[right] {\scriptsize$t_4^r$} (q3);

    \draw[->-, bend right=10] (q3) to node[left] {\scriptsize$t_5$} (c2);
    \draw[->-, bend right=10] (c2) to node[right] {\scriptsize$t_5^r$} (q3);

    \draw[->-, bend right=10] (q2) to node[left] {\scriptsize$t_6$} (c2);
    \draw[->-, bend right=10] (c2) to node[right] {\scriptsize$t_6^r$} (q2);

\tikzset{->-/.style={
        decoration={markings,
            mark= at position 0.99 with {\arrow{latex}} ,
        },
        postaction={decorate}
    }
}

	\tikzset{loop/.style={min distance=0mm,in=330,out=30,looseness=10}}
	\draw[->-,loop below] (c2) to node[right] {\scriptsize$t_{c_2}$} (c2);

	\tikzset{loop/.style={min distance=0mm,in=210,out=150,looseness=10}}
	\draw[->-,loop below] (c1) to node[left] {\scriptsize$t_{c_1}$} (c1);
%

\end{tikzpicture}
}
\caption{\label{fig:ac-raq}%
An AC on the left and the constructed $\raq$ on the right.
The loops on the constant nodes in AC are imaginary loops whose sole purpose
is to make the nodes ``uniform'' in that all of them have incoming edges,
thus, avoid cumbersome case analysis in our construction.}
\end{figure}
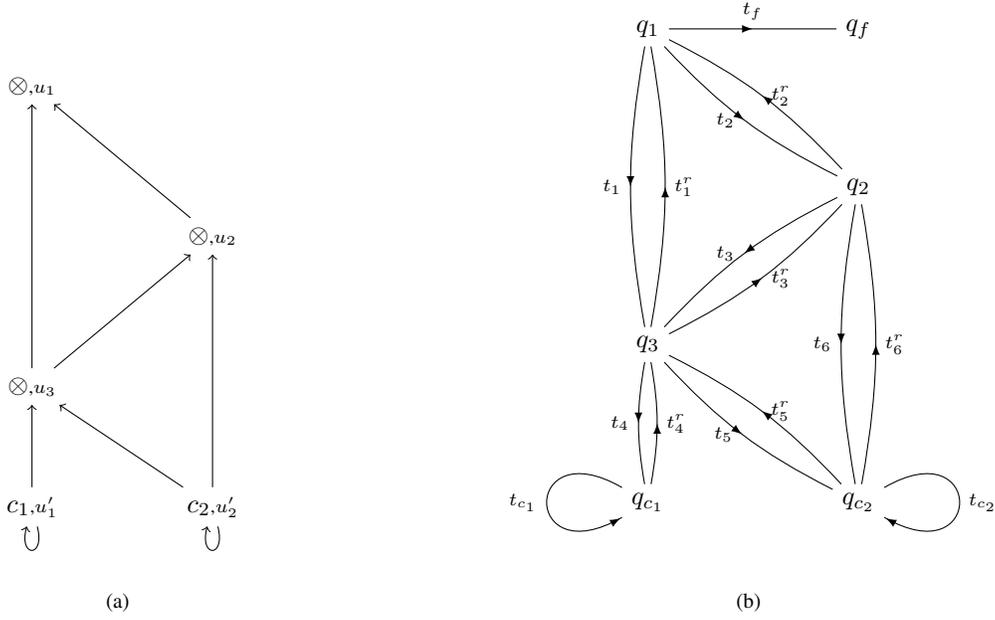

Intuitively, the $\raq$ has one data variable $y$ with initial content $1$.
The output function in the final state $q_f$ will output the value in data variable $y$.
To reach $q_f$, the $\raq$ will be ``forced'' to traverse along the following path:
$$
q_1,\ q_3,\ q_{c_1},\ q_3,\ q_{c_2},\ q_3,\
q_1,\ q_2,\ q_3,\ q_{c_1},\ q_3,\ q_{c_2},\ q_3,\
q_2,\ q_{c_2},\ q_2,\ q_1,\ q_f 
$$
When it reaches $q_{c_1}$, it will take transition $t_{c_1}$
in which there is a reassignment $y:= c_1\cdot y$, before
it can leave $q_{c_1}$.
Likewise, when it reaches $q_{c_2}$,
it will take transition $t_{c_2}$ with reassignment $y:=c_2\cdot y$.
In the transitions $t_0,\ldots,t_6,t_f$ the reassignment is $y:= y$ (i.e. the value of $y$ is unchanged).
Note that in the path above, the $\raq$ enters $q_3$ once from $q_1$
and once from $q_2$.
Each time it enters $q_3$, it has to enter both $q_{c_1}$ and $q_{c_2}$ to ensure $y$ is multiplied with $c_1$ and $c_2$.

\hide{\yfc{The original text}
To make sure that the $\raq$ traverses through all its transitions correctly,
we will employ control variable $x_e$
for each edge $e$,
that stores 0-1 value to remember which edge yet to traverse
and which edge to backtrack to.
}

To make sure that the $\raq$ traverses through all its transitions correctly,
we will employ control variables $x_e$ and $z_e$
for the forward and reverse direction of each edge $e$, respectively,
that stores 0-1 value to remember which edge yet to traverse
and which edge to backtrack to. The $0$ value of a variable $x_e$ or $z_e$ means the corresponding transition can be taken and $1$ otherwise.
The formal construction is as follows.
Let $C=(V,E)$ be an AC where all the internal nodes are $\otimes$-nodes,
and $E$ is the set of edges which includes the loops on the constant nodes.

Let $\cA = \langle Q,q_0,F,\vu_0,\delta,\zeta\rangle$ be the following $\raq$.
\begin{itemize}\itemsep=0pt
\item
There is one data variable $y$ with initial content $1$.

\item
For each edge $e$ in $E$, there are two control variables $x_e$ and $z_e$
with initial content $0$.

\item
The set $Q$ of states is $\{q_f\}\cup \{q_u \mid u \in V\}$,
where $q_u$ is a state corresponding to the node $u$ in $V$.

\item
The initial state is $q_u$, where $u$ is the output node in $C$,
and there is only final state $q_f$.
\item
The output function in $q_f$ outputs the value $y$.
\end{itemize}
The transitions in $\delta$ are defined as follows.

In the following we consider a loop $e$ on a node $u$
as both its incoming and outgoing edge.
For each internal node $u$, we fix one of its incoming edge as its {\em left-edge}
and the other one as its {\em right-edge}.
\begin{enumerate}[{\bf [T1]}]\itemsep=0pt
\item
Let $u$ be the output node in $C$ and $e_1,e_2$ be its left- and right-edges, respectively.
Then, $\delta$ contains the transition:
\begin{eqnarray*}
(q_u, x_{e_1}=1 \wedge x_{e_2}=1) & \to & (q_f,\{\})
\end{eqnarray*}
Intuitively, the transition means that when $\cA$ is in state $q_u$
and it has traversed both the edges $e_1$ and $e_2$
(indicated by them containing value $1$),
then it can go to the final state $q_f$.

\item
For each internal node $u$ in $C$, the transitions from the state $q_u$ are 
defined as follows.

Let $e_1,e_2$ be left- and right-edges of $u$, respectively,
and let $v_1,v_2$ be their source nodes.

The following transitions are in $\delta$:
\begin{eqnarray*}
(q_u,x_{e_1}=0) & \to & (q_{v_1},\{z_{e_1}:=1\}\cup\{ x_{e}:=0 \mid e \ \textrm{is an incoming edge of}\ v_1\})
\\
(q_u, x_{e_1}=1 \wedge x_{e_2}=0) & \to & (q_{v_2},\{z_{e_2}:=1\}\cup\{ x_{e}:=0 \mid e \ \textrm{is an incoming edge of}\ v_2\})
\end{eqnarray*}
Note that if $v_i$ is a constant node, then there is only one incoming edge $e$ to $v_i$,
which is the loop on $v_i$, and $x_{e}$ is assigned $0$, as well.

Let $e_1',\ldots,e_m'$ be the outgoing edges of $u$,
and let $v_1',\ldots,v_m'$ be their target nodes, respectively.

For each $i=1,2,\ldots,m$, the following transition is in $\delta$:
\begin{eqnarray*}
(q_u, x_{e_1}=1 \wedge x_{e_2}=1 \wedge z_{e_i'}=1)
& \to & 
(q_{v_i'},\{ z_{e_i'}:=0; x_{e_i'}:=1\})
\end{eqnarray*}

\item
For each constant node $u$ in $C$, the transition from state $q_u$ is defined as follows.

Let $p$ be the integer label of $u$.

Let $e$ be the loop on $u$.
The following transition is in $\delta$:
\begin{eqnarray*}
(q_u,x_{e}=0) & \to & (q_{u},\{ x_{e}:=1; y:= p\cdot y\})
\end{eqnarray*}

Let $e_1',\ldots,e_m'$ be the outgoing edges of $u$,
and let $v_1',\ldots,v_m'$ be their target nodes, respectively.

For each $i=1,2,\ldots,m$, the following transition is in $\delta$:
\begin{eqnarray*}
(q_u, x_{e}=1 \wedge z_{e_i'}=1)
& \to & 
(q_{v_i'},\{ z_{e_i'}:=0; x_{e_i'}:=1\})
\end{eqnarray*}

\end{enumerate}

The following claim immediately implies the correctness of our $\raq$.
Intuitively, it states that for each node $u$ in the circuit $C$,
if $\cA$ ``enters'' $q_u$ with the initial content of $y$ being, say $N$,
then $\cA$ ``exits'' $q_u$ with the content of $y$ being $\val_{u}\cdot N$.

\begin{claim}
Let $u$ be a node in the circuit $C$.
Let $\vv_1$ be the contents of the variables such that $\vv_1(x_e)=0$, for every incoming edge $e$ of $u$.
Then, 
$$
(q_u,\vv_1)\vdash^* (q_u,\vv_2)
$$
for some $\vv_2$ such that
\begin{itemize}\itemsep=0pt
\item
$\vv_2(y)=\val_u \times \vv_1(y)$;
\item
$\vv_2(x_e)=
\left\{
\begin{array}{ll}
1, & \textrm{for every edge $e$ below $u$}
\\
 \vv_1(x_e), & \textrm{for other $e$.}
\end{array}
\right.
$
\end{itemize}
\end{claim}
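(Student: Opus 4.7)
The plan is to prove this by well-founded induction on $u$ in the DAG $C$, say by induction on the number of nodes below $u$ (equivalently, on the height of $u$). For the base case, $u$ is a constant node with label $p = \val_u$; the only incoming edge of $u$ is the imaginary self-loop $e$. The hypothesis $\vv_1(x_e)=0$ enables the unique transition at $q_u$ given by \textbf{[T3]}, which sets $x_e := 1$ and multiplies $y$ by $p$. Since $e$ is the only edge below $u$, the conclusion holds immediately.

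For the inductive step, suppose $u$ is an internal $\otimes$-node with incoming edges $e_1, e_2$ and corresponding source nodes $v_1, v_2$. I plan to exhibit a run consisting of two descend-return excursions, one for each child. First, fire the \textbf{[T2]}-transition from $q_u$ to $q_{v_1}$ (enabled because $\vv_1(x_{e_1}) = 0$), which sets $z_{e_1} := 1$ and zeros out the $x$-variables of all incoming edges of $v_1$. Next, apply the induction hypothesis at $v_1$: this multiplies $y$ by $\val_{v_1}$ and sets $x_e = 1$ for every edge $e$ below $v_1$, in particular for all incoming edges of $v_1$. The variable $z_{e_1}$ is \emph{not} touched by this sub-run, because $z_{e_1}$ is associated with the edge $e_1 = (v_1, u)$ and $u$ is not below $v_1$; hence $z_{e_1}$ stays at $1$. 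The \textbf{[T2]}-return-transition from $q_{v_1}$ to $q_u$ via $e_1$ is therefore enabled and brings us back to $q_u$ with $x_{e_1} := 1$, $z_{e_1} := 0$, and $y$ multiplied by $\val_{v_1}$. A symmetric excursion through $v_2$ is then enabled by the guard $x_{e_1}=1 \wedge x_{e_2}=0$; it multiplies $y$ by $\val_{v_2}$ and sets $x_{e_2} := 1$ on return. The cumulative factor applied to $y$ is $\val_{v_1} \cdot \val_{v_2} = \val_u$, as required.

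The main obstacle is the bookkeeping of $x$-variables across the two excursions. I plan to argue it as follows. The set of edges below $u$ decomposes as $\{e_1, e_2\}$ together with the edges strictly below $v_1$ or strictly below $v_2$; the first two are set to $1$ on the return transitions, and the latter are set to $1$ by the respective inductive calls. The one delicate case is an edge $e'$ below $v_1$ that is also an incoming edge of $v_2$: its $x$-variable gets reset to $0$ at the descent into $v_2$, but is restored to $1$ by the inductive hypothesis applied at $v_2$ (since $e'$ is, in particular, below $v_2$). Finally, for edges $e$ \emph{not} below $u$, every transition fired in the constructed run only modifies $x$-variables of edges below the current state (and the current state is always at or below $u$), so such $x_e$ retain their $\vv_1$-values. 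This structural invariant, together with the multiplicative accumulation in $y$, yields the claim.
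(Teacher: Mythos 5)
Your proof is correct and matches the approach the paper intends: the paper omits the argument entirely, noting only that it is ``by straightforward induction on the distance between $u$ and its furthest constant node,'' which is essentially your induction measure, and your two-excursion inductive step (including the delicate case of an edge below $v_1$ that is also incoming to $v_2$) is exactly what that induction requires. The one point worth making fully explicit is that the claim as literally stated must be supplemented, as you do, with the structural invariant that a run rooted at a node $v$ modifies only $y$ and the variables $x_e$, $z_e$ of edges $e$ below $v$ --- the bare induction hypothesis says nothing about the $z$-variables, and you need $z_{e_1}$ to survive the excursion through $v_1$ for the return transition to be enabled.
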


We omit the proof which is by straightforward induction on the distance between $u$
and its furthest constant node.

\paragraph*{Proof of Theorem~\ref{theo:circuit}}
Let $C$ be an AC.
The construction of $\cA$ follows similar idea as above
by evaluating $\val_u$, for each node $u$ in $C$.
When $u$ is $\otimes$-node,
the evaluation is similar to the above.
When $u$ is a $\oplus$-node,
the evaluation is done via the distributive law: $\alpha(\beta+\gamma)=\alpha\beta+\alpha\gamma$.

Let $\cA = \langle Q,q_0,F,\vu_0,\delta,\zeta\rangle$ be the following $\raq$.
\begin{itemize}\itemsep=0pt
\item
$\cA$ contains the following data variables.
\begin{itemize}\itemsep=0pt
\item
For each $\otimes$-node $u$, there is one data variable $y_u$ with initial content $1$.
\item
For each $\oplus$-node $u$, there are three data variables $y_u,y_u^{\myleft}, y_{u}^{\myright}$
with initial content $1$. 
\item
For each constant node $u$, there is one data variable $y_u$
with initial content $1$.
\end{itemize}
In fact, as we will see later, the initial contents of data variables are not important,
except the one associated with the output node.

Intuitively, the meaning of the data variable for $\oplus$-node is as follows.
Let $u$ be a $\oplus$-node and $e$ be an arc  from $u$ to $v$, when $\cA$ goes
from a state $q_v$ to $q_u$ in some transition,  $y_u$ will be assigned with the
value of $y_v$, moreover, the value of $y_u$ will keep unchanged before
returning to the state $q_v$. In addition, let $v_1, v_2$ be the source nodes of
the left- and right-edge of $u$ respectively, then after returning to the state
$q_u$ from the state $q_{v_1}$ for the first time, $y_u^{\myleft} = y_u \cdot
\val_{v_1}$, similarly, after returning to the state $q_u$ from the state
$q_{v_2}$ for the first time, $y_u^{\myright} = y_u \cdot \val_{v_2}$.

\item
For each edge $e$ in $E$, there are two control variables $x_e$ and $z_e$
with initial content $0$.

\item
The set $Q$ of states is $\{q_f\}\cup \{q_u \mid u \in V\}$,
where $q_u$ is a state corresponds to the node $u$ in $V$.

\item
The initial state is $q_u$, where $u$ is the output node in $C$,
and there is only final state $q_f$.
\item
If the output node $u$ is $\otimes$-node, the function $\zeta(q_f)$ outputs $y_u$.

If the output node $u$ is $\oplus$-node, the function $\zeta(q_f)$ outputs $y_u^{\myleft}+y_u^{\myright}$.
\end{itemize}
The transitions in $\delta$ are defined as follows.
\begin{enumerate}[{\bf [T1]}]\itemsep=0pt
\item
Let $u$ be the output node in $C$ and $e_1,e_2$ be its left- and right-edges, respectively.
Then, $\delta$ contains the transition:
\begin{eqnarray*}
(q_u, x_{e_1}=1 \wedge x_{e_2}=1) & \to & (q_f,\{\})
\end{eqnarray*}
\item
For each internal $\otimes$-node $u$ in $C$, the transitions from state $q_u$ are 
defined as follows.

Let $e_1,e_2$ be left- and right-edges of $u$, respectively,
and let $v_1,v_2$ be their source nodes.

The following transitions are in $\delta$:
\begin{eqnarray*}
(q_u,x_{e_1}=0) & \to & (q_{v_1},\{z_{e_1}:=1; y_{v_1}:=y_u\}\cup\{ x_{e}:=0 \mid e \ \textrm{is an incoming edge of}\ v_1\})
\\
(q_u, x_{e_1}=1 \wedge x_{e_2}=0) & \to & (q_{v_2},\{z_{e_2}:=1; y_{v_2}:=y_u\}\cup\{ x_{e}:=0 \mid e \ \textrm{is an incoming edge of}\ v_2\})
\end{eqnarray*}

Let $e_1',\ldots,e_m'$ be the outgoing edges of $u$,
and let $v_1',\ldots,v_m'$ be their target nodes, respectively.

For each $i=1,2,\ldots,m$, the following transition is in $\delta$:
\begin{itemize}\itemsep=0pt
\item
If $v_i'$ is $\otimes$-node:
\begin{eqnarray*}
(q_u, x_{e_1}=1 \wedge x_{e_2}=1 \wedge z_{e_i'}=1)
& \to & 
(q_{v_i'},\{ z_{e_i'}:=0; x_{e_i'}:=1; y_{v_i'}:= y_u\})
\end{eqnarray*}
\item
If $v_i'$ is $\oplus$-node and $e_i'$ is its left-edge:
\begin{eqnarray*}
(q_u, x_{e_1}=1 \wedge x_{e_2}=1 \wedge z_{e_i'}=1)
& \to & 
(q_{v_i'},\{ z_{e_i'}:=0; x_{e_i'}:=1; y_{v_i'}^{\myleft}:= y_u\})
\end{eqnarray*}
\item
If $v_i'$ is $\oplus$-node and $e_i'$ is its right-edge:
\begin{eqnarray*}
(q_u, x_{e_1}=1 \wedge x_{e_2}=1 \wedge z_{e_i'}=1)
& \to & 
(q_{v_i'},\{ z_{e_i'}:=0; x_{e_i'}:=1; y_{v_i'}^{\myright}:= y_u\})
\end{eqnarray*}
\end{itemize}

\item
For each internal $\oplus$-node $u$ in $C$, the transitions from state $q_u$ are 
defined as follows.

Let $e_1,e_2$ be left- and right-edges of $u$, respectively,
and let $v_1,v_2$ be their source nodes.

The following transitions are in $\delta$:
\begin{eqnarray*}
(q_u,x_{e_1}=0) & \to & (q_{v_1},\{z_{e_1}:=1; y_{v_1}:=y_u\}\cup\{ x_{e}:=0 \mid e \ \textrm{is an incoming edge of}\ v_1\})
\\
(q_u, x_{e_1}=1 \wedge x_{e_2}=0) & \to & (q_{v_2},\{z_{e_2}:=1; y_{v_2}:=y_u\}\cup\{ x_{e}:=0 \mid e \ \textrm{is an incoming edge of}\ v_2\})
\end{eqnarray*}

Let $e_1',\ldots,e_m'$ be the outgoing edges of $u$,
and let $v_1',\ldots,v_m'$ be their target nodes, respectively.

For each $i=1,2,\ldots,m$, the following transition is in $\delta$:
\begin{itemize}\itemsep=0pt
\item
If $v_i'$ is $\otimes$-node:
\begin{eqnarray*}
(q_u, x_{e_1}=1 \wedge x_{e_2}=1 \wedge z_{e_i'}=1)
& \to & 
(q_{v_i'},\{ z_{e_i'}:=0; x_{e_i'}:=1; y_{v_i'}:= y_u^{\myleft}+y_{u}^{\myright}\})
\end{eqnarray*}
\item
If $v_i'$ is $\oplus$-node and $e_i'$ is its left-edge:
\begin{eqnarray*}
(q_u, x_{e_1}=1 \wedge x_{e_2}=1 \wedge z_{e_i'}=1)
& \to & 
(q_{v_i'},\{ z_{e_i'}:=0; x_{e_i'}:=1; y_{v_i'}^{\myleft}:= y_u^{\myleft}+y_{u}^{\myright}\})
\end{eqnarray*}
\item
If $v_i'$ is $\oplus$-node and $e_i'$ is its right-edge:
\begin{eqnarray*}
(q_u, x_{e_1}=1 \wedge x_{e_2}=1 \wedge z_{e_i'}=1)
& \to & 
(q_{v_i'},\{ z_{e_i'}:=0; x_{e_i'}:=1; y_{v_i'}^{\myright}:= y_u^{\myleft}+y_{u}^{\myright}\})
\end{eqnarray*}
\end{itemize}

\item
For each constant node $u$ in $C$, the transition from state $q_u$ is defined as follows.

Let $p$ be the integer label of $u$ and $e$ be the loop on $u$.
The following transition is in $\delta$:
\begin{eqnarray*}
(q_u,x_{e}=0) & \to & (q_{u},\{ x_{e}:=1; y_u:= p\cdot y_u\})
\end{eqnarray*}
Let $e_1',\ldots,e_m'$ be the outgoing edges of $u$,
and let $v_1',\ldots,v_m'$ be their target nodes, respectively.

For each $i=1,2,\ldots,m$, the following transition is in $\delta$:
\begin{itemize}\itemsep=0pt
\item
If $v_i'$ is $\otimes$-node:
\begin{eqnarray*}
(q_u, x_{e_1}=1 \wedge x_{e_2}=1 \wedge z_{e_i'}=1)
& \to & 
(q_{v_i'},\{ z_{e_i'}:=0; x_{e_i'}:=1; y_{v_i'}:= y_u\})
\end{eqnarray*}
\item
If $v_i'$ is $\oplus$-node and $e_i'$ is its left-edge:
\begin{eqnarray*}
(q_u, x_{e_1}=1 \wedge x_{e_2}=1 \wedge z_{e_i'}=1)
& \to & 
(q_{v_i'},\{ z_{e_i'}:=0; x_{e_i'}:=1; y_{v_i'}^{\myleft}:= y_u\})
\end{eqnarray*}
\item
If $v_i'$ is $\oplus$-node and $e_i'$ is its right-edge:
\begin{eqnarray*}
(q_u, x_{e_1}=1 \wedge x_{e_2}=1 \wedge z_{e_i'}=1)
& \to & 
(q_{v_i'},\{ z_{e_i'}:=0; x_{e_i'}:=1; y_{v_i'}^{\myright}:= y_u\})
\end{eqnarray*}
\end{itemize}

\end{enumerate}

The following claim immediately implies the correctness of our construction.
The proof is by straightforward induction on the distance between $u$
and its furthest constant node.

\begin{claim}
Let $u$ be a node in the circuit $C$.
Let $\vv_1$ be the content of the variables of $\cA$
such that $\vv_1(x_e)=0$, for every incoming edge $e$ of $u$.
Then, 
$$
(q_u,\vv_1)\vdash^* (q_u,\vv_2)
$$
for some $\vv_2$ such that the following holds.
\begin{itemize}\itemsep=0pt
\item
$u$ is $\otimes$-node:
\begin{itemize}\itemsep=0pt
\item
$\vv_2(y_u)=\val_u \cdot \vv_1(y_u)$;
\item
$\vv_2(x_e)=
\left\{
\begin{array}{ll}
1, & \textrm{for every edge $e$ below $u$}
\\
 \vv_1(x_e), & \textrm{for other $e$.}
\end{array}
\right.
$
\end{itemize}

\item
$u$ is $\oplus$-node:
\begin{itemize}\itemsep=0pt
\item
$\vv_2(y_u^{\myleft})+\vv_2(y_u^{\myright})=\val_u \cdot \vv_1(y)$;
\item
$\vv_2(x_e)=
\left\{
\begin{array}{ll}
1, & \textrm{for every edge $e$ below $u$}
\\
\vv_1(x_e), & \textrm{for other $e$.}
\end{array}
\right.
$
\end{itemize}

\end{itemize}
\end{claim}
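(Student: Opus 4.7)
The plan is to proceed by induction on the \emph{depth} $d(u)$ of $u$, defined as the length of the longest path in $C$ from $u$ down to a constant node. The cases for constant, $\otimes$-, and $\oplus$-nodes are handled by the same induction.

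\textbf{Base case ($d(u)=0$).} Then $u$ is a constant node with label $p=\val_u$, and its only incoming edge is the loop $e$, on which the hypothesis yields $\vv_1(x_e)=0$. Firing the single applicable T4 transition $(q_u,x_e=0)\to(q_u,\{x_e:=1;\ y_u:=p\cdot y_u\})$ takes $(q_u,\vv_1)$ to $(q_u,\vv_2)$ with $\vv_2(y_u)=\val_u\cdot\vv_1(y_u)$, $\vv_2(x_e)=1$, and every other variable unchanged. Since $e$ is the only edge below $u$, this settles the base case.

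\textbf{Inductive step.} Let $u$ be internal with left-edge $e_1$ and right-edge $e_2$ of sources $v_1,v_2$, both of strictly smaller depth. Starting from $(q_u,\vv_1)$ with $\vv_1(x_{e_1})=\vv_1(x_{e_2})=0$, the chosen run has two symmetric halves. For the $v_1$-half: (i)~fire the backward T2/T3 transition from $q_u$ to $q_{v_1}$, which sets $z_{e_1}:=1$, $y_{v_1}:=y_u$, and resets $x_e:=0$ for every incoming edge $e$ of $v_1$, thereby enabling the induction hypothesis at $v_1$; (ii)~apply the IH sub-run at $v_1$, which returns to $q_{v_1}$ with either $y_{v_1}=\val_{v_1}\cdot\vv_1(y_u)$ (if $v_1$ is $\otimes$ or constant) or $y_{v_1}^{\myleft}+y_{v_1}^{\myright}=\val_{v_1}\cdot\vv_1(y_u)$ (if $v_1$ is $\oplus$), and with $x_e=1$ for every $e$ below $v_1$; (iii)~fire the outgoing transition from $q_{v_1}$ along $e_1$ back to $q_u$---which is enabled because $z_{e_1}=1$ still holds and, by (ii), the guards on $v_1$'s own incoming $x$-flags are satisfied---whose reassignment sets $x_{e_1}:=1$, $z_{e_1}:=0$, and writes the accumulated value of $v_1$ either into $y_u$ (if $u$ is $\otimes$) or into $y_u^{\myleft}$ (if $u$ is $\oplus$). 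The $v_2$-half is strictly symmetric, writing into $y_u$ once more (if $\otimes$) or into $y_u^{\myright}$ (if $\oplus$).

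Combining the halves: in the $\otimes$-case $y_u$ has been successively multiplied by $\val_{v_1}$ and $\val_{v_2}$, producing $\val_u\cdot\vv_1(y_u)$; in the $\oplus$-case $y_u^{\myleft}+y_u^{\myright}=(\val_{v_1}+\val_{v_2})\cdot\vv_1(y_u)=\val_u\cdot\vv_1(y_u)$ (this relies on $y_u$ itself being untouched during the $\oplus$ inductive step, which is the case because (iii) writes only to $y_u^{\myleft}$ or $y_u^{\myright}$). Phases (iii) set $x_{e_1}$ and $x_{e_2}$ to $1$, while the two sub-runs set $x_e:=1$ for every $e$ below $v_1$ or $v_2$; together these cover every edge below $u$, and every other variable is untouched.

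\textbf{Main obstacle.} The delicate step is to verify that the inductive sub-run at $v_1$ does not (a)~revisit $q_u$, (b)~modify $y_u$ (in the $\oplus$-case) or the value that will be read by phase~(i) of the $v_2$-half, or (c)~alter $x_{e_2}$---otherwise the bookkeeping in (iii) and in the $v_2$-half would break. All three reduce to acyclicity of $C$: inspection of T1--T4 shows that $y_u$, $x_{e_2}$, and more generally every variable keyed outside the sub-DAG rooted at $v_1$, is reassigned only by transitions incident to $q_u$ or to states of ancestors of $u$; since $u$ is not below $v_1$, one may choose the IH sub-run to stay inside $\{q_w : w \textrm{ is below } v_1\}$, which automatically preserves the required invariants. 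The $v_2$-half is handled identically, closing the induction.
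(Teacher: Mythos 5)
Your proof is correct and follows exactly the route the paper indicates: the paper omits the details, stating only that the claim follows ``by straightforward induction on the distance between $u$ and its furthest constant node,'' which is precisely your induction measure $d(u)$. Your filled-in details are sound, and your ``main obstacle'' paragraph correctly identifies the one invariant (the sub-run at $v_1$ leaves $z_{e_1}$, $y_u$, $x_{e_2}$, and all variables outside the sub-DAG untouched) that must be carried through the induction for the composition of the two halves to work.
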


\subsection{Proof of Lemma~\ref{lem:reduction-affine}}
\label{app:proof:lem:reduction-affine}

We will need the following terminologies.
We say that $\cur$ is {\em bound} in a constraint/ordering $\phi$,
if $\phi$ implies $\cur=x_i$, for some $x_i$,
in which case, we say that $\cur$ is bound to $x_i$.
Otherwise, we say that $\cur$ is {\em free}.
For example, in $(x_2< \cur= x_1)$ and $(\cur= x_2 < x_1)$,
$\cur$ is bound to $x_1$ and $x_2$, respectively, while in $(\cur < x_2 < x_1)$ and $(x_1 < \cur < x_2)$,
$\cur$ is free.

The proof of Lemma~\ref{lem:reduction-affine} is by induction on $m$.
The base case $m=0$ is trivial.
For the induction step, suppose the following.
\begin{eqnarray*}
(q_1,\vu_{1}) \ \vdash_{t_1,d_1} \
\cdots\cdots \
\vdash_{t_{m},d_m} \ (q_{m+1},\vu_{m+1})
& \mbox{and} & \vu_{m+1}\notin \bbH.
\end{eqnarray*}

Let $t_m$ be $(p,\varphi(\vx,\cur))\to (q,A,B,\vb)$.
It induces a mapping, denoted by the same symbol $t_m$, that maps an affine space $\bbA$ to another as follows.
\begin{itemize}\itemsep=0pt
\item
If $\cur$ is bound in $\varphi$ to $x_i$ in $\vx$,
then 
\begin{eqnarray*}
t(\bbA) =
\left\{
\begin{array}{l|l}
\vv
&
\vv\ssX = A \myvec {\vu\ssX\\ \vu(i)}, \ \vv\ssY = B \myvec {\vu\\ \vu(i)} + \vb\
\mbox{where} \
 \vu \in \bbA

\end{array}
\right\}
\end{eqnarray*}
\item
If $\cur$ is free in $\varphi$,
then
\begin{eqnarray*}
t(\bbA) & = &
\left\{
\begin{array}{l|l}
\vv
&
\vv\ssX = A \myvec {\vu\ssX\\ \alpha}, \ \vv\ssY = B \myvec{\vu \\ \alpha} + \vb \
\mbox{where} \ \vu \in \bbA \ \mbox{and} \ \alpha \in \bbQ
\end{array}
\right\}
\end{eqnarray*}
\end{itemize}
Let $\bbK$ be the pre-image of $\bbH$ under $t_m$.
Thus, $\vu_{m}\notin \bbK$.
For the the induction hypothesis,
we assume that there is $c_1\cdots c_{m-1}$ such that
\begin{eqnarray*}
(q_1,\vv_{1}) \ \vdash_{t_1,c_1} \
\cdots\cdots \
\vdash_{t_{m},c_{m-1}} \ (q_{m},\vv_{m})
& \mbox{and} & \vv_{m}\notin \bbK,
\end{eqnarray*}
where all the conditions (a)--(e) holds for each $i=1,\ldots,m$.
In particular, $(q_m,\vu_{m})$ and $(q_m,\vv_m)$ have the same ordering.

We will show that there is $c_m$ such that $(q_{m},\vv_{m})\vdash_{t_m,c_m} (q_{m+1},\vv_{m+1})$
such that all the conditions (a)--(e) holds.
There are a few cases.
\begin{itemize}\itemsep=0pt
\item
Case~1: $d_m = \vu_{m}\ssX(j)$ for some $1\leq j\leq k$.

We fix $c_m = \vv_{m}\ssX(j)$.
Since $\vu_m$ and $\vv_m$ have the same ordering,
$\varphi(\vv_m,c_m)$ holds too.
Thus, let $\vv_{m+1}$ be such that
$(q_m,\vv_m)\vdash_{t_m,c_m} (q_{m+1},\vv_{m+1})$.
Since they are obtained by taking the same transition from configurations with the same ordering,
$(q_{m+1},\vv_{m+1})$ and $(q_{m+1},\vu_{m+1})$ have the same ordering, as well.
\item
Case~2: $d_m < \vu_{m}\ssX(j)$, where $\vu_{m}\ssX(j)=\min (\vu_{m}\ssX)$, where $\min (\vu_{m}\ssX)$ is the minimum component of $\vu_{m}\ssX$.

Since $(q_m,\vu_m)$ and $(q_m,\vv_m)$ have the same ordering,
$\varphi(\vv_m,c)$ holds for either $c = \vv_{m}\ssX(j)-1$ or $c = \vv_{m}\ssX(j)-2$.

Now, $\vv_m \notin \bbK$.
Hence, $t_m(\{\vv_m\})\nsubseteq \bbH$.
By Lemma~\ref{lem:affine-space},
for at least one of $c = \vv_{m}(j)-1$ or $c = \vv_{m}(j)-2$,
\begin{eqnarray*}
\vv_{m+1}& \notin & \bbH,
\end{eqnarray*}
where
$\vv_{m+1} \ssX= A \myvec{\vu\ssX\\ c}$
and
$\vv_{m+1}\ssY = B\myvec{\vu\ssX\\ \vu\ssY\\ c} + \vb$.
Let $c_{m+1}$ be such $c$.
Hence, $(q_m,\vv_m)\vdash_{t_m,c_m} (q_{m+1},\vv_{m+1})$,
which also implies
$(q_{m+1},\vv_{m+1})$ and $(q_{m+1},\vu_{m+1})$ having the same ordering.

\item
The other two cases can be proved in a similar manner as in Case~2.
\end{itemize}
This completes our proof of Lemma~\ref{lem:reduction-affine}.

\subsection{Construction of the affine program and application of Karr's algorithm for Theorem~\ref{theo:non-zero-exptime}}
\label{app:proof:theo:non-zero-exptime}

Define the affine program $\cP = (S,s_0, \Delta)$ as follows.
The variables in $\cP$ are $X\cup Y$.
The set $S$ consists of states of the form $(q,\phi)$,
where $q\in Q$ and $\phi$ is an ordering of $X\cup \{\cur\}$.
Let $\phi_0$ be the ordering of $\vu_0$ and $s_0=(q_0,\phi_0)$.
For each transition $t=(p,\varphi(\vx,\cur))\to(q,A,B,\vb)$ in $\delta$,
for each ordering $\phi$ of $X\cup\{\cur\}$ that is consistent
with $t$,
we add the following transitions to $\Delta$.
Note that since $A\in\bbP^{k\times (k+1)}$ simply selects $k$ elements
from $X\cup\{\cur\}$,
we can infer a set of orderings that are consistent
with the result of the application of $A$ on $\phi$.
In the following let $\phi'$ be an ordering that is consistent with
the application of $A$ on $\phi$.
\begin{itemize}\itemsep=0pt
\item
If $\cur$ is bound to $x_i$ in $\phi$,
we add $((p,\phi),T,(q,\phi'))$,
where $T$ represents the reassignment:
\begin{eqnarray*}
\vx := A\myvec {\vx\\ x_i}
& &
\vy := B\myvec {\vx \\ \vy \\ x_i}
\end{eqnarray*}
\item
If $x_i < \cur < x_j$ appears in $\phi$,
we add $((p,\phi),T_1,(q,\phi'))$ and $((p,\phi),T_2,(q,\phi'))$,
where $T_1$ represents the reassignment:
\begin{eqnarray*}
\vx := A\myvec {\vx\\ \frac{1}{3}x_i+\frac{2}{3}x_j}
& &
\vy := B\myvec {\vx \\ \vy \\ \frac{1}{3}x_i+\frac{2}{3}x_j} + \vb
\end{eqnarray*}
and $T_2$ represents the reassignment:
\begin{eqnarray*}
\vx := A\myvec {\vx\\ \frac{2}{3}x_i+\frac{1}{3}x_j}
& &
\vy := B\myvec {\vx \\ \vy \\ \frac{2}{3}x_i+\frac{1}{3}x_j} + \vb
\end{eqnarray*}
\item
If $\cur$ is minimal with $\cur < x_i$ appearing in $\phi$,
we add $((p,\phi),T_1,(q,\phi'))$ and $((p,\phi),T_2,(q,\phi'))$,
where $T_1$ represents the reassignment:
\begin{eqnarray*}
\vx := A\myvec {\vx\\ x_i-1}
& &
\vy := B\myvec {\vx \\ \vy \\ x_i-1} + \vb
\end{eqnarray*}
and $T_2$ represents the reassignment:
\begin{eqnarray*}
\vx := A\myvec {\vx\\ x_i-2}
& &
\vy := B\myvec {\vx \\ \vy \\ x_i-2} + \vb
\end{eqnarray*}
\item
If $\cur$ is maximal with $x_i < \cur$ appearing in $\phi$,
we add $((p,\phi),T_1,(q,\phi'))$ and $((p,\phi),T_2,(q,\phi'))$,
where $T_1$ represents the reassignment:
\begin{eqnarray*}
\vx := A\myvec {\vx\\ x_i+1}
& &
\vy := B\myvec {\vx \\ \vy \\ x_i+1} + \vb
\end{eqnarray*}
and $T_2$ represents the reassignment:
\begin{eqnarray*}
\vx := A\myvec {\vx\\ x_i+2}
& &
\vy := B\myvec {\vx \\ \vy \\ x_i+2} + \vb
\end{eqnarray*}
\end{itemize}

We can apply Karr's algorithm starting
with $V_{(q_0,\phi_0)}=\{\vu_0\}$ and $V_{(q,\phi)}=\emptyset$, for the other $(q,\phi)$.
By Lemma~\ref{lem:reduction-affine},
there is $w$ such that $\cA(w)\neq 0$
if and only if
there is a final state $q_f$ and an ordering $\phi$ such that
$\aff(V_{(q_f,\phi)})\nsubseteq \bbH$,
where $\bbH$ is the space of the solutions of $\zeta(q_f)(\vx,\vy)=0$.

Since there are altogether $2^k(k+1)!$ ordering of $X\cup \{\cur\}$,
$\cP$ has $|Q|2^k(k+1)!$ states with $k+l$ variables.
So, overall our algorithm runs in exponential time.

\subsection{Polynomial space algorithm for non-zero problem with 
constant space assumption for rational numbers in $[-1,1]$}
\label{app:proof:constant-space}

Let $\cA=\langle Q,q_0,F,\vu_0,\delta,\zeta\rangle$ 
be the input $\raq$ over $(X,Y)$,
where $X=\{x_1,\ldots,x_k\}$ and $Y=\{y_1,\ldots,y_l\}$.
Without loss of generality, we can normalize $\cA$ so that
it has the following properties.
\begin{itemize}\itemsep=0pt
\item
There is only one final state $q_F$ whose output function is $\zeta(q_F)=y_1$.
It implies that the output of $\cA$ on a word $w$
is simply the content of the variable $y_1$ when it enters the final state $q_F$.

This can be achieved by adding new transitions that reach $q_F$,
where $y_1$ is assigned the original output function,
and all the variables $x_1,\ldots,x_k,y_2,\ldots,y_l$
are assigned zero.

\item
Every transition in $\delta$ is of the form $(p,\varphi(\vx,\cur))\to (q,A,B,0)$,
i.e., $\vb=0$.

This can achieved by adding new control variables to store the non-zero constants in $\vb$.

\item
We add two new control variables containing $1$ and $2$.

In every transition the content of these two new variables
never change, except in the transition entering the final state $q_F$,
where every variable, except $y_1$, is assigned $0$.
\end{itemize}
The algorithm works non-deterministically as follows.
\begin{enumerate}[(1)]\itemsep=0pt
\item
Guess a positive integer $N \leq |Q|(k+l+1)2^k (k+1)!$.
\item
For each integer $i\leq N$,
compute $d_i$ and $t_i$, $(q_i,\vu_i)$ such that
$(q_{i-1},\vu_{i-1})\vdash_{t_i,d_i} (q_i,\vu_i)$
as follows.

Let $\phi= x_{j_1} \circledast_1 \cdots \circledast_{k-1} x_{j_k}$
be the ordering of $\vu_{i-1}\ssX$,
where each $\circledast_{j}$ is $<$ or $=$.

Guess a transition $t_i=(q_{i-1},\varphi(\vx,\cur))\to (q_i,A,B,0)$ 
that is consistent with the ordering $\phi$ of $\vu_{i-1}$.

Guess $d_i$ according to one of the following cases.
\begin{enumerate}[(a)]\itemsep=0pt
\item
If $\cur$ is bound to some $x_j$ in $\varphi(\vx,\cur)$,
then $d_i = \vu_{i-1}(j)$.
\item
If $\cur < x_{j_1}$ is consistent with $\varphi(\vx,\cur)$,
then either $d_i = \vu_{i-1}(j_1)-1$ or $d_i=\vu_{i-1}(j_1)-2$.
\item
If $\cur > x_{j_k}$ is consistent with $\varphi(\vx,\cur)$,
then either $d_i=\vu_{i-1}(j_k)+1$ or $d_i=\vu_{i-1}(j_k)+2$.
\item
If $x_{j}< \cur < x_{j'}$ is consistent with $\varphi(\vx,\cur)$,
then either $d_i=\frac{1}{3}\vu_{i-1}(j) + \frac{2}{3}\vu_{i-1}(j')$ 
or $d_i=\frac{2}{3}\vu_{i-1}(j) + \frac{1}{3}\vu_{i-1}(j')$ for $d_i$.
\end{enumerate}
Let $\vu_i$ be such that $\vu_{i}\ssX = A \myvec {\vu_{i-1}\ssX\\ d_i}$ 
and $\vu_{i}\ssY = B \myvec{\vu_{i-1} \\ d_i}$.

Divide $\vu_i$ by $\max_j |\vu_{i}(j)|$,
if the absolute values of some components in $\vu_i$ are bigger than $1$.

\item
Verify that $q_N \in F$ and $\zeta(q_N)(\vu_N)\neq 0$.
\end{enumerate}

For the complexity analysis, we assume each rational number between $-1$ and $1$ occupies constant number of bits.
The number $N$ occupies polynomial space.
Moreover, on each $i=1,\ldots,N$, 
the Turing machine only needs to remember the last two configurations 
$(q_{i-1},\vu_{i-1})$ and $(q_i,\vu_i)$.
Hence, the algorithm runs in polynomial space.

The proof of the correctness of our algorithm is as follows.
Dividing each $\vu_i$ by $\max_j |\vu_{i}(j)|$
does not effect the correctness of our algorithm, since $A (\alpha\vu) = \alpha A\vu$,
for every non-zero $\alpha$.
By Theorem~\ref{theo:small-model-snt}, if there is $w \in L(\cA)$ such that 
$\cA(w)\neq 0$, we can assume that $|w|\leq |Q|(k+l+1)2^k (k+1)!$.
The correctness immediately follows from Lemma~\ref{lem:reduction-affine}.

\subsection{Proof of Theorem~\ref{theo:small-model-snt} and the tightness of the bound}
\label{app:proof:theo:small-model-snt}

\paragraph*{Proof of Theorem~\ref{theo:small-model-snt}}
Note that there can be only $2^k(k+1)!$ different orderings of the content of the control variables.
Let $w = d_1\cdots d_N$, where $N > |Q|(k+l+1)2^k (k+1)!$
with its accepting run as follows:
\begin{eqnarray*}
(q_0,\vu_{0}) \ \vdash_{t_1,d_1} \
\cdots\cdots \
\vdash_{t_{N},d_N} \ (q_N,\vu_{N})
& \mbox{and} & q_N\in F,
\end{eqnarray*}
where $\zeta(q_N)(\vu_N)\neq 0$.
Since $N > |Q|(k+l+1)2^k (k+1)!$,
there are at least $k+l+2$ configurations in the run with the same ordering, say $\phi$.
Let $(p,\vv_1),\ldots,(p,\vv_{m+1})$ be such configurations with the ordering $\phi$, where $m\geq k+l+1$.
Let us also denote by $T_0,\ldots,T_{m+1}$ the sequences of transitions such that:
$$
(q_0,\vu_{0}) \ \vdash_{T_0} \
(p,\vv_1) \ \vdash_{T_1} \
\cdots\cdots \
\vdash_{T_m} \ (p,\vv_{m+1}) \
\vdash_{T_{m+1}} \ (q_N,\vu_{N})
$$
Let $\bbA$ be the affine space $\{\vz \mid \zeta(q_N)(\vz)=0\}$,
and let $\bbH$ be the pre-image of $\bbA$ under $T_{m+1}$. Since $\vu_{N} \not \in \bbA$, we have $\vv_{m+1} \not \in \bbH$. 
From this, we deduce that $\dim(\bbH)\leq k+l-1$, so $m\geq dim(\bbH)+2$.
By Lemma~\ref{lem:finite-dim}
there is a set $J=\{j_1,\ldots,j_n\}$ such that $|J|\leq k+l$
and $\vz_1,\ldots,\vz_{n+1}$ such that $\vz_1 = \vv_1$, 
\begin{eqnarray*}
 (q_0,\vu_{0}) \ \vdash_{T_0} \
(p,\vz_1) \ \vdash_{T_{j_1}} \
\cdots\cdots \
\vdash_{T_{j_n}} \ (p,\vz_{n+1}) ,
\end{eqnarray*}
and 
$\vz_{n+1}\notin \bbH$.
Furthermore, because each of $T_1, \dots, T_m$ transforms each configuration $(p, \vu)$ of the ordering $\phi$ to another configuration $(p, \vv)$ with the same ordering,  $\vv_1,\ldots,\vv_{m+1},\vz_1,\ldots,\vz_{n+1}$ have the same orderings.
Since $\bbH$ is the pre-image of $\bbA=\{\vz \mid \zeta(q_N)(\vz)=0\}$ under $T_{m+1}$, 
we have:
$$
(q_0,\vu_{0}) \ \vdash_{T_0} \
(p,\vz_1) \ \vdash_{T_{j_1}} \
\cdots\cdots \
\vdash_{T_{j_n}} \ (p,\vz_{n+1}) \
\vdash_{T_{m+1}} \ (q_N,\va) 
$$
for some $\va \notin \bbA$, and therefore, $\zeta(q_N)(\va)\neq 0$.
This completes our proof.

\paragraph*{Tightness of the exponential bound}

The exponential bound in Theorem~\ref{theo:small-model-snt} above is tight.
The idea is almost the same as the $\raq$ that represents the number $p^n$ in Section~\ref{sec:snt}.
Consider the following $\raq$ $\cA=\langle Q,q_0,F,\vu_0,\delta,\zeta\rangle$
over $(X,Y)$, where $X=\{x_1,\ldots,x_k,x_{k+1},x_{k+2}\}$, $Y=\{y_1,\ldots,y_l\}$,
$Q=\{q_0,\ldots,q_n\}$, $F=\{q_n\}$ and the transitions
$t_{i,j}$, where $0\leq i \leq n$ and $0\leq j \leq k$, as illustrated below.

\begin{center}

\resizebox{0.7\linewidth}{!}{
\begin{tikzpicture}[->,auto,node distance=2cm]

\node[state,initial,initial where=left,inner sep=5pt,minimum size=0pt] (q0) {\small $q_0$};
\node[state,inner sep=5pt,minimum size=0pt] (q1) [right=of q0] {\small $q_1$};
\node (p) [right=of q1] {$\cdots\cdots$};
\node[state,inner sep=2pt,minimum size=0pt] (qn1) [right=of p] {\small $q_{n-1}$};
\node[state,accepting,inner sep=5pt,minimum size=0pt] (qn) [right=of qn1] {\small $q_{n}$};


\path[->] 
(q0) edge node {\footnotesize $t_{0,0}$} (q1);

\path[->] 
(qn1) edge node {\footnotesize $t_{n-1,0}$} (qn);

\path[->] 
(qn) edge [bend left] node [above] {\footnotesize $t_{n,0}$} (q0);

\path[->,every loop/.append style={looseness=15,loop above}]
(q0) edge [loop] node  {\footnotesize $t_{0,1},\ldots,t_{0,k}$} (q0);

\path[->,every loop/.append style={looseness=15,loop above}]
(q1) edge [loop above] node [bend left] {\footnotesize $t_{1,1},\ldots,t_{1,k}$} (q1);

\path[->,every loop/.append style={looseness=15,loop above}]
(qn1) edge [loop above] node [bend left] {\footnotesize $t_{n-1,1},\ldots,t_{n-1,k}$} (qn1);

\path[->,every loop/.append style={looseness=15,loop above}]
(qn) edge [loop above] node [bend left] {\footnotesize $t_{n,1},\ldots,t_{n,k}$} (qn);

\end{tikzpicture}}
\end{center}
During the computation, the contents of the variables $x_{k+1}$ and $x_{k+2}$ never change,
and they contain $0$ and $1$, respectively.
\begin{itemize}\itemsep=0pt
\item
The initial content of the control variables $(x_1,\ldots,x_k,x_{k+1},x_{k+2})=(0,\ldots,0,0,1)$, i.e.,
all $x_i$'s are initially $0$, except $x_{k+2}$ which is initially $1$.
\item
The initial content of the data variables $(y_1,y_2,\ldots,y_l)=(1,0,\ldots,0)$, i.e.,
$y_1$ is initially $1$, and the rest are initially $0$. 
\item
The function $\zeta(q_n)= y_l$, i.e., the output function associated with $q_n$ is the function $g(\vx,\vy)= y_l$.

\item
For each $0\leq i \leq n$, the transition $t_{i,0}$ is
$$
(q_i,\varphi_k) \to 
(q_{(i+1) \bmod (n+1)},A_{i,0},B,0),
$$ where 
\begin{itemize}\itemsep=0pt
\item
the guard $\varphi_k$ is $\bigwedge_{h=1}^{k} x_h=1$;
\item
the matrix $A_{i,0}$ represents the assignment $\{x_1:=0;\ldots;x_k:= 0\}$;
\item
the matrix $B$ represents the ``shift right'' of the content of the data variables,
i.e., $\{y_1:= y_l;y_2:= y_1;\ldots;y_l:=y_{l-1}\}$.
\end{itemize}
Here the equality and assignment of the variables with $0$ and $1$
can be done via the variables $x_{k+1}$ and $x_{k+2}$,
whose contents are always $0$ and $1$, respectively.

\item
For each $0\leq i \leq n$ and $1\leq j \leq k$, the transition $t_{i,j}$ is
\begin{eqnarray*}
(q_i,\varphi_j) & \to &
(q_{(i+1) \bmod (n+1)},A_{i,j},B,0),
\end{eqnarray*}
where
\begin{itemize}\itemsep=0pt
\item
the guard $\varphi_j$ is $x_j=0 \ \wedge \ \bigwedge_{h=1}^{j-1} x_h=1$;
\item
the matrix $A_{i,j}$ represents the assignment $\{x_1:=0;\ldots,x_{j-1}:= 0;x_{j}:= 1\}$;
\item
the matrix $B$ represents the assignment $\{y_1:= y_1, \ldots, y_l:= y_l\}$,
i.e., it does not change the content of the data variables $y_1,\ldots,y_l$.
\end{itemize}
Again, the equality and assignment of the variables with $0$ and $1$
can be done via the variables $x_{k+1}$ and $x_{k+2}$,
whose contents are always $0$ and $1$, respectively.

\end{itemize}
If $(n+1)$ and $l$ are coprime,
then the length of any path from the initial configuration $(q_0,\vu_0)$ to an accepting configuration
that outputs non-zero is a multiple of $l(n+1)2^k$.


\subsection{Proof of Lemma~\ref{lem:forgetful-raq}}
\label{app:proof:lem:forgetful-raq}

The proof is similar to the proof of Lemma~\ref{lem:reduction-affine} in
Appendix~\ref{app:proof:lem:reduction-affine}.

The idea is that if $(q_{j},\vu_{j})\vdash_{t_j,d_j} (q_{j+1},\vu_{j+1})$, where $\vu_{j+1}\notin\bbH$
and $d_j$ does not appear in $\vu_{j}\ssX$,
then there are infinitely many other values $c$ such that
$(q_{j},\vu_{j})\vdash_{t_j,c} (q_{j+1},\vv)$, where $\vv \notin \bbH$.
Indeed, if $d_j$ does not appear in $\vu_{j}\ssX$,
by the density of rational numbers,
there are infinitely many values $c$ for $\cur$ such that
the guard of $t_j$ is satisfied.
By Lemma~\ref{lem:affine-space},
for all, but one, such $c$,
we have $(q_{j},\vu_{j})\vdash_{t_j,c} (q_{j+1},\vv)$, where $\vv \notin \bbH$.

\subsection{The $\pspace$-hardness reductions}
\label{app:hardness-det-ra}

In this appendix, we will establish all the $\pspace$-hardness mentioned in this paper,
i.e., for the following problems.
\begin{itemize}\itemsep=0pt
\item
The non-zero, equivalence, and commutativity problems for copyless $\raq$.
\item
The equivalence and commutativity problems for deterministic RA.
\end{itemize}
It has been shown that the non-emptiness problem
for deterministic standard RA is $\pspace$-complete~\cite{DemriL09}.
All of our reductions either trivially follow their reduction,
or are simply slight modifications of theirs.

For clarity and completeness, we will repeat their reduction here,
and then present our reductions.
It is from the following $\pspace$-complete problem.
Let $c>0$ be a constant.
{\em On input $\lfloor \cM\rfloor$ (a Turing machine description)
and a word $w \in \{0,1\}^*$,
decide whether $\cM$ accepts $w$ using at most $c|w|$ space.}

\paragraph*{Reduction~1: To the non-emptiness of deterministic standard RA~\cite{DemriL09}}
Construct an RA $\cA$ with $c|w|+3$ registers.
The first three registers contain $0$, $1$ and $\sqcup$, respectively,
where $\sqcup$ represents the blank symbol in Turing machine.
The next $|w|$ registers contain the bits of $w$.
Define the states of $\cA$ as pairs $(p,j)$,
where $p$ is a state of $\cM$ and $1\leq j \leq c|w|$ indicating
the position of the head of $\cM$.
On state $(p,j)$, $\cA$ can use transitions
to check whether the content of register $j+3$ is one of $0$, $1$ or $\sqcup$
by comparing its content with the first three registers
and simulate the transitions of $\cM$.
The final states are those $(p,j)$, where $p$ is the accepting state of $\cM$.
Thus, $\cM$ accepts $w$ using at most $c|w|$ space
if and only if $\cA$ can reach one of its final states.

\paragraph*{Reduction~2: To the equivalence problem for deterministic standard RA}
Note that an RA accepts some word
iff it is not equivalent to a trivial RA that accepts nothing.
Since $\pspace$ is closed under complement, the hardness follows.

\paragraph*{Reduction~3: To the commutativity problem for deterministic standard RA}
Indeed, we can modify the constructed RA $\cA$ in reduction~1
such that it enforces the first two values are $0$
and the subsequent values are all different from the $0$.
Obviously, if the machine $\cM$ accepts $w$ using at most $c|w|$ space,
then $\cA$ accepts some word, and $\cA$ is not commutative,
since it requires the first two values are $0$ and the rest are non-zero.
On the other hand, if $\cM$ does not accept $w$ using at most $c|w|$ space,
$\cA$ accepts nothing, which implies that $\cA$ is commutative.
This completes the reduction.

\paragraph*{Reduction~4: To the non-zero problem for copyless $\raq$}
Standard RA are copyless $\raq$ that outputs constant 1 in the final states.
So, it follows from reduction~1.

\paragraph*{Reduction~5: To the equivalence problem for copyless $\raq$}
Like in reduction~2, this follows from the fact that $\pspace$ is closed under complement.
An $\raq$ $\cA$ outputs a non-zero value for some input word
iff it is not equivalent to a trivial $\raq$ that always outputs 0. 

\paragraph*{Reduction~6: To the commutativity problem for copyless $\raq$}
Standard RA are copyless $\raq$ that outputs constant 1 in the final states.
So, it follows from reduction~3.


\subsection{Proof of Theorem~\ref{thm:raq-undecidable}}
\label{app:proof:thm:raq-undecidable}

First, both the commutativity and equivalence problems of non-deterministic RA
can be reduced to the corresponding problems of single-valued $\raq$, because
non-deterministic RA is a special case of single-valued $\raq$. 

Second, the equivalence/universality problem of non-deterministic register
automata has been proved to be undecidable in Theorem 5.1 of~\cite{NSVianu04},
by constructing an RA that accepts the set of words that encode all
non-solutions to a PCP instance.
Thus, if the PCP has a solution, the RA is not commutative,
because the permutations of the solution are simply arbitrary string.
Likewise, if the RA is not commutative, hence, not universal,
the PCP has a solution.
Therefore, the PCP instance has a solution if and only if the constructed RA is not commutative.

\subsection{Proof of Theorem~\ref{thm:detraq}}
\label{app:proof:thm:detraq}

\vspace{-0.0mm}
\subsection*{From invariant to non-zero}
\vspace{-0.0mm}
Note that the Karp reductions between the invariant problem and the non-zero problem
from Section~\ref{sec:non-zero} cannot be used,
because they construct nondeterministic~$\raq$.
Therefore, we modify the said reductions into the following Cook reductions
that preserve determinism.

Given an $\raq$ $\cA$, a state $q$, and $\bbH = \va + \bbV$, similarly to the reduction from
Section~\ref{sec:non-zero}, we compute a~basis $\{\vv_1, \ldots, \vv_m\}$ of the
orthogonal complement of $\bbV$ and transform $\cA$ into $m$~new $\raq$ $\cA_1,
\ldots, \cA_m$, all having a~single final state~$q$, such that the output
function of $\cA_i$ on $q$ is $(\myvec{\vx\\ \vy} - \va)\dotprod \vv_i$.
Each of these automata is then tested for the non-zero problem.

\vspace{-0.0mm}
\subsection*{From non-zero to invariant}
\vspace{-0.0mm}
For an~$\raq$ $\cA$ with final states $\{q_1, \ldots, q_n\}$ and output
function~$\zeta$, we test for every $q_i$ whether
$\vAq{\cA}{q_i} \subseteq \bbH_i$,
where $\bbH_i$ is the space of solutions of the system
$\zeta(q_i)(\vx,\vy)=0$.

\subsection*{From equivalence to non-zero}
Let $\cA_1 =\langle Q_i,q_{0,1},F_1,\vu_{0,1},\delta_1,\zeta_1\rangle$ over $(X_1,Y_1)$
and $\cA_2 =\langle Q_2,q_{0,2},F_2,\vu_{0,2},\delta_2,\zeta_2\rangle$ over $(X_2,Y_2)$
be deterministic $\raq$. Here we use a function instead of a vector to represent the initial content of variables to ease presentation.
Assume w.l.o.g. that the sets of variables in $\cA_1$ and $\cA_2$ are disjoint.

First, we make each $\cA_i$ complete.
This can be done in a~linear time by adding a new sink state without affecting the semantics of $\cA_i$.
Next,
we construct a~new $\raq$ $\cA$ over $(X_1\cup X_2, Y_1\cup Y_2)$ such that
$\cA_1$ and $\cA_2$ are equivalent iff $\cA(w)=0$ for all $w$.
The construction is the standard product construction.
Note that $\cA$ is deterministic.
\begin{itemize}\itemsep=0pt
\item
The set of states is $Q_1\times Q_2$.
\item
The initial state is $(q_{0,1},q_{0,2})$.
\item
The set of final states is $Q_1\times F_2 \cup  F_1 \times Q_2$.
\item
The initial content of the variables is the function $\{z\mapsto \vu_{0,1}(z)\mid z\in X_1 \cup Y_1\} \cup \{z\mapsto \vu_{0,2}(z)\mid z\in X_2 \cup Y_2\}$.

\item
The set of transitions $\delta$ consists of the following.

For each pair of transitions $t_1,t_2$,
where $t_1=(p_1,g_1) \rightarrow (q_1, M_1) \in \delta_1$ and
$t_2=(p_2,g_2) \rightarrow (q_2, M_2) \in \delta_2$, and $M_1,M_2$ represent the reassignments,
we add a new transition 
$((p_1,p_2),g_1\wedge g_2) \rightarrow ((q_1,q_2), M_1\cup M_2)$ in $\delta$.
\item
The output function on state $(q_1,q_2)\in F_1\times F_2$ is 
$\zeta_1(q_1)-\zeta_2(q_2)$. 
Otherwise, for pairs $(q_1,q_2)$, where one of them is non-final, 
the output is the constant $1$. 
\end{itemize}

\subsection*{From non-zero to commutativity}
Let $\cA$ be an $\raq$. We first apply the following changes to $\cA$:
\begin{itemize}
	\item we make $\cA$ complete (see the previous reduction) and
	\item we change all non-final states in $\cA$ into final states
	that output the constant $0$.
\end{itemize}
Both changes can be done in a~linear time and will not affect the result of the non-zero problem on $\cA$. Now we have $|\cA(w)| =1 $ for all $w$, i.e., the output of $\cA$ is ``defined'' on all inputs $w$. 

The idea is as follows.
We construct an $\raq$ $\cA'$
such that $\cA'$ outputs the following.
\begin{itemize}\itemsep=0pt
	\item
	For a~word $v$ where the first and second values are $1$ and $2$, respectively,
	i.e., $v= 12 w$ for some $w$, we define
	$\cA'(v)=\cA(w)$.
	\item
	For all the other words, $\cA'$ outputs $0$.
\end{itemize}
We construct $\cA$ in a~linear time
by adding two new states that check the first two input values and
a new final state that outputs the constant $0$ for words failed the check, i.e., those that do not begin with $12$.

Below we show that there is a~word $w$ such that $\cA(w)\neq 0$ if and only if
$\cA'$ is not commutative.

\begin{itemize}
	\item Assume there is $w$ such that $\cA(w)\neq 0$: Then
	$\cA'(12w)=\cA(w)\neq 0$. By definition, $\cA'(21w)=0$ and
	$21w\in perm(12w)$. It follows that $\cA'$ is not commutative.
	\item Assume that $\cA(w)= 0$ for all $w$:  
	\begin{itemize}
		\item For any word $v$ of the form
		$12w$, it holds that $\cA'(v)=\cA'(12w)=\cA(w)=0$. 
		\item For any word $v$ not in form of
		$12w$, it holds that $\cA'(v)=0$ (by definition).
	\end{itemize}
	It follows that $\cA'$ outputs $0$ on all inputs and, hence, $\cA'$ is commutative.
\end{itemize}

\subsection*{From commutativity to equivalence}
Let $\cA=\langle Q,q_0,F, \vu_0,\delta,\zeta\rangle$ be an $\raq$ over $X\cup Y$.
(Again we use a function instead of a vector to represent the initial contents of variables.)
W.l.o.g we assume that $\cA(w)=\emptyset$ for all $w$ such that $|w|\leq 1$ (commutativity of words of lengths zero or one is trivial).
We show the construction of deterministic $\raq$ $\cA_1$ and $\cA_2$
such that, for every word $w$, $\cA_1(w)=\cA(\pi_2(w))$ and $\cA_2(w)=\cA(\pi_*(w))$.

The construction of $\cA_1$ is straightforward, thus omitted.
We focus on the construction of $\cA_2$.
First, we make a copy $X'$ of $X$, and $Y'$ of $Y$.
That is, for every $x\in X$, we have its corresponding primed version $x' \in X'$.
Likewise for every $y \in Y$. 

Assuming that the states in $Q$ are of the form $q_i$,
define a set $F'=\{q_{(i\to j)}\mid q_i\in Q,q_j\in F\}$.
Intuitively, the set $F'$ has a copy of each state $q_i$ for each final state $q_j$.

We construct an $\raq$ $\cA_2 = \langle Q_2,q_{0,2},F_2,\vu_{0,2},\delta_2,\zeta_2\rangle$
over $(X_2,Y_2)$, where the components of $\cA_2$ and $(X_2, Y_2)$ are defined as follows:
\begin{itemize}\itemsep=0pt
\item
$X_2 = X \cup X' \cup \{x_t\}$, where $x_t$ is a new control variable,
and $Y_2 = Y \cup Y'$.
\item
$Q_2 = Q\cup F' \cup \{q_{0,2}\}$, where $q_{0,2}$ is a new state and $F'$ is as defined above.
\item
The new state $q_{0,2}$ is the initial state.
\item
$F_2 = F'$ as defined above is the set of final states.
\item
$\vu_{0,2}= \{z\mapsto \vu_0(z) \mid z\in X \cup Y\} \cup \{z \mapsto 0 \mid z\in X'\cup Y'\cup \{x_t\}\}$.
\item
$\zeta_2$ is defined as follows.

For all $q_{(i\to j)} \in F'$, the output function $\zeta_2(q_{(i\to j)})$ 
is the same as $\zeta(q_j)$ but substituting all the variables with their primed versions. 
For example, if $\zeta(q_j)=x_1+3y_3$ then $\zeta_2(q_{(i\to j)})=x_1'+3y_3'$.
\end{itemize}

Before the construction of the transition rules $\delta_2$, let us first explain the intuition behind.
\begin{itemize}
\item For each state $q_i$ in $\cA$, there are $|F|+1$ copies in $\cA_2$, that is, the state $q_i$, and the states $q_{i,j}$ with $q_j \in F$. 
\item For each state $q_{i,j}$ in $\cA_2$ and every two transitions $t_1,t_2$ in $\cA$ from some state $q_{i_0}$ to $q_i$ and from $q_i$ to $q_j$ respectively,  $\cA_2$ includes a transition $t'$ from $q_{i_0}$ to $q_{i,j}$ which reassigns $\vx'$ and $\vy'$ with the new values of $\vx$ and $\vy$ after the two transitions (that is, $t_1$ and $t_2$), while still reassigns $\vx$ and $\vy$ with their values after $t_1$ only. 
\item The only purpose of the variables $\vx'$ and $\vy'$ is to be used in the output. (Recall that $\zeta(q_{i,j})$ is defined as $\zeta(q_j)$, with $\vx$ and $\vy$ substituted by $\vx'$ and $\vy'$ respectively).
\item In addition, for each state $q_{i_1}$ in $\cA$ and each copy of $q_{i_1}$ in $\cA_2$, say $q_{i_1, j}$ for instance, each transition $t=(q_{i_1}, \varphi(\vx,\cur))\to(q_{i_2},A,B,\vb)$ in $\cA$ is split into multiple transitions of $\cA_2$ out of $q_{i_1, j}$, by splitting the  guard $\varphi(\vx,\cur)$ into mutually disjoint constraints, in order to make sure that $\cA_2$ is still deterministic.
\end{itemize}

The transitions in $\delta_2$ are defined as follows.
\begin{enumerate}[{\bf [T1]}]\itemsep=0pt
	\item $\delta_2$ contains the following transition to store the first $\cur$ in $x_t$
	\begin{eqnarray*}
		(q_{0,2},
		\ltrue) & \to & (q_0, \{x_t:=\cur\})
	\end{eqnarray*}
	\item For each pair of transitions 
	$t_1= (q_{i_1},\varphi(\vx,\cur))\to(q_{i_2},A_1,B_1,\vb_1), t_2=(q_{i_2},\varphi'(\vx,\cur))\to(q_{i_3},A_2,B_2,\vb_2) \in \delta$ such that $q_{i_3}\in F$, $\delta_2$ contains the following transitions to summarize the effect of $t_1$ and $t_2$:
	\begin{eqnarray*}
		(q_{i_1}, \left(\varphi(\vx,\cur)\wedge\varphi'(A_1 \myvec{\vx\\ \cur},x_t)\right)) & \to & (q_{(i_2\to i_3)},M)\\
		\forall q_j \in F, (q_{(i_1\to j)}, \left(\varphi(\vx,\cur)\wedge\varphi'(A_1\myvec{\vx\\ \cur},x_t)\right)) & \to & (q_{(i_2\to i_3)},M),
	\end{eqnarray*}	
	
	where $M$ encodes 
	$\vx:=A_1\myvec{\vx\\ \cur}$, 
	$\vy:=B_1\myvec{\vx \\ \vy \\ \cur}+\vb_1$, 
	$\vx':=A_2 \myvec{A_1\myvec{\vx\\ \cur}\\ x_t}$, 
	$\vy':=  B_2\myvec{A_1\myvec{\vx\\ \cur}\\ B_1\myvec{\vx \\ \vy \\ \cur}+\vb_1\\ x_t}+\vb_2$, and 
	$x_t:=x_t$.
	
	Intuitively, $M$ updates variables in $X\cup Y$ in the same way as $t_1$ does, updates variables in $X'\cup Y'$ to summarize the effect of $t_1$ and $t_2$ using $\cur$ as the first and $x_t$ as the second input symbol. The summarization is achievable by Proposition~\ref{prop:linear-raq}.

    \item We define $T_{q,F} \subseteq \delta$ as the set of transitions starting from $q$ and ending at a final state $F$.
    For each transition $t_1= (q_{i_1},\varphi(\vx,\cur))\to(q_{i_2},A,B,\vb)$, $\delta_2$ contains the following transitions
	\begin{eqnarray*}
		(q_{i_1},\varphi''(\vx, x_t,\cur)) & \to &(q_{i_2},M)\\
		\forall q_j \in F, (q_{(i_1\to j)},\varphi''(\vx, x_t,\cur))& \to &(q_{i_2},M),
	\end{eqnarray*}
   where $\varphi''(\vx, x_t,\cur)$ is a predicate defines as  $$\varphi(\vx,\cur)\wedge\bigwedge
   \{\neg\varphi'(A\myvec{\vx\\\cur},x_t)\mid 
   \varphi'(\vx,\cur) \mbox{ is the guard of a transition in } T_{q_{i_2},F}\},$$ 
   and $M$ encodes  $\vx:=A\myvec{\vx\\ \cur}$, 
   $\vy:=B\myvec{\vx \\ \vy \\ \cur}+\vb$, 
   $\vx':=\vx'$, 
   $\vy':= \vy'$, and 
   $x_t:=x_t$.
   
   Intuitively, the guard enforces that $\cA_2$ can take the {\bf [T3]}-type transition from a configuration $(q_{i_1}, \vu)$ only when all {\bf [T2]}-type transitions from $q_{i_1}$ cannot be taken.
   The variables in $X\cup Y$ are updated in the same way as $t_1$ does and the values of other variables remain unchanged.
\end{enumerate}

Observe that if we reset the initial state of $\cA_2$ to $q_0$, reset the final states of $\cA_2$
to $F$, and use $\zeta$ as the output function, then $\cA_2$ and $\cA$ are equivalent.

\hide{
\subsection{Results on the coverability problem}
\label{app:cov}

\begin{theorem}
	The invariant problem of $\raq$ can be reduced to the coverability problem of $\raq$, which can be further reduced to the reachability problem of $\raq$. Both reductions are in polynomial-time.
\end{theorem}

Let $\cA$ be an $\raq$.
The first reduction is done by creating (1) an $\raq$ $\cA'$ such that $\exists v \in\cA(w): v\geq 0$ for some $w$ iff $\exists v \in\cA'(w): v> 0$ for some $w$ by adding an arbitrary positive value to the output of $\cA$, e.g., add a $\cur>0$ and
(2) another $\raq$ $\cA''$ by negating all output expressions of $\cA'$. The answer to the non-zero problem of $\cA$ is positive iff the answers to the coverability problems of $\cA'$ and $\cA''$ are both positive.
The second reduction is done by adding a transition $(q,\cur\geq 0)\rightarrow (q', y_1:=\zeta(q)-\cur)$ from all final states $q$ to a new state $q'$ for some data variable $y_1$ and setting $q'$ as the only new final state with the output expression $y$.

\begin{corollary}
	The coverability problem of copyless $\raq$s is in \nexptime. 
\end{corollary}

\subsection{Results on configuration reachability and coverability}
\label{app:conf-cov-rea}

For Petri-net or VASS, people are also interested in configuration reachability and configuration coverability problems. The corresponding problems in $\raq$ are defined as follows.
Given an $\raq$ $\cA=\langle Q,q_0,F,\vu_0,\delta,\zeta\rangle$ and a configuration $(q_n,\vu_n)$, the configuration reachability problem asks if $(q_0,\vu_0)\vdash^{\ast} (q_n,\vu_n)$ and the configuration coverability problem asks if 
$(q_0,\vu_0)\vdash^{\ast} (q_n,\vu'_n)$ and $\vu'_n \geq \vu_n$. 
We show that the two problems in $\raq$ are inter-reducible in polynomial-time and they are not easier than the reachability problem, which is undecidable (Theorem~\ref{thm-reach-und}).

\begin{lemma}
	The configuration reachability problem of $\raq$ can be reduced to the configuration coverability problem of $\raq$, and vice versa. 
\end{lemma}
Let $\cA$ be an $\raq$ over $(X,Y)$ and the two problem are targeting the configuration $(q,\vv)$. 
The first reduction is done by creating an $\raq$ $\cA'$ over $(X\cup X', Y\cup Y')$ such that $X'$ and $Y'$ compute the negation of $X$ and $Y$, using a similar construction of the proof of Lemma 2 in~\cite{Haase14}. The reverse direction is done by adding the transition $(q, \bigwedge_{x_i\in X} x=\vv(X)[i])\rightarrow(q', \{\})$ and for all $y\in Y$ the transition $(q',\cur\geq 0) \rightarrow (q', \{y:=y-cur\})$ and targeting the configuration $(q',\vv)$ instead.

\begin{lemma}
	The coverability and reachability problems of $\raq$ can be reduced to the configuration coverability and reachability problems of $\raq$, respectively.
\end{lemma} 
The reduction is done by adding a transition with the assignment $y:=\zeta(q)$ from all final states $q$ to a new state $q'$ and use the configuration $(q',0)$ in the corresponding configuration coverability and reachability problems.
}

\vspace{-0.0mm}
\subsection{Proof of Theorem~\ref{thm-reach-und}}\label{app:reachability-undec}
\vspace{-0.0mm}

We show undecidability of the reachability problem
by a~reduction from the Post correspondence problem.
Let $\cP = \{(u_i, v_i)\}_{1 \leq i \leq n}$ be a~set of pairs of
sequences over the alphabet $\Sigma = \{1, \ldots, b-1\}$ for some base~$b$,
i.e., for all $1 \leq j \leq n$ it holds that $u_i, v_i \in \Sigma^*$.
The \emph{Post correspondence problem} (PCP) asks whether there exists a
sequence $i_1 \cdots i_k \in \{1, \ldots, n\}^+$ of indices such that
$u_{i_1} \cdots u_{i_k} = v_{i_1} \cdots v_{i_k}$.
The PCP is well known to be undecidable~\cite{Post46}.

Given $\cP$, we build a~deterministic $\raq$ $\cA_{\cP} = \langle
\{q_0,q_1\},q_0,\{q_1\},\vu_0,\delta,\zeta\rangle$ over $X=\emptyset$ and $Y= \{y_1, y_2\}$, where $\vu_0
= \myvec{0,0}$ (i.e., the initial configuration of registers is $y_1 = 0$ and
$y_2 = 0$), the output function is $\zeta(q_1) = y_1 - y_2$, and $\delta$ is
constructed as described later.
The reduction is based on treating strings over~$\Sigma$ as natural numbers in
base-$b$ encoding and representing concatenation of strings using
linear expressions, e.g., if we assume $b=10$, the concatenation of strings $1.2.3
\in \Sigma^*$ and $4.5.6 \in \Sigma^*$ can be represented using natural numbers
as $123 \cdot 10^3 + 456 = 123456$.
In particular, for every $1 \leq i \leq n$, we translate the pair $(u_i,
v_i) \in \cP$ into the pair of transitions
\begin{align*}
&(q_0, (\cur = i)) \to (q_1, \myvec{\omega_1; \omega_2;}) \in \delta \\
&(q_1, (\cur = i)) \to (q_1, \myvec{\omega_1; \omega_2;}) \in \delta,
\end{align*}
such that, if $u_i = g_1 \cdots g_r$ and
$v_i = h_1 \cdots h_s$, the term $\omega_1$ is the assignment $y_1
:= b^r \cdot y_1 + e(g_1 \cdots g_r)$ and $\omega_2$ is the assignment $y_2 := b^s
\cdot y_2 + e(h_1 \cdots h_s)$ where we use $e(g_1 \cdots g_r)$ to represent the
number $g_1 b^{r-1} + \ldots + g_r b^{0}$ (similarly for $e(h_1 \cdots
h_s)$); for instance, for $b = 5$, we have $e(4 . 2) = 42_5$
(note that $e: \Sigma^* \to \bbN$).
In Figure~\ref{fig:pcp-example}, we show an example of an encoding of a~PCP
instance into an~SNT.
The transitions from $q_0$ to $q_1$ are necessary in order to
``bootstrap'' the computation (a~solution to a~PCP cannot be an~empty sequence
of indices).
The crucial property of~$\cA_\cP$ is that for a~sequence $i_1 \cdots i_k \in
\{1, \ldots, n\}^+$, it holds that $u_{i_1} \cdots u_{i_k} =
v_{i_1} \cdots v_{i_k}$ (i.e., $i_1 \cdots i_k$ is a~solution of~$\cP$) iff
$\cA_{\cP}(i_1 \cdots i_k) = 0$, which we prove in the following.

\begin{figure}[t]
\begin{center}
\usetikzlibrary{arrows}
\usetikzlibrary{automata}

\begin{tikzpicture}
[->,>=stealth']
\tikzstyle{state}=[circle,draw,inner sep=0.8mm]
\tikzstyle{trans}=[scale=0.7]

\node (init) {};
\node[state] (q0) [right of=init] {$q_0$};
\node[state,accepting] (q1) [right of=q0,node distance=85mm] {$q_1$};
\node (output) [right of=q1] {};

\draw (init) edge (q0);

\draw (q1) .. controls +(10mm,10mm) and +(-10mm,10mm) .. node[above,trans] {$\begin{array}{c}\cur{} = 1 \\\hline \\[-4mm] \left[\begin{array}{lrlr}y_1 := &100 \cdot y_1 &+& 12;\\ y_2 := &10 \cdot y_2 &+ &1;\end{array}\right]\end{array}$} (q1);

\draw (q1) .. controls +(10mm,-10mm) and +(-10mm,-10mm) .. node[below,trans] {$\begin{array}{c}\cur{} = 2 \\\hline \\[-4mm] \left[\begin{array}{lrlr}y_1 := &10 \cdot y_1 &+& 3;\\ y_2 := &100 \cdot y_2 &+ &23;\end{array}\right]\end{array}$} (q1);

\draw (q0) edge node[above,trans] {$\begin{array}{c}\cur{} = 1 \\\hline \\[-4mm] \left[\begin{array}{lrlr}y_1 := &100 \cdot y_1 &+& 12;\\ y_2 := &10 \cdot y_2 &+ &1;\end{array}\right]\end{array}$} node[below,trans] {$\begin{array}{c}\cur{} = 2 \\\hline \\[-4mm] \left[\begin{array}{lrlr}y_1 := &10 \cdot y_1 &+& 3;\\ y_2 := &100 \cdot y_2 &+ &23;\end{array}\right]\end{array}$} (q1);

\draw (q1) edge node[below,xshift=7mm,trans] {$\zeta = y_1 - y_2$} (output);

\end{tikzpicture}

\end{center}
\caption{An example of our encoding of the PCP instance $\{(1.2, 1),
  (3, 2.3)\}$ over the alphabet $\{1, \ldots, 9\}$ into an SNT (all
  numbers are given in base-4).}
\label{fig:pcp-example}
\end{figure}
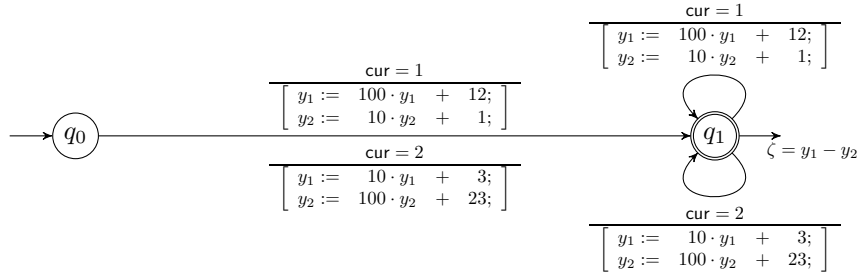

First, we want to show that after reading $i_1 \cdots i_l$, for $l \geq 1$ and
$\forall 1 \leq k \leq l: 1 \leq i_k \leq n$, $\cA$ is in th state $q_1$ and the
content of the variable~$y_1$ is $e(u_{i_1}\,.\,\cdots \,.\, u_{i_l})$ and the
content of the variable~$y_2$ is $e(v_{i_1}\,.\, \cdots\,.\,v_{i_l})$.
We show the previous by induction on the length of $i_1 \cdots i_l$.
For $l = 1$, after reading $i_1$, $\cA_{\cP}$ will move from the state~$q_0$ to
the state~$q_1$, taking the transition $(q_0, (\cur = i_1)) \to (q_1,
\myvec{\omega_1; \omega_2;}) \in \delta$, which sets the new content of the
variable~$y_1$ to $0 + e(u_{i_1})$ and the content of the variable $y_2$ to
$0 + e(v_{i_1})$.
For the inductive step, assume that $\cA_{\cP}$ is after reading $i_1 \cdots
i_l$ in the state $q_1$, the content of the registers $y_1$ and $y_2$ being
$e(u_{i_1}\,.\,\cdots \,.\, u_{i_l})$ and $e(v_{i_1}\,.\,\cdots
\,.\, v_{i_l})$ respectively.
If $\cA_{\cP}$ now reads the symbol $i_{l+1}$, it takes the transition $(q_1,
(\cur = i_{l+1})) \to (q_1, \myvec{\omega_1; \omega_2;}) \in \delta$
where~$\omega_1$ is of the form $y_1 := y_1 b^r + e(u_{i_{l+1}} = g_1
\cdots g_r)$ and $\omega_2$ is of the form $y_2 := y_2 b^s + e(v_{i_{l+1}} =
h_1 \cdots h_s)$.
The new value of the variable~$y_1$ will therefore be
$e(u_{i_1}\,.\,\cdots \,.\, u_{i_l}) \cdot b^r + e(g_1 \cdots
g_r) = e(u_{i_1}\,.\,\cdots \,.\, u_{i_l}\,.\, u_{i_{l+1}})$,
the new value of the variable~$y_2$ will be
$e(v_{i_1}\,.\,\cdots \,.\, v_{i_l}) \cdot b^s + e(h_1 \cdots
h_s) = e(v_{i_1}\,.\,\cdots \,.\, v_{i_l}\,.\, v_{i_{l+1}})$,
and the next state will be~$q_1$, which concludes this part of the proof.

Based on the previous paragraph and the fact that the only output of $\cA_{\cP}$
is in the state~$q_1$ and has the value computed as $y_1 - y_2$, we infer that
$$
\cA_{\cP}(i_1 \cdots i_k) = e(u_{i_1}\,.\,\cdots \,.\, u_{i_k}) -
e(v_{i_1}\,.\,\cdots \,.\, v_{i_k}).
$$
What remains to show is the following:
\begin{equation*}
u_{i_1}\,.\,\cdots \,.\, u_{i_k} = v_{i_1}\,.\,\cdots \,.\, v_{i_k}
\qquad \text{iff}
\qquad
e(u_{i_1}\,.\,\cdots \,.\, u_{i_k}) = e(v_{i_1}\,.\,\cdots \,.\, v_{i_k}) ,
\end{equation*}
which holds because $e$~is a~bijection.
\hide{
The proof that the configuration reachability problem is also undecidable is
very similar.
For a~given PCP~$\cP$, we construct a~deterministic $\raq$ $\cA_\cP' = \langle
\{q_0,q_1,q_2\},q_0,\emptyset,\vu_0,\delta',\emptyset\rangle$ where $\vu_0$ is
the same as in~$\cA_\cP$ and
$$
\delta' = \delta \cup \{(q_1, (\cur = \#)) \to (q_2, \myvec{y_1 := y_1 - y_2;
y_2 := 0;})\}
$$
where $\# = b$.
Then it holds that $i_1 \cdots i_k$ is a~solution of~$\cP$ iff after $\cA_\cP'$
reads $i_1 \cdots i_k \#$, it will be in the configuration $(q_2, \myvec{0,0})$,
i.e., $\cP$ has a~solution iff the configuration $(q_2, \myvec{0,0})$ is
reachable in $\cA_\cP'$.
}

A more succint proof can also be obtained by noticing that
the reachability problem of $\raq$s can encode
instances of the scalar reachability problem of matrices, which is
undecidable~\cite{Halava07}.


\hide{
\subsection{Proposition~\ref{prop-po-one-interval}: Its proof and generalizations}\label{app-pf-prop-po-one-interval}

We first give a proof for Proposition~\ref{prop-po-one-interval}, then discuss its generalizations.

\begin{proof}[Proof of Proposition~\ref{prop-po-one-interval}]

Suppose $c, d \in \ratnum$ such that $c < d \le 0$ or $0 \le c < d$, $Z=\{z_1,\dots, z_m\}$ is a set of variables, $\preceq$ is a partial order on $Z$, and $a_1, \dots, a_m \in \ratnum$. Our goal is to analyze the set of values of $a_1 \eta(z_1) + \dots + a_m \eta(z_m)$, for the assignments $\eta: Z \rightarrow (c,d)$ that are consistent with $\preceq$.
Without loss of generality, we assume that for each $i \in [m]$, $a_i \neq 0$.  
Otherwise, we can remove those $a_i z_i$ such that $a_i = 0$, without affecting the value of  $a_1 z_1 + \dots + a_m z_m$. 

Let $\pow^\uparrow_{\preceq}(Z)$ denote the set of upward-closed subsets of $Z$ w.r.t. $\preceq$, and let $\alpha, \beta$ be the minimum resp. maximum of the values $c \sum \limits_{z_j  \in Z \setminus Z' } a_j + d \sum \limits_{z_j \in Z'} a_j$, such that $Z'$ ranges over $\pow^\uparrow_{\preceq}(Z)$. 

In order to show 
$$
\left\{ \sum \limits_{j \in [m]} a_j\eta(z_j) \  \big\vert\  \eta: Z \rightarrow (c, d) \mbox{ is consistent with } \preceq \right\}= (\alpha, \beta),$$
 it is sufficient to show the following two facts.
\begin{enumerate}
\item For each assignment $\eta: Z \rightarrow (c, d)$ such that $\eta$ is consistent with $\preceq$, there are $Z'_{1}, Z'_{2} \in \pow^\uparrow_{\preceq}(Z)$ such that 
$$c \sum \limits_{z_j  \in Z \setminus Z'_{1}} a_j + d \sum \limits_{z_j \in Z'_{1}} a_j  < \sum \limits_{z_j \in Z} a_j \eta(z_j) <  c \sum \limits_{z_j \in Z \setminus  Z'_{2}} a_j + d \sum \limits_{z_j \in Z'_{2}} a_j.$$

\item For each $d' \in (\alpha, \beta)$, 
there is an assignment $\eta: Z \rightarrow (c, d)$ such that $\eta$ is consistent with $\preceq$ and $d' = \sum \limits_{z_j \in Z} a_j  \eta(z_j)$.
\end{enumerate}

\smallskip

\noindent {\bf \large The first fact}.

Suppose $\eta: Z \rightarrow (c, d)$ is an assignment such that $\eta$ is consistent with $\preceq$. Without loss of generality, suppose $\eta(z_1) \le \dots \le \eta(z_m)$. This implies that for each $z_{i_1}, z_{i_2} \in Z$ such that $z_{i_1} \preceq z_{i_2}$, we have $i_1 \le i_2$. Then for each $i \in [m]$, the following two facts hold,
\begin{itemize}
\item $\{z_{i}, \dots, z_{m}\} \in \pow^\uparrow_{\preceq}(Z)$, 
\item for each  $Z' \in \pow^\uparrow_{\preceq}(\{z_1,\dots, z_{i-1}\})$, it holds that $Z' \cup \{z_i, \dots, z_m\} \in \pow^\uparrow_{\preceq}(Z)$.
\end{itemize}

 
 We introduce some additional notations. A \emph{zone} of of the sequence $a_{1}, \dots, a_{m}$ is a maximal subsequence $a_{i}, \dots, a_{i'}$ (where $i \le i'$)  such that all the rationals in the subsequence are of the same sign, that is, either all of them are positive or all of them are negative. A zone is called \emph{positive} (resp. \emph{negative}) if all the rationals in the subsequence are positive (resp. negative).
 
 
We present the arguments by an induction on the number of zones of the sequence $a_{1}, \dots, a_{m}$.

\medskip

\noindent {\it Induction base}: There is exactly one zone. 
\begin{itemize}
\item If the zone is positive, then for each $i \in [m]$, $a_{i} > 0$. Therefore, $c \sum \limits_{i  \in [m]} a_{i}  < \sum \limits_{i \in [m]} a_{i} \eta(z_i)  <  d \sum \limits_{i \in [m]} a_{i}$. 
Let $Z'_{1} = \emptyset$ an $Z'_{2} = Z$. Then 
$$c \sum \limits_{z_j  \in Z \setminus Z'_{1}} a_j + d \sum \limits_{z_j \in Z'_{1}} a_j  <  \sum \limits_{z_j \in Z} a_{j} \eta(z_{j})  < c \sum \limits_{z_j  \in Z \setminus Z'_{2}} a_j + d \sum \limits_{z_j \in Z'_{2}} a_j.$$ 
\item Otherwise, for each $i \in [m]$, $a_{i} < 0$. Then $d \sum \limits_{i \in [m]} a_{i}  < \sum \limits_{i \in [m]} a_{i} \eta(z_{i})  < c \sum \limits_{i \in [m]} a_{i}$. 
Let $Z'_{1} = Z$ an $Z'_{2} = \emptyset$. Then 
$$c \sum \limits_{z_j  \in Z \setminus Z'_{1}} a_j + d \sum \limits_{z_j \in Z'_{1}} a_j  <  \sum \limits_{z_j \in Z} a_{j} \eta(z_{j})  < c \sum \limits_{z_j  \in Z \setminus Z'_{2}} a_j + d \sum \limits_{z_j \in Z'_{2}} a_j.$$ 
\end{itemize}

\medskip

\noindent {\it Induction step}: There are at least two zones.

Suppose $a_{i}, \dots, a_{m}$ is the zone containing $a_{m}$.  
We present the arguments for the situation that $a_{i}, \dots, a_{m}$ is a positive zone. The arguments for the situation that  $a_{i}, \dots, a_{m}$ is a negative zone are symmetric.

\medskip

\noindent {\bf 
There is $Z'_{2} \in \pow^\uparrow_{\preceq}(Z)$ such that 
$ \sum \limits_{z_j \in Z} a_{j} \eta(z_{j}) < c \sum \limits_{z_j \in Z \setminus  Z'_{2}} a_j + d \sum \limits_{z_j \in Z'_{2}} a_j .
$}

\smallskip

For each $i' \in [i, m]$, $a_{i'} > 0$, thus $a_{i'} \eta(z_{i'}) < a_{i'}d$. Therefore,   
\begin{eqnarray}\label{eqn-inf-sup-1}
\sum \limits_{z_j \in Z} a_j \eta(z_j) = \sum \limits_{i' \in [i-1]} a_{i'} \eta(z_{i'}) +   \sum \limits_{i' \in [i, m]} a_{i'} \eta(z_{i'}) <  \sum \limits_{i' \in [i-1]} a_{i'} \eta(z_{i'}) +  d \sum \limits_{i' \in [i, m]} a_{i'}.
\end{eqnarray}

Since $a_i, \dots, a_m$ is a positive zone and $a_{i-1} < 0$, the number of zones of the sequence $a_1, \dots, a_{i-1}$ is strictly less than that of $a_{1}, \dots, a_{m}$. 
By the induction hypothesis, there is $Z''_{2} \in \pow^\uparrow_{\preceq}(\{z_{1},\dots, z_{i-1}\})$ such that  
\begin{eqnarray}\label{eqn-inf-sup-2}
\sum \limits_{i' \in [i-1]} a_{i'} \eta(z_{i'})  < c \sum \limits_{z_j \in \{z_{1},\dots, z_{i-1}\} \setminus  Z''_{2}} a_j + d \sum \limits_{z_j \in Z''_{2}} a_j.
\end{eqnarray}

Let $Z'_{2} = Z''_{2} \cup \{z_{i}, \dots, z_{m}\}$.  Then $Z'_{2} \in \pow^\uparrow_{\preceq}(Z)$ and
$$
\sum \limits_{z_j \in Z} a_j \eta(z_j)  < \sum \limits_{i' \in [i-1]} a_{i'} \eta(z_{i'}) +  d \sum \limits_{i' \in [i, m]} a_{i'} 
< c \sum \limits_{z_j \in \{z_{1},\dots, z_{i-1}\} \setminus  Z''_{2}} a_j + d \sum \limits_{z_j \in Z''_{2}} a_j +  d \sum \limits_{i' \in [i, m]} a_{i'}
= c \sum \limits_{z_j \in Z \setminus  Z'_{2}} a_j + d \sum \limits_{z_j \in Z'_{2}} a_j,
$$
where the first and second inequality above follow from the inequality~(\ref{eqn-inf-sup-1}) and (\ref{eqn-inf-sup-2}) respectively, and the equality above follows from the fact $\{z_1,\dots, z_{i-1}\} \setminus Z''_2 = \{z_1,\dots, z_{i-1}, z_i, \dots, z_m\} \setminus (Z''_2 \cup \{z_i,\dots, z_m\})= Z \setminus Z'_2$.

\medskip

\noindent {\bf 
There is $Z'_{1} \in \pow^\uparrow_{\preceq}(Z)$ such that 
$c \sum \limits_{z_j \in Z \setminus  Z'_{1}} a_j + d \sum \limits_{z_j \in Z'_{1}} a_j < \sum \limits_{z_j \in Z} a_j \eta(z_j) .
$}

\smallskip

We distinguish between the following two cases.
\begin{itemize}
\item Case I: For each $i' \in [i-1]$, $ \sum \limits_{i'' \in [i', m]} a_{i''} \ge 0$.
%
%
%
\item Case II: There is $i' \in [i-1]$ such that $ \sum \limits_{i'' \in [i', m]} a_{i''} < 0$.
\end{itemize}

\smallskip

\noindent {\it Case I: For each $i' \in [i-1]$, $ \sum \limits_{i'' \in [i', m]} a_{i''} \ge 0$}. 

\smallskip

In this case, since $a_i,\dots, a_m$ is a positive zone, actually  for each $i' \in [m]$, we have $\sum \limits_{i'' \in [i', m]} a_{i''} \ge 0$.

\smallskip

\noindent {\it Subcase I.I: There is $i' \in [2, m]$ such that $\sum \limits_{i'' \in [i', m]} a_{i''} > 0$ and $\eta(z_{i'-1}) < \eta(z_{i'})$}.

\smallskip

From $\eta(z_m) \ge \dots \ge \eta(z_{i'})$ and the fact that  for each $i'' \in [i', m]$, $ \sum \limits_{i''' \in [i'', m]} a_{i'''} \ge 0$, we deduce that $\sum \limits_{i'' \in [i', m]} a_{i''} \eta(z_{i''}) \ge \eta(z_{i'}) \sum  \limits_{i'' \in [i', m]} a_{i''}$. Then
$$
\begin{array} {l c l }
\sum \limits_{z_j \in Z} a_j \eta(z_j) & =& \sum \limits_{i'' \in [i'-1]} a_{i''}   \eta(z_{i''}) +\sum  \limits_{i'' \in [i', m]} a_{i''}  \eta(z_{i''}) \ge  \sum \limits_{i'' \in [i'-1]} a_{i''}   \eta(z_{i''}) + \eta(z_{i'}) \sum  \limits_{i'' \in [i', m]} a_{i''}\\
& > &  \sum \limits_{i'' \in [i'-1]} a_{i''}   \eta(z_{i''}) + \eta(z_{i'-1}) \sum  \limits_{i'' \in [i', m]} a_{i''}  =  \sum \limits_{i'' \in [i'-2]} a_{i''}   \eta(z_{i''}) + \eta(z_{i'-1}) \sum  \limits_{i'' \in [i'-1, m]} a_{i''} \\
& \ge  &
 \dots  \ge \eta(z_{1}) \sum \limits_{i'' \in [m]} a_{i''} \ge c \sum \limits_{i' \in [m]} a_{i''},
\end{array}
$$
where the strict inequality above follows from the fact that $\sum \limits_{i'' \in [i', m]} a_{i''} > 0$ and $\eta(z_{i'-1}) < \eta(z_{i'})$.

Let $Z'_{1} = \emptyset$. Then 
$c  \sum \limits_{z_j \in Z \setminus Z'_{1}} a_j + d \sum \limits_{z_j \in Z'_{1}} a_j = c \sum \limits_{i'' \in [m]} a_{i''} < \sum \limits_{z_j \in Z} a_j \eta(z_j) $.

\smallskip

\noindent {\it Subcase I.II: 
For each $i' \in [2, m]$, either $\sum \limits_{i'' \in [i', m]} a_{i''} = 0$ or $\eta(z_{i'-1}) = \eta(z_{i'})$.}

\smallskip

If $\sum \limits_{i' \in [m]} a_{i'} > 0$, then from $\eta(z_m) \ge \dots \ge \eta(z_{1})$ and the fact that  for each $i' \in [m]$, $ \sum \limits_{i'' \in [i', m]} a_{i''} \ge 0$, we deduce that  $\sum \limits_{z_j \in Z} a_j \eta(z_j) \ge \eta(z_1) \sum \limits_{i' \in [m]} a_{i'} > c \sum \limits_{i' \in [m]} a_{i'}$, where the last equality follows from $\eta(z_1) > c$ and $\sum \limits_{i' \in [m]} a_{i'} > 0$. Let $Z'_1 = \emptyset$. Then
$ c  \sum \limits_{z_j \in Z \setminus Z'_{1}} a_j + d \sum \limits_{z_j \in Z'_{1}} a_j = c \sum \limits_{i' \in [m]} a_{i'} < \sum \limits_{z_j \in Z} a_j \eta(z_j) $.

\smallskip

Otherwise, $\sum \limits_{i' \in [m]} a_{i'} = 0$. Since $\sum \limits_{i' \in [2, m]} a_{i'} \ge 0$ and $a_1 \neq 0$, it follows that $a_1 < 0$ and $\sum \limits_{i' \in [2, m]} a_{i'} > 0$. 
\begin{itemize}
\item If $\eta(z_1) = \dots = \eta(z_m)$, then since $\eta$ is consistent with $\preceq$, we infer that the partial order $\preceq$ is trivial, i.e., $\preceq=\{(z_i, z_i) \mid i \in [m]\}$. Let $Z'_1 = \{z_1\}$. Since $\sum \limits_{i' \in [2,m]} a_{i'}  = -a_1$ and $a_1 < 0$, 
$$c  \sum \limits_{z_j \in Z \setminus Z'_{1}} a_j + d \sum \limits_{z_j \in Z'_{1}} a_j   = c \sum \limits_{i' \in [2,m]} a_{i'} + d a_1 = a_1 (d- c) < 0  = \eta(z_1) \sum \limits_{i' \in [m]} a_{i'} = \sum \limits_{z_j \in Z} a_j \eta(z_j).$$ 
\item Otherwise, let $i' \in [2, m]$ such that $\eta(z_{i'-1}) < \eta(z_{i'})$. Then $\sum \limits_{i'' \in [i', m]} a_{i''} = 0$. From this, we deduce that
$$
\sum \limits_{z_j \in Z} a_j \eta(z_j)   =   \sum \limits_{i'' \in [i'-1]} a_{i''}   \eta(z_{i''}) +\sum  \limits_{i'' \in [i', m]} a_{i''}  \eta(z_{i''})  
 \ge  \sum \limits_{i'' \in [i'-1]} a_{i''}   \eta(z_{i''}) + \eta(z_{i'}) \sum  \limits_{i'' \in [i', m]} a_{i''} 
 =  \sum \limits_{i'' \in [i'-1]} a_{i''}   \eta(z_{i''}).
$$
Since $\sum \limits_{i'' \in [i', m]} a_{i''} = 0$ and $a_i, \dots, a_m$ is a positive zone, it follows that $i' \le i-1$. Therefore, the number of zones in the sequence $a_1, \dots , a_{i'-1}$ is strictly less than that of $a_1,\dots, a_m$. \\
 Consider the expression $a_{1}  z_{1} + \dots + a_{i'-1} z_{i'-1}$. By the induction hypothesis, there is $Z''_1 \in \pow^\uparrow_{\preceq}(\{z_1,\dots, z_{i'-1}\})$ such that 
 $ c  \sum \limits_{z_j \in Z \setminus Z''_{1}} a_j + d \sum \limits_{z_j \in Z''_{1}} a_j <  \sum \limits_{i'' \in [i'-1]} a_{i''} \eta(z_{i''})$. \\
 Let $Z'_1 = Z''_1 \cup \{z_{i'}, \dots, z_m\}$. Then $Z'_1 \in \pow^\uparrow_{\preceq}(Z)$, and 
 $$
 \begin{array} {l c l}
c  \sum \limits_{z_j \in Z \setminus Z'_{1}} a_j + d \sum \limits_{z_j \in Z'_{1}} a_j   & = & c  \sum \limits_{z_j \in \{z_1,\dots, z_{i'-1}\} \setminus Z''_{1}} a_j + d \sum \limits_{z_j \in Z''_{1}} a_j  + d \sum \limits_{i'' \in [i', m]} a_{i''} \\
& = & c  \sum \limits_{z_j \in \{z_1,\dots, z_{i'-1}\} \setminus Z''_{1}} a_j  + d \sum \limits_{z_j \in Z''_{1}} a_j   < \sum \limits_{i'' \in [i'-1]} a_{i''} \eta(z_{i''}) \\
& \le &  \sum \limits_{z_j \in Z} a_j \eta(z_j).
\end{array}
 $$
\end{itemize}

\smallskip

\noindent {\it Case II: There is $i' \in [i-1]$ such that $ \sum \limits_{i'' \in [i', m]} a_{i''} < 0$}. 

\smallskip

Let $a_{i'}, \dots, a_{i-1}$ be the \emph{maximal} suffix of $a_1,\dots, a_{i-1}$ such that for each $i'' \in [i', i-1]$, $\sum \limits_{i''' \in [i'', m]} a_{i'''} \ge 0$.  Then $i' > 1$ and $\sum \limits_{i'' \in [i'-1, m]} a_{i''} < 0$. From this, we know that $a_{i'-1} < 0$. 

Since $a_i, \dots, a_m$ is a positive zone, we deduce that for each $i'' \in [i', m]$,  $\sum \limits_{i''' \in [i'', m]} a_{i'''} \ge 0$.
In addition, $\eta(z_{i'-1}) \le \dots \le \eta(z_{m})$. Then we have 
$$
\begin{array}{l c l }
\sum \limits_{z_j \in Z} a_j \eta(z_j) & = &  \sum \limits_{i'' \in [i'-1]} a_{i''} \eta(z_{i''}) +  \sum \limits_{i'' \in [i', m]} a_{i''} \eta(z_{i''-1})  
 \ge  \sum \limits_{i'' \in [i'-1]} a_{i''} \eta(z_{i''}) +  \eta(z_{i'-1}) \sum \limits_{i'' \in [i', m]} a_{i''} \\
 & = & \sum \limits_{i'' \in [i'-2]} a_{i''} \eta(z_{i''}) + \eta(z_{i' -1})  \sum \limits_{i'' \in [i'-1, m]} a_{i''}.
\end{array}
$$
Consider the expression $ \sum \limits_{i'' \in [i'-1]} a'_{i''} z_{i''}$ such that for each $i'' \in [i'-2]$, $a'_{i''} = a_{i''}$, and $a'_{i'-1} = \sum \limits_{i'' \in [i'-1, m]} a_{i''}$. 

Since $a'_{i'-1} =  \sum \limits_{i'' \in [i'-1, m]} a_{i''} < 0$ and $a_{i'-1} < 0$, the number of zones of $a'_{1}, \dots,  a'_{i'-1}$ is the same as  that of $a_{1}, \dots, a_{i'-1}$, thus strictly less than that of $a_{1},\dots, a_{m}$.

By the induction hypothesis, there is $Z''_{1} \in \pow^\uparrow_{\preceq}(\{z_{1},\dots, z_{i'-1}\})$ such that  
$$ c \sum \limits_{z_j \in \{z_{1},\dots, z_{i'-1}\} \setminus  Z''_{1}} a'_j + d \sum \limits_{z_j \in Z''_{1}} a'_j < \sum \limits_{i'' \in [i'-1]} a'_{i''} \eta(z_{i''}).$$
%
We can assume that $z_{i'-1} \in Z''_{1}$, since otherwise, let $Z'''_{1} = Z''_{1} \cup \{z_{i'-1}\}$, then $Z'''_{1} \in \pow^\uparrow_{\preceq}(\{z_{1},\dots, z_{i'-1}\})$, and 
$$
\begin{array}{l c l}
& & c \sum \limits_{z_j \in \{z_{1},\dots, z_{i'-1}\} \setminus  Z'''_{1}} a'_j + d \sum \limits_{z_j \in Z'''_{1}} a'_j  
= c \sum \limits_{z_j \in \{z_{1},\dots, z_{i'-2}\} \setminus  Z''_{1}} a'_j + d \sum \limits_{z_j \in Z''_{1}} a'_j + d  a'_{i'-1}\\
& < &  c \sum \limits_{z_j \in \{z_{1},\dots, z_{i'-2 }\} \setminus  Z''_{1}} a'_j + d \sum \limits_{z_j \in Z''_{1}} a'_j + c  a'_{i'-1} 
=  c \sum \limits_{z_j \in \{z_{1},\dots, z_{i'-1}\} \setminus  Z''_{1}} a'_j + d \sum \limits_{z_j \in Z''_{1}} a'_j \\
& < &  \sum \limits_{i'' \in [i'-1]} a'_{i''} \eta(z_{i''}),
\end{array}
$$
where the first inequality follows from the fact $a'_{i'-1} < 0$.

Therefore, we assume that $z_{i'-1} \in Z''_{1}$. Let $Z'_{1} = Z''_{1} \cup \{z_{i'}, \dots, z_{m}\}$. Then $Z'_{1} \in \pow^\uparrow_{\preceq}(Z)$, and
$$
\begin{array}{l c l}
 c \sum \limits_{z_j \in Z \setminus  Z'_{1}} a_j + d \sum \limits_{z_j \in Z'_{1}} a_j    & = &  c \sum \limits_{z_j \in \{z_{1},\dots, z_{i'-1}\} \setminus  Z''_{1}} a_j + d (\sum \limits_{z_j \in Z''_{1}} a_j + \sum \limits_{i'' \in [i', m]} a_{i''}) \\
& = & c \sum \limits_{z_j \in \{z_{1},\dots, z_{i'-1}\} \setminus  Z''_{1}} a_j + d (\sum \limits_{z_j \in (Z''_{1} \setminus \{z_{i'-1}\})} a_j + \sum \limits_{i'' \in [i'-1, m]} a_{i''})\\
& = &  c \sum \limits_{z_j \in \{z_{1},\dots, z_{i'-1}\} \setminus  Z''_{1}} a'_j + d \sum \limits_{z_j \in Z''_{1}} a'_j  < \sum \limits_{i'' \in [i'-1]} a'_{i''} \eta(z_{i''})\\
&= & \sum \limits_{i'' \in [i'-2]} a_{i''} \eta(z_{i''}) + \eta(z_{i'-1})  \sum \limits_{i'' \in [i'-1, m]} a_{i''}  
\le \sum \limits_{z_j \in Z} a_j \eta(z_j).
\end{array}
$$

\medskip

\noindent {\bf \large The second fact}. 

\smallskip

Suppose $d' \in (\alpha, \beta)$. 
Our goal is to construct an assignment $\eta: Z \rightarrow (c, d)$ such that $\eta$ is consistent with $\preceq$ and $d' = \sum \limits_{z_j \in Z} a_j \eta(z_j)$. 

Let $Z'_{1}, Z'_{2} \in \pow^\uparrow_{\preceq}(Z)$ such that $\alpha = c \sum \limits_{z_j  \in Z \setminus Z'_{1}} a_j + d \sum \limits_{z_j \in Z'_{1}} a_j$ and $\beta =  c \sum \limits_{z_j  \in Z \setminus Z'_{2}} a_j + d \sum \limits_{z_j \in Z'_{2}} a_j$.  Then 
$$c \sum \limits_{z_j  \in Z \setminus Z'_{1}} a_j + d \sum \limits_{z_j \in Z'_{1}} a_j < d' <  c \sum \limits_{z_j  \in Z \setminus Z'_{2}} a_j + d \sum \limits_{z_j \in Z'_{2}} a_j.$$

Let $Z'_{3} = Z'_{1} \cap Z'_{2}$ and $Z'_{4} =  Z \setminus (Z'_{1} \cup Z'_{2})$. Then  
$$
\begin{array}{l c l}
& & c \sum \limits_{z_j  \in Z \setminus Z'_{1}} a_j +  d \sum \limits_{z_j \in Z'_{1}} a_j  = c \sum \limits_{z_j  \in Z'_{4}} a_j + c \sum \limits_{z_j  \in Z'_{2} \setminus Z'_{3}} a_j + d \sum \limits_{z_j \in Z'_{3}} a_j + d \sum \limits_{z_j \in Z'_{1} \setminus Z'_{3}} a_j  \\
& < & d' \\
&< & c \sum \limits_{z_j  \in Z \setminus Z'_{2}} a_j + d \sum \limits_{z_j \in Z'_{2}} a_j
= c \sum \limits_{z_j  \in Z'_{4}} a_j + c \sum \limits_{z_j  \in Z'_{1} \setminus Z'_{3}} a_j + d \sum \limits_{z_j \in Z'_{3}} a_j + d \sum \limits_{z_j \in Z'_{2} \setminus Z'_{3}} a_j.
\end{array}
$$
This implies that 
$$c \sum \limits_{z_j  \in Z'_{2} \setminus Z'_{3}} a_j  + d \sum \limits_{z_j \in Z'_{1} \setminus Z'_{3}} a_j < d' -  (c \sum \limits_{z_j  \in Z'_{4}} a_j + d \sum \limits_{z_j \in Z'_{3}} a_j) < d \sum \limits_{z_j \in Z'_{2} \setminus Z'_{3}} a_j + c \sum \limits_{z_j  \in Z'_{1} \setminus Z'_{3}} a_j.$$

Let 
$$d'_1 = \frac{d' -  (c \sum \limits_{z_j  \in Z'_{4}} a_j + d \sum \limits_{z_j \in Z'_{3}} a_j)  - (c + d) \sum \limits_{z_j \in Z'_{1} \setminus Z'_{3}} a_j}{ \sum \limits_{z_j  \in Z'_{2} \setminus Z'_{3}} a_j - \sum \limits_{z_j \in Z'_{1} \setminus Z'_{3}} a_j }$$
and
$$d'_2 =\frac{(c + d) (\sum \limits_{z_j  \in Z'_{2} \setminus Z'_{3}} a_j) -  d' +  (c \sum \limits_{z_j  \in Z'_{4}} a_j + d \sum \limits_{z_j \in Z'_{3}} a_j)}{\sum \limits_{z_j  \in Z'_{2} \setminus Z'_{3}} a_j - \sum \limits_{z_j \in Z'_{1} \setminus Z'_{3}} a_j }.$$

It is a routine to verify that $c < d'_1, d'_2 < d$ and 
$$d'_1 \sum \limits_{z_j  \in Z'_{2} \setminus Z'_{3}} a_j   + d'_2 \sum \limits_{z_j \in Z'_{1} \setminus Z'_{3}} a_j = d' -  (c \sum \limits_{z_j  \in Z'_{4}} a_j + d \sum \limits_{z_j \in Z'_{3}} a_j).$$
Therefore,
$$c \sum \limits_{z_j  \in Z'_{4}} a_j + d \sum \limits_{z_j \in Z'_{3}} a_j+ d'_1 \sum \limits_{z_j  \in Z'_{2} \setminus Z'_{3}} a_j   + d'_2 \sum \limits_{z_j \in Z'_{1} \setminus Z'_{3}} a_j = d'.$$




Define an assignment $\eta: Z_I \rightarrow (c, d)$ as follows: 
\begin{itemize}
\item for each $z_j \in Z'_{4}$, let $\eta(z_j) = c + \varepsilon_j$,   
\item for each $z_j \in Z'_{3}$, let $\eta(z_j) = d + \varepsilon_j$, 
\item for each $z_j \in Z'_{2} \setminus Z'_{3}$, let $\eta(z_j) = d'_1 + \varepsilon_j$, 
\item for each $z_j \in Z'_{1} \setminus Z'_{3}$, let $\eta(z_j) = d'_2 + \varepsilon_j$,
\end{itemize}
where
\begin{itemize}
\item for each $z_j \in Z'_{4}$, $0 < \varepsilon_j < \min(d'_1 - c, d'_2 - c)/2$,
\item for each $z_j \in Z'_{3}$, $- \min(d - d'_1, d - d'_2)/2 < \varepsilon_j <0$,
\item for each $z_j \in Z'_{2} \setminus Z'_{3}$, $|\varepsilon_j| < \min(d - d'_1, d - d'_2, d'_1- c, d'_2 - c)/2$,
\item for each $z_j \in Z'_{1} \setminus Z'_{3}$, $|\varepsilon_j| < \min(d - d'_1, d - d'_2, d'_1 - c, d'_2 - c)/2$,
\item for each $z_{j}, z_{j'} \in Z'_{4}$  (resp. $z_{j}, z_{j'} \in Z'_{3}$, $z_{j}, z_{j'} \in Z'_{2} \setminus Z'_{3}$, $z_{j}, z_{j'} \in Z'_{1} \setminus Z'_{3}$) such that $z_{j} \prec z_{j'}$, we have $\varepsilon_{j} < \varepsilon_{j'}$, 
\item $\sum \limits_{i' \in [m]} a_{i'} \varepsilon_{i'} = 0$.
\end{itemize}
It is not hard to see that such rational numbers $\varepsilon_j$ for $z_j \in Z$ exist.

Then $\eta$ is consistent with $\preceq$, and 
$$
\begin{array}{l c l}
& &  \sum \limits_{i' \in [m]} a_{i'} \eta(z_{i'})  = \sum \limits_{ i' \in [m]} (a_ {i'} \varepsilon_{i'}) + c \sum \limits_{z_j \in Z'_{4}} a_j  + d \sum \limits_{z_j \in Z'_{3}} a_j + d'_1 \sum \limits_{z_j \in Z'_{2} \setminus Z'_{3}} a_j + d'_2 \sum \limits_{z_j \in Z'_{1} \setminus Z'_{3}} a_j \\
& = &  c \sum \limits_{z_j \in Z'_{4}} a_j  + d \sum \limits_{z_j \in Z'_{3}} a_j + d'_1 \sum \limits_{z_j \in Z'_{2} \setminus Z'_{3}} a_j + d'_2 \sum \limits_{z_j \in Z'_{1} \setminus Z'_{3}} a_j = d'. 
\end{array}
$$
\end{proof}

Next, we generalize Proposition~\ref{prop-po-one-interval} to the special intervals $(-\infty, c)$ with $c \le 0$ and $(c, +\infty)$ with $c \ge 0$.
For technical reasons, we introduce two special symbols $c_{-\infty}$ and $c_{+\infty}$ to represent the arbitrarily small negative resp.  large positive rational numbers. The two special symbols $c_{-\infty}$ and $c_{+\infty}$ can also be seen as two special variables.
In addition, we will use the expressions of the form $d_0 + d_1  c_{+\infty}$ and $d_0 + d_1  c_{-\infty}$ for $d_0, d_1 \in \ratnum$. We extend the order relation on $\ratnum$ to $\{d_0 + d_1  c_{+\infty} \mid d_0, d_1 \in \ratnum\}$ resp. $\{d_0 + d_1 c_{-\infty}  \mid d_0, d_1 \in \ratnum\}$ as follows: 
\begin{itemize}
\item $d_0 + d_1 c_{+\infty} < d'_0 + d'_1 c_{+\infty}$  iff either $d_1 < d'_1$ or $d_1 = d'_1$ and $d_0 < d'_0$,
\item $d_0 + d_1 c_{-\infty}  < d'_0 + d'_1 c_{-\infty}$ iff either $d_1 > d'_1$ or $d_1= d'_1$ and $d_0 < d'_0$.
\end{itemize}

Let $I = (-\infty, c)$ with $c \le 0$ or $(c, +\infty)$ with $c \ge 0$, we define $\infr_I$, $\supr_I$, $\alpha_I$, and $\beta_I$ as follows.
\begin{itemize}
\item If $I = (-\infty, c)$, then $\infr_I = c_{-\infty}$, $\supr_I = c$, and $\alpha_I, \beta_I$ are defined as follows.
\begin{itemize}
\item Suppose $\min \limits_{Z' \in \pow^\uparrow_{\preceq}(Z)} \left( \infr_I \sum \limits_{z_j  \in Z \setminus Z' } a_j + \supr_I \sum \limits_{z_j \in Z'} a_j \right) = d_0 + d_1 c_{-\infty}$. Then $\alpha_I = -\infty$  if $d_1 > 0$, and $\alpha_I = d_0$ otherwise (note that from the fact $d_0 + d_1 c_{-\infty}$ is the minimum, it holds that $d_0 + d_1 c_{-\infty}  \le c \sum \limits_{z_j \in Z} a_j $, thus $d_1 \ge 0$).
\item Suppose $ \max \limits_{Z' \in \pow^\uparrow_{\preceq}(Z)} \left(\infr_I \sum \limits_{z_j  \in Z \setminus Z' } a_j + \supr_I \sum \limits_{z_j \in Z'} a_j \right) = d'_0 + d'_1 c_{-\infty}$. Then $\beta_I = + \infty$ if $d'_1 < 0$, and $\beta_I = d'_0$ otherwise (note that from the fact $d'_0 + d'_1 c_{-\infty}$ is the maximum, it holds that $c \sum \limits_{z_j \in Z} a_j  \le d'_0 + d'_1 c_{-\infty}$, thus $d'_1 \le 0$).
\end{itemize}
\item If $I = (c, +\infty)$, then $\infr_I = c$, $\supr_I = c_{+\infty}$, and $\alpha_I, \beta_I$ are defined as follows.
\begin{itemize}
\item Suppose $\min \limits_{Z' \in \pow^\uparrow_{\preceq}(Z)} \left( \infr_I \sum \limits_{z_j  \in Z \setminus Z'} a_j + \supr_I \sum \limits_{z_j \in Z'} a_j \right) = d_0 + d_1 c_{+\infty}$. Then $\alpha_I = -\infty$  if $d_1 < 0$, and $\alpha_I = d_0$ otherwise (note that from the fact $d_0 + d_1 c_{+\infty}$ is the minimum, it holds that $d_0 + d_1 c_{+\infty}  \le c \sum \limits_{z_j \in Z} a_j $, thus $d_1 \le 0$).
\item Suppose $ \max \limits_{Z' \in \pow^\uparrow_{\preceq}(Z)} \left(\infr_I \sum \limits_{z_j  \in Z \setminus Z' } a_j + \supr_I \sum \limits_{z_j \in Z'} a_j \right) = d'_0 + d'_1 c_{+\infty}$. Then $\beta_I = + \infty$ if $d'_1 > 0$, and $\beta_I = d'_0$ otherwise (note that from the fact $d'_0 + d'_1 c_{+\infty}$ is the maximum, it holds that $c \sum \limits_{z_j \in Z} a_j  \le d'_0 + d'_1 c_{+\infty}$, thus $d'_1 \ge 0$).
\end{itemize}
\end{itemize}

Note that the minimum and maximum operator in the definition of $\alpha_I$ and $\beta_I$ above are based on the aforementioned order relation on $\{d_0 + d_1  c_{+\infty} \mid d_0, d_1 \in \ratnum\}$ resp. $\{d_0 + d_1 c_{-\infty}  \mid d_0, d_1 \in \ratnum\}$.

\smallskip

\begin{proposition}\label{prop-po-one-interval-gen}
Suppose $I = (-\infty, c)$ with $c \le 0$ or $(c, +\infty)$ with $c \ge 0$, $Z=\{z_1,\dots, z_m\}$ is a set of variables, $\preceq$ is a partial order on $Z$, and $a_1, \dots, a_m \in \ratnum$. Then 
$
\left\{ \sum \limits_{j \in [m]} a_j\eta(z_j) \  \big\vert\  \eta: Z \rightarrow I \mbox{ is consistent with } \preceq \right\}= (\alpha_I, \beta_I).$
\end{proposition}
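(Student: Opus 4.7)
The plan is to reduce Proposition~\ref{prop-po-one-interval-gen} to the already-established Proposition~\ref{prop-po-one-interval} by an exhaustion/limit argument. I will spell out the case $I = (c,+\infty)$ with $c \ge 0$; the other case is symmetric and can be obtained either by repeating the argument or by negating all coefficients $a_j$ and reversing the partial order (which turns $(-\infty,c)$ into $(-c,+\infty)$ with $-c \ge 0$).

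The key observation is that any assignment $\eta: Z \to I$ takes only finitely many values and is hence bounded, so $\eta(Z) \subseteq (c, N)$ for some $N > c$. Consequently, writing $S_N = \{ \sum_j a_j \eta(z_j) \mid \eta: Z \to (c,N)\text{ consistent with } \preceq\}$, the set we want to describe is exactly $\bigcup_{N > c} S_N$. By Proposition~\ref{prop-po-one-interval} applied to each bounded interval $(c,N)$, we get $S_N = (\alpha_N, \beta_N)$, where $\alpha_N$ and $\beta_N$ are respectively the minimum and maximum of $d_0(Z') + N\,d_1(Z')$ over $Z' \in \pow^\uparrow_\preceq(Z)$, with $d_0(Z') = c\sum_{z_j \in Z\setminus Z'} a_j$ and $d_1(Z') = \sum_{z_j \in Z'} a_j$. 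Because $(c,N) \subseteq (c,N')$ whenever $N \le N'$, the intervals $(\alpha_N, \beta_N)$ are nested, so $\alpha_N$ is weakly decreasing, $\beta_N$ weakly increasing, and $\bigcup_{N>c}(\alpha_N,\beta_N) = (\alpha_*, \beta_*)$ with $\alpha_* = \lim_{N\to\infty} \alpha_N$ and $\beta_* = \lim_{N\to\infty} \beta_N$ taken in $[-\infty,+\infty]$.

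The remaining step is to identify $\alpha_* = \alpha_I$ and $\beta_* = \beta_I$. The point is that the symbolic ordering on expressions $d_0 + d_1\,c_{+\infty}$ used to define $\alpha_I$ and $\beta_I$ is precisely the asymptotic ordering of the affine functions $N \mapsto d_0 + d_1 N$ as $N \to +\infty$: the smaller $d_1$ dominates, with ties broken by $d_0$. Let the symbolic minimum over $\pow^\uparrow_\preceq(Z)$ be attained at some $Z'_*$ with value $d_0(Z'_*) + d_1(Z'_*)\,c_{+\infty}$. If $d_1(Z'_*) < 0$, then $\alpha_N \le d_0(Z'_*) + N\,d_1(Z'_*) \to -\infty$, matching the defined $\alpha_I = -\infty$. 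Otherwise $d_1(Z'_*) = 0$ (and every $Z'$ satisfies $d_1(Z') \ge 0$), and for $N$ large enough any $Z'$ with $d_1(Z') > 0$ gives a large positive value, so the minimum is realized by a $Z'$ with $d_1(Z') = 0$ and smallest $d_0(Z')$; this yields $\alpha_* = d_0(Z'_*) = \alpha_I$. The analogous dichotomy handles $\beta_*$.

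The main obstacle I expect is the bookkeeping of the symbolic comparison: one has to verify that the lexicographic-style order defined on $\{d_0 + d_1\,c_{+\infty}\}$ really captures the two-tier asymptotic behaviour (sign of $d_1$ first, then the additive constant) and that the eventual constant value of $\alpha_N$ after the $d_1$-competition is exactly the $d_0$ minimized in the symbolic tie-break. Finiteness of $\pow^\uparrow_\preceq(Z)$ ensures that min and max commute with the limit and reduces the argument to a small finite case analysis, so no further subtleties should arise.
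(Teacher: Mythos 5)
Your proof is correct, but it takes a genuinely different route from the paper's. The paper re-enters the two-part structure of the bounded-interval proposition: it re-establishes the first containment (``each consistent $\eta$ is strictly between two corner values'') by the same zone induction, proves as a separate claim that at least one of $\alpha_I,\beta_I$ is infinite, and then, for each $d'\in(\alpha_I,\beta_I)$, picks a sufficiently large finite bound $D$ and reruns the explicit assignment construction from the bounded case with $d$ replaced by $D$ --- i.e.\ it reuses the \emph{proof} of the bounded proposition, not just its statement. You instead treat the bounded proposition as a black box: write the target set as the nested union $\bigcup_{N>c} S_N$ of the sets for the truncations $(c,N)$ (each of which satisfies $0\le c<N$, so the bounded proposition applies), observe that a union of nested nonempty open intervals is the open interval of the limiting endpoints, and then identify $\lim_N\alpha_N$ and $\lim_N\beta_N$ with $\alpha_I$ and $\beta_I$ by noting that the lexicographic order on expressions $d_0+d_1\,c_{+\infty}$ is exactly the eventual order of the affine functions $N\mapsto d_0+d_1N$, so that over the finitely many upward-closed subsets the minimizer/maximizer stabilizes for large $N$. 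Your version is more modular and makes the role of the symbolic $\pm\infty$ bookkeeping transparent (it is literally an asymptotic order), at the cost of the small interval-topology argument about nested unions; the paper's version avoids any limit language and directly exhibits witnessing assignments, and along the way records the trichotomy on which of $\alpha_I,\beta_I$ are infinite, which it uses to organize its case analysis. Both defer the $(-\infty,c)$ case to symmetry. One point to make explicit if you write this up: the truncated intervals $S_N$ are nonempty open intervals (after the standard reduction to nonzero coefficients), which is what licenses the inference from $S_N\subseteq S_{N'}$ to the monotonicity of the endpoints.
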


\begin{proof}
Similar to the proof of Proposition~\ref{prop-po-one-interval}, we still prove the following two facts.
\begin{enumerate}
\item For each assignment $\eta: Z \rightarrow I$ such that $\eta$ is consistent with $\preceq$, there are $Z'_{1}, Z'_{2} \in \pow^\uparrow_{\preceq}(Z)$ such that 
$$\infr_I \sum \limits_{z_j  \in Z \setminus Z'_{1}} a_j + \supr_I \sum \limits_{z_j \in Z'_{1}} a_j  < \sum \limits_{z_j \in Z} a_j \eta(z_j) <  \infr_I \sum \limits_{z_j \in Z \setminus  Z'_{2}} a_j + \supr_I \sum \limits_{z_j \in Z'_{2}} a_j.$$

\item For each $d' \in (\alpha_I, \beta_I)$, 
there is an assignment $\eta: Z \rightarrow I$ such that $\eta$ is consistent with $\preceq$ and $d' = \sum \limits_{z_j \in Z} a_j  \eta(z_j)$.
\end{enumerate}

The proof of the first fact is essentially the same as in Proposition~\ref{prop-po-one-interval}. 

For the second fact, 
we will illustrate the arguments for  $I = (c, +\infty)$. The arguments for  $I = (-\infty, c)$ are symmetric.

\begin{claim}
 Let $I = (c, +\infty)$ with $c \ge 0$. Then one of the following holds: $\alpha_I = -\infty$ and $\beta_I \in \ratnum$, or $\alpha_I \in \ratnum$ and $\beta_I = +\infty$, or $\alpha_I = -\infty$ and $\beta_I = +\infty$. 
\end{claim}

\begin{proof} [Proof of the claim] 
Let $Z' = \{z_{i}\}$ be the upward-closed subset of $Z$ such that $z_{i}$ is a $\preceq$-maximal element of $Z$. Then $\infr_I \sum \limits_{z_j \in Z \setminus Z' } a_j + \supr_I  \sum \limits_{z_j \in Z'} a_j = c  \sum \limits_{z_j \in Z \setminus Z' } a_j + a_{i} c_{+\infty}$. If $a_{i} > 0$, then $\beta_I = +\infty$, otherwise, $\alpha_I = -\infty$. Therefore, we always have $\beta_I = +\infty$ or $\alpha_I = -\infty$. From the definition of $\alpha_I$ and $\beta_I$, we conclude that the claim holds.
\end{proof}

\medskip

We resume the proof of the proposition and distinguish between the three cases in the claim.

Suppose $d' \in (\alpha_I, \beta_I)$. 
Our goal is to construct an assignment $\eta: Z \rightarrow I$ such that $\eta$ is consistent with $\preceq$ and $d' = \sum \limits_{z_j \in Z} a_j \eta(z_j)$. 

\medskip

\noindent {\it Case $\alpha_I = -\infty$ and $\beta_I \in \ratnum$}.

\smallskip

From the definition of $\alpha_I, \beta_I$ and the fact that $\alpha_I = -\infty$ and $\beta_I \in \ratnum$, we know that there are $Z'_{1}, Z'_{2} \in \pow^\uparrow_{\preceq}(Z)$ such that $\sum \limits_{z_j \in Z'_{1}} a_j < 0$, $\sum \limits_{z_j \in Z'_{2}} a_j = 0$, and 
%
$\beta_I =  c \sum \limits_{z_j  \in Z \setminus Z'_{2}} a_j $.  
This implies that there is a sufficiently large number $D \in (c, +\infty)$ such that  
$c \sum \limits_{z_j  \in Z \setminus Z'_{1}} a_j + D \sum \limits_{z_j \in Z'_{1}} a_j < d'.$
Therefore, we have 
$$c \sum \limits_{z_j  \in Z \setminus Z'_{1}} a_j + D \sum \limits_{z_j \in Z'_{1}} a_j < d' < c \sum \limits_{z_j  \in Z \setminus Z'_{2}} a_j + D \sum \limits_{z_j \in Z'_{2}} a_j.$$

Let $Z'_{3} = Z'_{1} \cap Z'_{2}$ and $Z'_{4} =  Z \setminus (Z'_{1} \cup Z'_{2})$. The rest of the arguments are the same as those for the second fact in the proof of Proposition~\ref{prop-po-one-interval}, with  $d$ replaced by $D$. 

\medskip

\noindent {\it Case $\alpha_I \in \ratnum$ and $\beta_I  = +\infty$}.

\smallskip


From the definition of $\alpha_I, \beta_I$ and the fact that $\alpha_I \in \ratnum$ and $\beta_I = +\infty$, we know that there are $Z'_{1}, Z'_{2} \in \pow^\uparrow_{\preceq}(Z)$ such that $\sum \limits_{z_j \in Z'_{1}} a_j = 0$, $\alpha_I =  c \sum \limits_{z_j  \in Z \setminus Z'_{1}} a_j $, and $\sum \limits_{z_j \in Z'_{2}} a_j > 0$.  
This implies that there is a sufficiently large number $D \in (c, +\infty)$ such that  
$d' < c \sum \limits_{z_j  \in Z \setminus Z'_{2}} a_j + D \sum \limits_{z_j \in Z'_{2}} a_j.$
Therefore, we have 
$$c \sum \limits_{z_j  \in Z \setminus Z'_{1}} a_j + D \sum \limits_{z_j \in Z'_{1}} a_j < d' < c \sum \limits_{z_j  \in Z \setminus Z'_{2}} a_j + D \sum \limits_{z_j \in Z'_{2}} a_j.$$

Let $Z'_{3} = Z'_{1} \cap Z'_{2}$ and $Z'_{4} =  Z \setminus (Z'_{1} \cup Z'_{2})$. The rest of the arguments are the same as those for the second fact in the proof of Proposition~\ref{prop-po-one-interval}, with  $d$ replaced by  $D$. 

\medskip

\noindent {\it Case $\alpha_I = - \infty$ and $\beta_I  = +\infty$}.

\smallskip

From the definition of $\alpha_I, \beta_I$ and the fact that $\alpha_I = -\infty$ and $\beta_I = +\infty$, we know that there are $Z'_{1}, Z'_{2} \in \pow^\uparrow_{\preceq}(Z)$ such that $\sum \limits_{z_j \in Z'_{1}} a_j < 0$ and $\sum \limits_{z_j \in Z'_{2}} a_j > 0$.  
This implies that there is a sufficiently large number $D \in (c, +\infty)$ such that  
$$c \sum \limits_{z_j  \in Z \setminus Z'_{1}} a_j + D \sum \limits_{z_j \in Z'_{1}} a_j < d' < c \sum \limits_{z_j  \in Z \setminus Z'_{2}} a_j + D \sum \limits_{z_j \in Z'_{2}} a_j.$$

Let $Z'_{3} = Z'_{1} \cap Z'_{2}$ and $Z'_{4} =  Z \setminus (Z'_{1} \cup Z'_{2})$. The rest of the arguments are the same as those for the second fact in the proof of Proposition~\ref{prop-po-one-interval}, with $d$ replaced by $D$. 
\end{proof}

The results in Proposition~\ref{prop-po-one-interval}-\ref{prop-po-one-interval-gen} can be generalized naturally to multiple intervals as follows. 

In the rest of this subsection, we fix 
\begin{itemize}
\item a finite set of open intervals $\intval$ which comprises $(-\infty, c_{-r})$, $(c_s, -\infty)$, and the intervals $(c_i, c_{i+1})$ with $i \in [-r, s-1]$ such that $c_0=0$ and $c_{-r} < \dots < c_{-1} < c_0 < c_1 < \dots < c_s$ (let $C$ denote $\{c_{-r}, \dots, c_{-1}, c_0, c_1, c_s\}$),
\item a finite set of variables $Z=\{z_1,\dots, z_m\}$, 
\item a partial order $\preceq$ on $Z \cup C$ such that the restriction of $\preceq$ to $C$ is identical to the restriction of the order relation of $\ratnum$ to $C$, in addition, for each $z_j \in Z$, either $z_j \preceq c_{-r}$, or $c_s \preceq z_j$, or $c_i \preceq z_j \preceq c_{i+1}$ for some $i \in [-r, s-1]$.
\item and $a_1, \dots, a_m \in \ratnum$.
\end{itemize}

An assignment $\eta: Z \rightarrow \ratnum$ is said to be \emph{consistent with $\preceq$} if for each $z_i, z_j \in Z$, $z_j \preceq z_j$ implies $\eta(z_i) \le \eta(z_j)$, and for each $z_i \in Z$ and $c_j \in C$, $z_i \preceq c_j$ implies $\eta(z_i) \le c_j$. For each $I \in \intval$, define $\infr_I$, $\supr_I$, $Z_I$, $\alpha_I$, and $\beta_I$ as follows:
\begin{itemize}
\item if $I = (-\infty, c_{-r})$, then $\infr_I = c_{-\infty}$, $\supr_I = c_{-r}$, $Z_I = \{z_i \in Z \mid z_i \preceq c_{-r}\}$, $\alpha_I$ and $\beta_I$ are those in Proposition~\ref{prop-po-one-interval-gen} (with $c$ replaced by $c_{-r}$ and $Z$ replaced by $Z_I$),
\item if $I = (c_s, +\infty)$, then $\infr_I = c_s$, $\supr_I = c_{+\infty}$, $Z_I = \{z_i \in Z \mid c_s \preceq z_i \}$, $\alpha_I$ and $\beta_I$ are those in Proposition~\ref{prop-po-one-interval-gen} (with $c$ replaced by $c_{s}$ and $Z$ replaced by $Z_I$),
\item if $I = (c_j, c_{j+1})$ for $j \in [-r, s-1]$, then $\infr_I = c_j$, $\supr_I = c_{j+1}$, $Z_I = \{z_i \in Z \mid c_j \preceq z_i \preceq c_{j+1}\}$, and $\alpha_I$ and $\beta_I$ are $\alpha, \beta$ in Proposition~\ref{prop-po-one-interval-gen} (with $c,d$ replaced by $c_j, c_{j+1}$ and $Z$ replaced by $Z_I$).
\end{itemize}

\begin{corollary}\label{cor-po-mult-intval}
The following equation holds. 
$$\left\{ \sum \limits_{z_j \in Z} a_j \eta(z_j) \  \big\vert\   \eta: Z \rightarrow \ratnum \mbox{ is consistent with } \preceq \right\}= \left(\sum \limits_{I \in \intval} \alpha_I, \sum \limits_{I \in \intval}  \beta_I \right).$$
\end{corollary}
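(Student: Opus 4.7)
\textbf{Proof Plan for Corollary~\ref{cor-po-mult-intval}.}

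The plan is to reduce to the single-interval results (Proposition~\ref{prop-po-one-interval} and Proposition~\ref{prop-po-one-interval-gen}) by exploiting the fact that the hypothesis on $\preceq$ forces each variable $z_j \in Z$ to be pinned to a unique interval $I \in \intval$. First I would establish that the sets $\{Z_I\}_{I \in \intval}$ form a partition of $Z$: by assumption, every $z_j$ satisfies exactly one of $z_j \preceq c_{-r}$, $c_s \preceq z_j$, or $c_i \preceq z_j \preceq c_{i+1}$ for some $i \in [-r,s-1]$, and the strict ordering $c_{-r} < \cdots < c_s$ on $C$ prevents a single $z_j$ from belonging to two different $Z_I$'s. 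Thus every assignment $\eta : Z \to \ratnum$ consistent with $\preceq$ takes each $z_j \in Z_I$ into $I$, and conversely any family of assignments $\eta_I : Z_I \to I$ (each consistent with the restriction of $\preceq$ to $Z_I$) glues to a well-defined $\eta$ consistent with $\preceq$, since the only remaining constraints $z_i \preceq z_j$ with $z_i \in Z_I$, $z_j \in Z_{I'}$, $I \ne I'$, are automatically satisfied because any value in $I$ is less than any value in $I'$ whenever $I$ sits strictly below $I'$.

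Second, this decomposition yields
\begin{equation*}
\sum_{z_j \in Z} a_j \eta(z_j) \;=\; \sum_{I \in \intval} \Bigl(\sum_{z_j \in Z_I} a_j \eta_I(z_j)\Bigr),
\end{equation*}
so the set on the left-hand side of the corollary equals the \emph{Minkowski sum} of the sets
\begin{equation*}
S_I \;=\; \Bigl\{ \textstyle\sum_{z_j \in Z_I} a_j \eta_I(z_j) \;\big|\; \eta_I : Z_I \to I \text{ consistent with } \preceq\!\!\upharpoonright_{Z_I}\Bigr\}.
\end{equation*}
By Proposition~\ref{prop-po-one-interval} (for bounded $I$) and Proposition~\ref{prop-po-one-interval-gen} (for the unbounded intervals $(-\infty,c_{-r})$ and $(c_s,+\infty)$), each $S_I$ equals the open interval $(\alpha_I,\beta_I)$, where the endpoints are allowed to take the values $\pm\infty$ in the unbounded cases.

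Third, I would invoke the elementary fact that the Minkowski sum of finitely many open intervals $(\alpha_I,\beta_I) \subseteq [-\infty,+\infty]$ equals $(\sum_I \alpha_I, \sum_I \beta_I)$, with the usual convention that if some $\alpha_I = -\infty$ then $\sum_I \alpha_I = -\infty$, and symmetrically for $+\infty$. The containment $\subseteq$ is immediate from strict addition of inequalities; for $\supseteq$, given any target $d' \in (\sum_I \alpha_I, \sum_I \beta_I)$, one picks any $d'_I \in (\alpha_I,\beta_I)$ satisfying $\sum_I d'_I = d'$ (this is possible because each interval has nonempty interior, so we have slack to redistribute), and then realizes each $d'_I$ as $\sum_{z_j \in Z_I} a_j \eta_I(z_j)$ using the single-interval propositions. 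Combining gives equality with $(\sum_I \alpha_I, \sum_I \beta_I)$, proving the corollary.

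The only mildly delicate step is the last one, where the $\pm\infty$ endpoints must be handled uniformly with finite endpoints; but this is routine since the definition of $\alpha_I, \beta_I$ in Proposition~\ref{prop-po-one-interval-gen} already absorbs the infinite cases cleanly, so no real obstacle arises beyond bookkeeping the cases.
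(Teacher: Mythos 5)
Your proposal is correct and takes essentially the same route the paper intends: the paper states this corollary without an explicit proof, asserting only that Propositions~\ref{prop-po-one-interval} and~\ref{prop-po-one-interval-gen} ``can be generalized naturally to multiple intervals,'' and your argument---partitioning $Z$ into the blocks $Z_I$, writing the target set as the Minkowski sum of the per-interval sets $S_I$, and adding the resulting open intervals endpoint-wise (with the $\pm\infty$ conventions)---is precisely that natural generalization. The only caveat is the degenerate situation where some $Z_I$ is empty (there $S_I=\{0\}$ rather than a nonempty open interval, so the ``nonempty interior'' slack you invoke is absent), but this affects the corollary's statement as much as any proof of it and is harmless for the endpoint arithmetic.
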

\begin{remark}
Note that  to define the meanings of the expressions $\sum \limits_{I \in \intval} \alpha_I$ and $\sum \limits_{I \in \intval}  \beta_I$ in Corollary~\ref{cor-po-mult-intval}, the arithmetics of $\ratnum$ should be extended to $\ratnum \cup \{-\infty, +\infty\}$ as follows: $-\infty + c = -\infty$ for $c \in \ratnum$ and $+\infty + c = +\infty$ (The expressions like $+\infty +(-\infty)$ will not occur in $\sum \limits_{I \in \intval} \alpha_I$ and $\sum \limits_{I \in \intval}  \beta_I$). 
\end{remark}

From Corollary~\ref{cor-po-mult-intval}, we can also deduce directly the following result.

\begin{corollary}\label{cor-po-mult-intval-ls-gr-0}
The following two conditions are equivalent,
\begin{itemize}
\item  there is an assignment $\eta: Z \rightarrow \ratnum$ such that $\eta$ is consistent with $\preceq$ and $a_0 + \sum \limits_{i \in [m]} a_i \eta(z_i) = 0$,
\item for each $I \in \intval$, there are two upward-closed subsets $Z'_I, Z''_I$ of $Z_I$, such that 
\begin{eqnarray}\label{eqn-po-ls-0-lem}
a_0 +  \sum \limits_{I \in \intval} \left(d \sum \limits_{z_i \in Z'_I} a_i +  c \sum \limits_{z_i \in Z_I \setminus Z'_I} a_i \right) < 0
\end{eqnarray}
 and 
\begin{eqnarray}\label{eqn-po-gr-0-lem}
a_0 + \sum \limits_{I \in \intval} \left(d \sum \limits_{z_i \in Z''_I} a_i + c \sum \limits_{z_i \in Z_I \setminus Z''_I} a_i \right) > 0.
\end{eqnarray}
\end{itemize}
\end{corollary}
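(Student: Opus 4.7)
My plan is to follow the outline already sketched in the paper, whose key insight is that by convexity we may replace the single equation $0\in\cA(w)$ by a pair of inequalities that are easy to test using a vector-addition model. First, fix a candidate sequence of transitions $t_1\cdots t_n$ with $q_n\in F$. By Proposition~\ref{prop:linear-raq}, the final contents $\vu_n$ are an affine function of the input data values $\vz=(d_1,\ldots,d_n)^t$, so the output $\zeta(q_n)(\vu_n)$ is a linear function $f(\vz)=\va\cdot M\vz+\va\cdot\vb+a'$. The compatibility of $\vz$ with the non-strict guards along $t_1\cdots t_n$ is captured by a conjunction $\Phi(\vz)$ of non-strict linear inequalities, which defines a (closed) convex polyhedron $\cP$. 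Since $f$ is continuous and $\cP$ is convex and connected, the intermediate-value property yields that $0\in f(\cP)$ iff there are $\vz_1,\vz_2\in\cP$ with $f(\vz_1)\le 0\le f(\vz_2)$; this is precisely condition~(\ref{eq:reach_bound}) in the excerpt, which in turn splits into the two independent conditions \textbf{F1} and \textbf{F2} on the infimum and supremum.

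Second, I invoke the standard fact from linear programming that the optimum of a linear objective over a polyhedron, when finite, is attained at a vertex (corner point), and otherwise is approached along an extreme ray. Since all constants appearing in $\Phi(\vz)$ come from the fixed finite set $\cN$ of initial control-variable values, the coordinates of every vertex of $\cP$ lie in $\cN_\infty=\cN\cup\{-\infty,+\infty\}$. Consequently, to check \textbf{F1} and \textbf{F2} it suffices to search for two (possibly distinct) corner-point-realising runs of $\cA$ on possibly different input words, the first yielding $f\le 0$ and the second $f\ge 0$.

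Third, I implement the search for such a pair by a single nondeterministic $\bbQ$-VASS $(S,\Delta)$ that runs two \emph{abstractions} of $\cA$ in lockstep. A state is a triple $(q,\eta_1,\eta_2)$ where $q\in Q$ is the current control state of $\cA$ (shared by both runs because they follow the same transition sequence) and each $\eta_i\colon X\to\cN_\infty$ is an order-respecting \emph{specification} describing the corner-point values of the control variables for run~$i$. The $2l$-dimensional counter component $(\vy_1,\vy_2)$ tracks the data-variable contents of the two runs. For each transition of $\cA$ and each pair of consistent refinements of $(\eta_1,\eta_2)$ that specify where $\cur$ sits with respect to $\eta_1(\vx)$ and $\eta_2(\vx)$, we add a $\bbQ$-VASS transition that updates $(\eta_1,\eta_2)$ by substituting the chosen value of $\cur$ and updates $(\vy_1,\vy_2)$ by the corresponding linear form (copyless-ness ensures that this is a legitimate additive update of counters). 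The reachability of some state $(q',\eta_1,\eta_2)$ with $q'\in F$ and a counter configuration satisfying $\zeta(q')(\eta_1(\vx),\vv_1)\le 0\le \zeta(q')(\eta_2(\vx),\vv_2)$ (or symmetrically) can then be encoded as configuration reachability in the $\bbQ$-VASS, and finally as satisfiability of an existential Presburger formula, which is decidable in $\nptime$.

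Finally, the complexity: the $\bbQ$-VASS has $|Q|\cdot(k+2)!^2$ many states (exponential in $k$) and a polynomial-dimension counter component; configuration reachability in $\bbQ$-VASS is in $\nptime$ in the size of the system, yielding an overall $\nexptime$ bound. The main obstacle I expect is the careful definition of~$\Delta$: one must guarantee that every firing of a $\bbQ$-VASS transition corresponds to a simultaneous legal firing of the same $\cA$-transition for both abstracted runs (so that the two runs share the same transition sequence and hence the same length), while still letting the two runs disagree about the value of $\cur$ at each step in a way that exactly realises the two chosen corner points. Handling the $-\infty$ and $+\infty$ entries in $\eta_i$ and the unbounded extreme rays (required when the infimum/supremum is not attained) needs a small symbolic extension of the update rules, and lifting the two simplifying assumptions---order-preservingness on $X$ and the restricted shape of data-variable reassignments---requires an additional preprocessing that splits transitions by the order type of $\cur$ relative to $\vx$ and pre-multiplies the counter updates by the copyless transfer matrix; this preprocessing is the source of the blow-up but keeps $\Delta$ finite and its construction effective.
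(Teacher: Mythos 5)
Your proposal does not prove the stated corollary; it sketches a proof of a different (though related) result. The corollary is a self-contained combinatorial statement about the linear form $a_0+\sum_{i\in[m]}a_i z_i$ over assignments $\eta\colon Z\to\ratnum$ consistent with the partial order $\preceq$: it asserts that such an assignment hits $0$ exactly when, for suitable \emph{upward-closed} subsets $Z'_I,Z''_I\subseteq Z_I$, the two corner-point evaluations (\ref{eqn-po-ls-0-lem}) and (\ref{eqn-po-gr-0-lem}) are strictly negative and strictly positive, respectively. What you describe instead is the overall decision procedure for Theorem~\ref{thm-reach-dec} --- fixing a transition sequence, invoking linearity (Proposition~\ref{prop:linear-raq}), splitting $0\in f(\cP)$ into the two optimization conditions \textbf{F1}/\textbf{F2}, and compiling the search for optimizing corner points into a $\bbQ$-VASS. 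That is the theorem this corollary is meant to \emph{support}, not the corollary itself. The paper obtains the corollary as an immediate consequence of a preceding range characterization (the set of values attained by the form over consistent assignments is exactly the \emph{open} interval whose endpoints are the minimal and maximal corner-point evaluations over upward-closed subsets), which in turn rests on a delicate induction on the sign-zones of the coefficient sequence; none of that content appears in your proposal.

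Even reading your linear-programming argument charitably as an attempted proof of the corollary, three essential pieces are missing. First, the relevant corner points are not arbitrary vertices of a polyhedron: they are precisely the evaluations indexed by upward-closed subsets of each $Z_I$ with respect to $\preceq$, and you give no argument for this correspondence. Second, the corollary requires \emph{strict} inequalities: since each variable ranges over an open interval, the infimum and supremum of the form are not attained and the achievable set is open, so a corner point evaluating exactly to $0$ witnesses nothing; the slogan ``the optimum is attained at a vertex'' would only yield non-strict conditions and is literally false here. Third, for the direction from (\ref{eqn-po-ls-0-lem}) and (\ref{eqn-po-gr-0-lem}) back to solvability, an intermediate-value argument needs two \emph{feasible} points with $f\le 0$ and $f\ge 0$, but the corner points (with coordinates in $\cN\cup\{-\infty,+\infty\}$) lie on the boundary or at infinity and are not consistent assignments; the paper bridges this with an explicit $\varepsilon$-perturbation construction producing a consistent $\eta$ with $a_0+\sum_{i\in[m]}a_i\eta(z_i)=0$, together with a formal calculus for the symbols $c_{-\infty},c_{+\infty}$. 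Your proposal acknowledges the $\pm\infty$ issue only as something that ``needs a small symbolic extension'' and supplies neither construction.
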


\begin{remark}\label{rem-po}
Note that the expressions in the left-hand-side of the inequality (\ref{eqn-po-ls-0-lem}) and (\ref{eqn-po-gr-0-lem}) can be rewritten into the expressions of the form $d_0 + d_1 c_{-\infty} + d_2 c_{+\infty}$, where $d_0, d_1, d_2 \in \ratnum$.  The meanings of the inequalities $d_0 + d_1  c_{-\infty} + d_2  c_{+\infty} < 0$ and $d_0 + d_1  c_{-\infty} + d_2  c_{+\infty} > 0$ are defined as follows. 
\begin{itemize}
\item $d_0 + d_1 c_{-\infty} + d_2 c_{+\infty} < 0$ iff one of the following three conditions holds: 
1) $d_1 > 0$, 2) $d_2 < 0$, 3) $d_1 \le 0$, $d_2 \ge 0$, and $d_0 + d_1c_{-r}   + d_2c_s   < 0$. Intuitively, the first condition means that if $c_{-\infty}$  is replaced by a sufficiently small negative rational number, then the value of $d_0 + d_1 c_{-\infty} + d_2 c_{+\infty}$ can be made negative, similarly for the second condition. The third condition means that if $c_{-\infty}$ is replaced by a rational number which is slightly smaller than $c_{-r}$, and $c_{+\infty}$ is replaced by a rational number which is slightly greater than $c_s$, then the value of $d_0 + d_1 c_{-\infty} + d_2 c_{+\infty}$ can be made negative. 
\item Symmetrically, $d_0 + d_1 c_{-\infty} + d_2 c_{+\infty} > 0$ iff one of the following three conditions holds: 
1) $d_1 < 0$, 2) $d_2 > 0$, 3) $d_1 \ge 0$, $d_2 \le 0$, and $d_0 + d_1c_{-r}   + d_2c_s   > 0$.
\end{itemize}
\end{remark}

%
%

For briefness, we will use the expression (\ref{eqn-po-ls-0-lem}) resp. (\ref{eqn-po-gr-0-lem}) to denote the expression in the left-hand-side of the inequality (\ref{eqn-po-ls-0-lem}) resp. (\ref{eqn-po-gr-0-lem}).
}

\subsection{Proof of Theorem~\ref{thm-reach-dec}: The decision procedure} \label{app-pf-thm-reach-dec}

\newcommand{\cnst}{{\sf cnst}}

\newcommand{\pmt}{{\sf pmt}}

\newcommand{\row}{{\sf row}}

\newcommand{\storedIn}{{\sf storedIn}}

In the following, we present a proof of Theorem~\ref{thm-reach-dec}.

Let $\cA=\langle Q,q_0,F,\vu_0,\delta,\zeta\rangle$
be a copyless $\raq$ with non-strict transition guards over $(X,Y)$, where $X=\{x_1,\ldots,x_k\}$
and $Y=\{y_1,\ldots,y_l\}$.
Let $\cN$ be the set of constants found in $\vu_0\ssX$. W.l.o.g., we assume that $0 \in \cN$, that is, the constant $0$ is in the initial contents control variables. In addition, let $c_{\min}$ and $c_{\max}$ be the minimum and maximum constant in $\cN$ and $\cN_\infty = \{-\infty, + \infty\} \cup \cN$.

To simplify the presentation, we first apply a normalization procedure to $\cA$. By abuse the notation, we still use $\cA$ to denote the normalized $\raq$. The normalized $\raq$ $\cA$ satisfies the following properties.
\begin{enumerate}
\item The set of control variables $X$ is partitioned into $X_1, X_2$ such that $X_1$ is the set of read-only control variables
 and $X_1$ holds all the constants in $\cN$, more specifically, there is a bijection $\cnst$ between $X_1$ and $\cN$  (intuitively, each variable $x \in X_1$ stores the constant $\cnst(x)$).
\item For each state $q \in Q$, a total preorder over $X$, denoted by $\preceq_q$, is associated with $q$, which is consistent with the rational order relation on $\cN$, that is, for every $x, x' \in X_1$, $x \preceq_q x'$ iff $\cnst(x) \le \cnst(x')$. We will use $\simeq_q$ to denote the equivalence relation induced by $\preceq_q$, that is, $x \simeq_q x'$ iff $x \preceq_q x'$ and $x' \preceq_q x$. In addition, let $\prec_q = \preceq_q \setminus \simeq_q$. Note that the total preorders $\preceq_q$ for $q \in Q$ can be seen as a reformulation of the orderings used in Section~\ref{sec:non-zero}.
\item For each transition $(p, \varphi(\vec{x}, \cur)) \rightarrow (q, A, B, \vec{b})$, let $[z'_1]_p, \dots, [z'_{k'}]_p$ be an enumeration of the equivalence classes of $\simeq_p$ such that $z'_1 \prec_p \dots \prec_p z'_{k'}$, then the guard $\varphi(\vec{x}, \cur)$ is of the form $\cur = z'_i$ with $i \in [k']$, or $z'_i \le \cur \le z'_{i+1}$ with $i \in [k'-1]$, or $\cur \le z'_1$, or $z'_{k'} \le \cur$. 
%
\end{enumerate}
For an $\raq$ $\cA$ of $n$ states and $k$ control variables, $k$ additional read-only control variables may be introduced to store the constants, then the $\raq$ after normalization has at most $O(n 2^{2k-1} (2k)! )$ states, where $2^{2k-1} (2k)!$ is the number of orderings for $2k$ control variables (cf. Section~\ref{sec:non-zero}), which is an upper bound on the number of total preorders for $2k$ control variables.

\smallskip

From now on, we assume that $\cA$ is normalized.

Suppose there is a word $w=d_1\cdots d_n$ that leads to $0$.
Let the run be
$(q_0,\vu_0)\vdash_{t_1,d_1}
(q_1,\vu_1)\vdash_{t_2,d_2}\cdots \vdash_{t_n,d_n} (q_n,\vu_n)$.
By Proposition~\ref{prop:linear-raq}, there are $M$ and $\vb$ such that
\begin{eqnarray*}
\vu_n & = & M \myvec {d_1\\ \vdots \\ d_n} + \vb.
\end{eqnarray*}
Now, these values $d_1,\ldots,d_n$ satisfies a set of inequalities
imposed by the transitions $t_1,\ldots,t_n$. Let $\Phi(d_1,\ldots,d_n)$ denote the conjunction of those inequalities. 
Note that due to the fact that $\cA$ is normalized, the guards in $t_1,\dots, t_n$ contain \emph{no disjunctions}, which means that the set of points (vectors) satisfying $\Phi(\vz)$ is a convex polyhedron.

Suppose the output function of $q_n$ be $\va\cdot \myvec{\vx\\ \vy} + a'$.
Define the following function:
\begin{eqnarray*}
f(\vz) & = & \va \cdot M \myvec {z_1\\ \vdots \\ z_n} +\va\cdot \vb +a'.
\end{eqnarray*}
Thus, by our assumption that $d_1\cdots d_n$ leads to zero,
we have:
\begin{eqnarray*}
f((d_1,\ldots,d_n)^t) = 0& \wedge & \Phi((d_1,\ldots,d_n)^t)=\ltrue.
\end{eqnarray*}
It follows that 
\begin{eqnarray}\label{eq:reach_bound-2}
	\exists\vz_1,\vz_2 \in \bbQ^{n}: f(\vz_1)  \leq   0  \leq  f(\vz_2) \wedge  \Phi(\vz_1)\wedge\Phi(\vz_2).
\end{eqnarray}
Observe that (\ref{eq:reach_bound-2}) holds iff the following two constraints hold simultaneously:
\begin{description}\itemsep=0pt
\item[{\bf [F1]}]
the infimum of $f(\vz)$ w.r.t. $\Phi(\vz)$ is $\leq 0$,
\item[{\bf [F2]}]
the supremum of $f(\vz)$ w.r.t. $\Phi(\vz)$ is $\geq 0$.
\end{description}
By the Simplex algorithm for linear programming~\cite{chvatal},
we know that the points that yield the optimum, i.e., the infimum and the supremum,
are at the ``corner'' points of convex polyhedra.
The constraints in $\Phi(\vz)$ contain the constants from $\cN$ (as a result of the fact that the initial contents of control variables are a fixed vector of constants),
so the corner points of the convex polyhedron represented by $\Phi(\vz)$ have components from $\cN_\infty$.

To establish F1 and F2, it is sufficient to find 
two corner points $\vz_1$ and $\vz_2$ such that:
$$
f(\vz_1)\leq 0 \leq f(\vz_2)
$$
To find these two points, we will construct
a $\bbQ$-VASS $\cB$, whose reachability is decidable in $\nptime$. (See 
Appendix~\ref{app-rat-vass}.)

Let $q \in Q$. A {\em specification} of $X$ w.r.t. $q$ is a mapping $\eta$ from $X$ to $\cN_{\infty}$
that respects $\preceq_q$,
i.e., 1) for each $x_i \in X_1$, $\eta(x_i) = \cnst(x_i)$, and 2) for each $x_i, x_j \in X$, if $x_i \preceq_q x_j$, then $\eta(x_i) \le \eta(x_j)$.
Intuitively, $\eta$ encodes the value of $x_i$ of a corner point, and $\eta(x_i)=c \in \cN$ means that $x_i$ is assigned with $c$, 

For $m,n \in \bbN$, we will use $[m]$ to denote $\{1,\dots, m\}$, and $[m, n]$ to denote $\{m, \dots, n\}$, provided that $m \le n$. 

We will construct a two-dimensional $\bbQ$-VASS $\cB=(S,\Delta)$
with two rational variables $C_1, C_2$. 
The set $S$ of states comprises two special states $q'_0, q'_f$ (whose purpose will become clear later) and the set of tuples $(q, q_f, \eta_1, \eta_2, \pi)$ such that 
\begin{itemize}
\item $q \in Q$, $q_f \in F$, 
\item $\eta_1, \eta_2$ are the specifications of $X$ w.r.t. $\preceq_q$,
\item $\pi$ is a mapping from $[l]$ to $[l] \cup \{0\}$.  
\end{itemize}
Intuitively, 
\begin{itemize}
\item $q$ represents the current state of $\cA$,
\item $q_f$ represents the final state of $\cA$ that we will reach eventually, 
\item $\eta_1$ and $\eta_2$ represent the two points $\vz_1$ and $\vz_2$ we are looking for,
\item for each $i \in [l]$, if $\pi(i) \in [l]$, then the current value of $y_i$ will be stored in $y_{\pi(i)}$ when reaching the state $q_f$, otherwise, the current value of $y_i$ will be lost eventually,
\item $C_1, C_2$ represent the values of $\zeta(q_f)(\vz_1)$ and $\zeta(q_f)(\vz_2)$ respectively.
\end{itemize}

Before defining $\Delta$, we introduce some additional notations.

Suppose $t=(p,\varphi)\to(q,A,B,\vb)$ in $\delta$ is a transition of $\cA$ and  $(p, q_f, \eta_1,\eta_2, \pi)\in S$.

\medskip

\paragraph*{The mapping $\storedIn_t$} Let $B = \myvec{B_X\ B_Y\ B_\cur}$ such that $B_X \in \bbQ^{l \times k}$, $B_Y \in \bbQ^{l \times l}$ and $B_\cur \in \bbQ^{l \times 1}$. Intuitively, $B_X, B_Y, B_\cur$ are divided according to control variables, data variables, and $\cur$ respectively. In addition, for $j \in [l]$, we will use $\row_j(B_Y)$ to denote the $j$-th row of $B_Y$.
We also define a mapping $\storedIn_t: [l] \rightarrow [l] \cup \{0\}$ to record where the original value of each data variable is stored after executing $t$, as follows: For each $i \in [l]$,
\begin{itemize}
\item if there is $i' \in [l]$ such that $(\row_{i'}(B_Y))(i) = 1$ (intuitively, the original value of $y_i$ is stored in $y_{i'}$ after executing $t$), then $\storedIn_t(i) = i'$, 
\item otherwise, $\storedIn_t(i) = 0$. 
\end{itemize}
Note that because of the copyless constraint, for each $i \in [l]$, there is at most one $i' \in [l]$ such that $(\row_{i'}(B_Y))(i)= 1$, thus the mapping $\storedIn_t$ is well-defined. 


\medskip

\paragraph*{$\pi$ is compatible with $t$} We say that $\pi$ are \emph{compatible} with $t$, if the following constraints are satisfied.
\begin{itemize}
\item For each $i \in [l]$ such that $\pi(i) \neq 0$, it holds that $\storedIn_t(i) \neq 0$. \\
Intuitively, if the current value of $y_i$ will be stored in some data variable eventually, then the current value of $y_i$ should not be lost after executing $t$.
\item For each $i_1, i_2 \in [l]$ such that $\storedIn_t(i_1) = \storedIn_t(i_2) \neq 0$, it holds that $\pi(i_1) = \pi(i_2)$. \\
Intuitively, if the current value of $y_{i_1}$ and $y_{i_2}$ will be stored in the same data variable after executing $t$, then eventually, either both of them will be lost, or otherwise they will be stored in the same data variable. 
\end{itemize}


\medskip

\paragraph*{$\pi'$ is consistent with the application of $t$ on $\pi$} We say that a mapping $\pi': [l] \rightarrow [l] \cup \{0\}$ is \emph{consistent with the application of $t$ on $\pi$} if for each $i \in [t]$
such that $\storedIn_t(i')=i$ for some $i' \in [l]$, we have $\pi'(i)=\pi(i')$ (intuitively, if the original value of $y_{i'}$ is stored in $y_i$ after executing $t$ and the original value of $y_{i'}$ will be stored in $y_{\pi(i')}$ eventually, then the value of $y_i$ after executing $t$ should be stored in $y_{\pi(i')}$ as well, thus $\pi'(i)$ takes the value of $\pi(i')$),
%

\smallskip

For each transition $t=(p,\varphi)\to(q,A,B,\vb)$ in $\delta$ and each $(p, q_f, \eta_1,\eta_2, \pi)\in S$ such that \emph{$\pi$ is compatible with $t$}, we will define a set of transitions in $\Delta$.

In the following, let $\zeta(q_f) = \va\cdot \myvec{\vx\\ \vy} + a'$, and the reassignments of the data variables in $t$ be 
$y_j:= (\row_j(B_Y))^t \cdot \vy + f_j(\vx,\cur)$. In addition, let $\vf(\vx,\cur)=(f_1(\vx,\cur),\ldots,f_l(\vx,\cur))$.
 
Let $[z'_1], \dots, [z'_{k'}]$ be an enumeration of the equivalence classes of $\simeq_p$ on $X$ such that $z'_1 \prec_p \dots \prec_p z'_{k'}$. Then $\varphi(\vx,\cur)$ is of one of the following forms, $\cur = z'_i$ for $i \in [k']$, $\cur \le z'_1$, $z'_i \le \cur \le z'_{i+1}$ for $i \in [k'-1]$, and $z'_{k'} \le \cur$. 


We make the following conventions below.
\begin{itemize}
\item Let $\eta'_1$ denote a specification that is consistent with
the application of $A$ on $\eta_1$, where a specification $\eta'_1$ is \emph{consistent with the application of $A$ on $\eta_1$} if for each $j, j' \in [k]$ such that $A(j, j') = 1$, we have $\eta'_1(x_j) = \eta_1(x_{j'})$. Note that this definition does not take the assignments of $\cur$ to control variables into consideration.
\item Similarly, let $\eta'_2$ denote a specification that is consistent with the application of $A$ on $\eta_2$. 
\item In addition, let $\pi': [l] \rightarrow [l] \cup \{0\}$ denote a mapping that is consistent with the application of $t$ on $\pi$.
\end{itemize}

We will define the transitions of $\Delta$ according to the different forms of $\varphi$.

\begin{enumerate}\itemsep=1pt
\item[{\bf [T1]}]
If $\varphi$ is of the form $\cur=z'_i$ for $i \in [k']$, then
\begin{eqnarray*}
((p, q_f, \eta_1,\eta_2, \pi), (c'_{1}, c'_{2}),  (q, q_f, \eta'_{1},\eta'_{2}, \pi'))
& \in & \Delta
\end{eqnarray*}
where $c'_1 = \sum \limits_{\pi'(j) \neq 0} \va(\pi'(j)) \cdot c_{1,j}$ and $c'_2 = \sum \limits_{\pi'(j) \neq 0} \va(\pi'(j)) \cdot  c_{2,j}$ such that for each $j=1,\ldots,l$:
\begin{eqnarray*}
c_{1,j} & = & f_j(\eta_1(\vx),\eta_1(z'_i)), 
\\
c_{2,j} & = & f_j(\eta_2(\vx),\eta_2(z'_i)). 
\end{eqnarray*}
Intuitively, $c_{1,j}$ represents the value added to $y_j$ by $t$, in addition, if $\pi'(j) \neq 0$, then this value will be stored in $y_{\pi'(j)}$ eventually, thus $\va(\pi'(j)) \cdot c_{1,j}$ will be added to the final output. Similarly for $c_{2,j}$.

\smallskip

\item[{\bf [T2]}]
If $\varphi$ is of the form $\cur \le z'_1$,
then the following transitions are in $\Delta$:
\begin{itemize}\itemsep=0pt
\item
$((p, q_f, \eta_1,\eta_2, \pi),  (c'_{1,low}, c'_{2,low}),  (q, q_f, \eta'_{1},\eta'_{2}, \pi'))$, such that for each $j \in [k]$ with $A(j, k+1) = 1$, we have $\eta'_1(x_j)=-\infty$ and $\eta'_2(x_j)=-\infty$,
\item
$((q,\eta_1,\eta_2),  (c'_{1,low}, c'_{2,up}), (q, q_f, \eta'_{1},\eta'_{2}, \pi'))$, such that for each $j \in [k]$ with $A(j, k+1) = 1$, we have $\eta'_1(x_j)=-\infty$ and $\eta'_2(x_j)=\eta_2(z'_1)$,
\item
$((q,\eta_1,\eta_2),  (c'_{1,up}, c'_{2,low}),  (q, q_f, \eta'_{1},\eta'_{2}, \pi'))$, such that for each $j \in [k]$ with $A(j, k+1) = 1$, we have $\eta'_1(x_j)=\eta_1(z'_1)$ and $\eta'_2(x_j)=-\infty$,
\item
$((q,\eta_1,\eta_2),  (c'_{1,up}, c'_{2,up}),  (q, q_f, \eta'_{1},\eta'_{2}, \pi'))$, such that for each $j \in [k]$ with $A(j, k+1) = 1$, we have $\eta'_1(x_j)=\eta_1(z'_1)$ and $\eta'_2(x_j)=\eta_2(z'_1)$,
\end{itemize}
where
\begin{itemize}
\item  $c'_{1, low} = \sum \limits_{\pi'(j) \neq 0} \va(\pi'(j)) \cdot \vc_{1,low}(j)$, and $c'_{2, low} = \sum \limits_{\pi'(j) \neq 0} \va(\pi'(j)) \cdot  \vc_{2,low}(j)$, 
\item $c'_{1, up} = \sum \limits_{\pi'(j) \neq 0} \va(\pi'(j)) \cdot \vc_{1,up}(j)$, and $c'_{2, up} = \sum \limits_{\pi'(j) \neq 0} \va(\pi'(j)) \cdot  \vc_{2,up}(j)$,
\item $\vc_{1,low}$ and $\vc_{2,low}$
denote the vectors $\vf(\vx,\cur)$
with $(\vx,\cur)$ being substituted with $(\eta_1(\vx),-\infty)$
and $(\eta_2(\vx),-\infty)$ respectively,
\item $\vc_{1,up}$ and $\vc_{2,up}$
denote the vectors $\vf(\vx,\cur)$
with $(\vx,\cur)$ being substituted with $(\eta_1(\vx),\eta_1(z'_1))$
and $(\eta_2(\vx),\eta_2(z'_1))$ respectively.
\end{itemize}

Above, when we substitute $\cur$ with $-\infty$
inside $\vf(\vx,\cur)$,
we do not evaluate the $\pm\infty$ and take them as two special variables.
For example, we may have expression $3x_1 -2 \cur$.
Under the substitution $(x_1,\cur)\mapsto (-\infty,-\infty)$,
we obtain a $3(-\infty) -2(-\infty)=(3-2)(-\infty)=-\infty$.
Intuitively, $+\infty$ and $-\infty$ are to be interpreted
as arbitrary rational numbers bigger and smaller than
$c_{\max}$ and $c_{\min}$ respectively.  In Appendix~\ref{app-rat-vass}, we will show that the reachability problem for $\bbQ$-VASS that contain special symbols $\pm \infty$ can also be reduced to the satisfiability of existential Presburger formula, similar to the normal $\bbQ$-VASS (without $\pm \infty$).

\item[{\bf [T3]}] 
If $\varphi$ is of the form $z'_i \le \cur \le z'_{i+1}$ for some $i \in [k'-1]$,
then the following transitions are in $\Delta$:
\begin{itemize}\itemsep=0pt
\item
$((p, q_f, \eta_1,\eta_2, \pi),  (c'_{1,low}, c'_{2,low}),  (q, q_f, \eta'_{1},\eta'_{2}, \pi'))$, such that for each $j \in [k]$ with $A(j, k+1) = 1$, we have $\eta'_1(x_j)=\eta_1(z'_i)$ and $\eta'_2(x_j)=\eta_2(z'_i)$,
\item
$((q,\eta_1,\eta_2),  (c'_{1,low}, c'_{2,up}), (q, q_f, \eta'_{1},\eta'_{2}, \pi'))$, such that for each $j \in [k]$ with $A(j, k+1) = 1$, we have $\eta'_1(x_j)=\eta_1(z'_i)$ and $\eta'_2(x_j)=\eta_2(z'_{i+1})$,
\item
$((q,\eta_1,\eta_2),  (c'_{1,up}, c'_{2,low}),  (q, q_f, \eta'_{1},\eta'_{2}, \pi'))$, such that for each $j \in [k]$ with $A(j, k+1) = 1$, we have $\eta'_1(x_j)=\eta_1(z'_{i+1})$ and $\eta'_2(x_j)=\eta_2(z'_i)$,
\item
$((q,\eta_1,\eta_2),  (c'_{1,up}, c'_{2,up}),  (q, q_f, \eta'_{1},\eta'_{2}, \pi'))$, such that for each $j \in [k]$ with $A(j, k+1) = 1$, we have $\eta'_1(x_j)=\eta_1(z'_{i+1})$ and $\eta'_2(x_j)=\eta_2(z'_{i+1})$,
\end{itemize}
where
\begin{itemize}
\item  $c'_{1, low} = \sum \limits_{\pi'(j) \neq 0} \va(\pi'(j)) \cdot \vc_{1,low}(j)$, $c'_{2, low} = \sum \limits_{\pi'(j) \neq 0} \va(\pi'(j)) \cdot  \vc_{2,low}(j)$, 
\item and $c'_{1, up} = \sum \limits_{\pi'(j) \neq 0} \va(\pi'(j)) \cdot \vc_{1,up}(j)$, and $c'_{2, up} = \sum \limits_{\pi'(j) \neq 0} \va(\pi'(j)) \cdot  \vc_{2,up}(j)$,
\item $\vc_{1,low}$ and $\vc_{2,low}$
denote the vectors $\vf(\vx, \cur)$
with $(\vx, \cur)$ being substituted with $(\eta_1(\vx), \eta_1(z'_i))$
and $(\eta_2(\vx),\eta_2(z'_i))$ respectively,
\item $\vc_{1,up}$ and $\vc_{2,up}$
denote the vectors $\vf(\vx,\cur)$
with $(\vx,\cur)$ being substituted with $(\eta_1(\vx), \eta_1(z'_{i+1}))$
and $(\eta_2(\vx), \eta_2(z'_{i+1}))$ respectively.
\end{itemize}

\smallskip

\item[{\bf [T4]}] 
If $\varphi$ is of the form $z'_{k'} \le \cur$,
then the following transitions are in $\Delta$:
\begin{itemize}
\item
$((p, q_f, \eta_1,\eta_2, \pi),  (c'_{1,low}, c'_{2,low}),  (q, q_f, \eta'_{1},\eta'_{2}, \pi'))$, such that for each $j \in [k]$ with $A(j, k+1) = 1$, we have $\eta'_1(x_j)=\eta_1(z'_{k'})$ and $\eta'_2(x_j)=\eta_2(z'_{k'})$,
\item
$((q,\eta_1,\eta_2),  (c'_{1,low}, c'_{2,up}), (q, q_f, \eta'_{1},\eta'_{2}, \pi'))$, such that for each $j \in [k]$ with $A(j, k+1) = 1$, we have $\eta'_1(x_j)=\eta_1(z'_{k'})$ and $\eta'_2(x_j)=+\infty$,
\item
$((q,\eta_1,\eta_2),  (c'_{1,up}, c'_{2,low}),  (q, q_f, \eta'_{1},\eta'_{2}, \pi'))$, such that for each $j \in [k]$ with $A(j, k+1) = 1$, we have $\eta'_1(x_j)=+\infty$ and $\eta'_2(x_j)=\eta_1(z'_{k'})$,
\item
$((q,\eta_1,\eta_2),  (c'_{1,up}, c'_{2,up}),  (q, q_f, \eta'_{1},\eta'_{2}, \pi'))$, such that for each $j \in [k]$ with $A(j, k+1) = 1$, we have $\eta'_1(x_j)=+\infty$ and $\eta'_2(x_j)=+\infty$,
\end{itemize}
where
\begin{itemize}
\item  $c'_{1, low} = \sum \limits_{\pi'(j) \neq 0} \va(\pi'(j)) \cdot \vc_{1,low}(j)$, and $c'_{2, low} = \sum \limits_{\pi'(j) \neq 0} \va(\pi'(j)) \cdot  \vc_{2,low}(j)$, 
\item $c'_{1, up} = \sum \limits_{\pi'(j) \neq 0} \va(\pi'(j)) \cdot \vc_{1,up}(j)$, and $c'_{2, up} = \sum \limits_{\pi'(j) \neq 0} \va(\pi'(j)) \cdot  \vc_{2,up}(j)$,
\item $\vc_{1,low}$ and $\vc_{2,low}$
denote the vectors $\vf(\vx,\cur)$
with $(\vx,\cur)$ being substituted with $(\eta_1(\vx),\eta_1(z'_{k'}))$
and $(\eta_2(\vx), \eta_1(z'_{k'}))$ respectively,
\item $\vc_{1,up}$ and $\vc_{2,up}$
denote the vectors $\vf(\vx,\cur)$
with $(\vx,\cur)$ being substituted with $(\eta_1(\vx), +\infty)$
and $(\eta_2(\vx), +\infty)$ respectively.
\end{itemize}

\item[{\bf [T5]}] Finally, $\Delta$ includes the following transitions involving $q'_0$ and $q'_f$.
\begin{itemize}
\item For each $q_f \in F$ and $\pi: [l] \rightarrow [l] \cup \{0\}$ such that $\zeta(q_f) = \va \cdot \myvec{\vx \\ \vy} + a'$, the transition $(q'_0, (c_1, c_2), (q_0,q_f, \eta, \eta, \pi)) \in \Delta$, where $c_1 = \sum \limits_{\pi(j) \neq 0} \va(\pi(j)) \cdot \vu_0\ssY(j)$, $c_2 = \sum \limits_{\pi(j) \neq 0} \va(\pi(j)) \cdot \vu_0\ssY(j))$, and $\eta(x_i)=\vu_0\ssX(i)$ for each $x_i \in X$.
\item For each state $(q_f, q_f, \eta_1, \eta_2, \pi) \in S$ such that $\pi(j)=j$ for each $j \in [l]$, suppose $\zeta(q_f) = \va \cdot \myvec{\vx \\ \vy} + a'$, then the transition $((q_f, q_f, \eta_1, \eta_2, \pi), (c_1, c_2), q'_f) \in \Delta$, where $c_1 = a' +  \va\ssX \cdot \eta_1(\vx)$ and $c_2 = a' + \va\ssX \cdot \eta_2(\vx)$. \\
The requirement that $\pi(j) = j$ for each $j \in [l]$ is consistent with the intuition of $\pi$: When reaching the final state $q_f$, for each $j \in [l]$, the current value of $y_j$ is stored in $y_{j=\pi(j)}$. Note that the contents of $y_j \in Y$ have been added to the final output, this is why $c_1, c_2$ only need take the contents of $\vx$ into consideration.
\end{itemize}
\end{enumerate}

%
%
%

\medskip

Then there is $w$ such that $0 \in \cA(w)$
iff there is a configuration
$(q'_f, c'_1, c'_2)$ reachable from $(q'_0, 0, 0)$ in $\cB$ such that either $c'_1 \le 0 \le c'_2$ or $c'_2 \le 0 \le c'_1$.

Therefore, the reachability problem for the $\raq$ $\cA$ is reduced to the  configuration coverability problem for $\bbQ$-VASS $\cB$ (via a Karp reduction),
which in turn, can be reduced to satisfiability problem for 
existential Presburger formula.
See Appendix~\ref{app-rat-vass} for the details.

\medskip

\paragraph*{Complexity analysis} Suppose a non-normalized $\raq$ $\cA$ is given as the input, with $n$ states, $k$ control variables, and $l$ data variables respectively. Then as mentioned before, the number of states of the normalized $\raq$ $\cA'$ is at most $n2^{2k-1}(2k)!$. The number of specifications $\eta$ is at most $k^{2k}$ (at most $2k$ control variables and at most $k$ constants in $\cA'$), while the number of mappings $\pi$ is at most $(l+1)^l$. Since  each state from $S$ is of the form $(q, q_f, \eta_1, \eta_2, \pi)$ (except two special states), it follows that the cardinality of $S$ is at most $(n2^{2k-1}(2k)!)^2 (k^{2k})^2 (l+1)^l$, which is exponential over the size of $\cA$. Because the reachability problem for $\bbQ$-VASS is in $\nptime$ (cf. Theorem~\ref{thm-str-reach-q-vass} in Appendix~\ref{app-rat-vass}), we conclude that the reachability problem for copyless $\raq$ is in $\nexptime$.

\hide{
Let us fix a copyless $\raq$ $\cA=\langle Q, q_0, \vec{u}_0, F, \delta, \zeta \rangle$.  


Let $\cN$ denote the set of rational constants occurring in $\{\vec{u}_{0}(i) \mid i \in [k]\}$. 

Without loss of generality, we assume that $\cN = \{ c_{-r}, \dots, c_{-1}, c_0, c_1, \dots, c_s\}$ such that $c_0 = 0$, and $ c_{-r} < \dots < c_{-1} < c_0 < c_1 < \dots < c_s$. The constants in $\const(\cA)$ separate the set of rational numbers $\ratnum$ into $r + s + 2$ open intervals, i.e. $(-\infty, c_{-r})$, $(c_{-r}, c_{-r+1}), \dots, (c_{s-1}, c_s)$, and $(c_s, +\infty)$.  Let $\intval$ denote the set of these intervals. 

By utilizing Corollary~\ref{cor-po-mult-intval-ls-gr-0}, we show that the reachability problem of $\cA$ can be reduced to the configuration coverability problem of a rational VASS $\cB$.

Let $P$ be a path from $q_0$ to some $q_f \in F$. 
A run of $\cA'$ on an input $w$ is said to \emph{follow} $P$ if the sequence of transitions in the run is exactly $P$. For each transition $(p, \varphi(\vx, \cur))  \to  (q, A, B, \vb)$ on $P$ such that $A(i, k+1) \neq 0$  for some $i \in [k]$ or $B(i, k+l + 1) \neq 0$ for some $i \in [l]$ (intuitively, this means that $\varphi(\vx, \cur)$ does not require that $\cur$ is equivalent to any control variable, thus $\cur$ occurs in the expressions to update the control or data variables),   a fresh variable is introduced to represent the input data value read when executing this transition. Suppose $m$ fresh variables, say $z_1,\dots, z_m$, are introduced when traversing $P$. Then from Proposition~\ref{prop:linear-raq}, the output of each run of $\cA'$ that follows $P$ can be specified by an expression of the form $a_0 + a_1 z_1 + \dots + a_m z_m$, where $a_0, a_1, \dots, a_m$ are rational constants. Let $Z=\{z_1,\dots, z_m\}$. Then the guards and assignments of the transitions on $P$ induce a partial order $\preceq_P$ on $Z \cup \const(\cA)$. Note that the restriction of $\preceq_P$ to $\const(\cA)$ is identical to the rational order relation on $\const(\cA)$. Let us use $\prec_P$ to denote $\preceq_P \setminus \{(z, z) \mid z \in Z \cup \const(\cA)\}$. For each $I \in \intval$, we use $Z_I$ to denote the set of variables $z_i \in Z$ such that $z_i$ should take a value from $I$.

The main idea of the reduction is as follows: 
\begin{quote}
\it Guess \emph{on the fly} a path $P$ of $\cA'$ from $q_0$ to some $q_f \in F$ and two upward-closed subsets $Z'_I, Z''_I$ of $Z_I$ with respect to $\preceq_P$ for each $I \in \intval$, use the rational variables $\overrightarrow{C'} = (C'_0, C'_1, C'_2)$ to store the value of the expression (\ref{eqn-po-ls-0-lem}) and $\vec{C''} = (C''_0, C''_1, C''_2)$ to store the value of the expression  (\ref{eqn-po-gr-0-lem}), update $\overrightarrow{C'}, \overrightarrow{C''}$ according to the transitions of $P$, and when $q_f$ is reached, check that the values of $C', C''$ are less than resp. greater than zero. 
\end{quote}
The three variables $C'_0, C'_1, C'_2$ correspond to $d_0, d_1,d_2$ in Remark~\ref{rem-po}, where $d_1,d_2$ are the coefficients of $c_{-\infty}$ and $c_{+\infty}$ respectively. Similarly for $C''_0, C''_1, C''_2$.

We are ready to present the construction of $\cB$.


At first, $\cB$ nondeterministically chooses a state $q_f \in F$. Suppose $\zeta(q_f) = c + \sum^{k}_{i=1} \alpha_i x_i + \sum^l_{i=1} \beta_i y_i$. 
Then $\cB$ guesses nondeterministically a path $P$ of $\cA$ from $q_0$ to $q_f$.



For each guessed transition $t$ of $P$, $\cB$ simulates the execution of $t$, and updates the rational variables $\overrightarrow{C'}, \overrightarrow{C''}$ to record the values of the two expressions (\ref{eqn-po-ls-0-lem}) and (\ref{eqn-po-gr-0-lem}). 
%
%

When a transition $t$ of $\cA$ is guessed,  suppose that a fresh variable $z_{i'}$ is introduced to represent the data value read by $t$, and $z_{i'}$ should take a value from some interval $I \in \intval$, in order to update $\overrightarrow{C'}, \overrightarrow{C''}$, we need the following two types of information, 
\begin{enumerate}
\item whether $z_{i'}$ belongs to $Z'_I$ resp. $Z''_I$,
\item for each $y_i \in Y$, how the copies of $z_{i'}$ added to $y_i$ in $t$ will evolve in the future, that is, when reaching the state $q_f$ and the guessing of $P$ is over, whether these copies of $z_{i'}$ have been lost, or these copies of $z_{i'}$ are still kept in some data variable $y_{j}$. 
\end{enumerate}
To acquire the first type of information, for each $I \in \intval$, $\cB$ introduces the integer variables $idx'_I$ and $idx''_I$ that range over $[k] \cup \{0\}$, with the following intuition:
If the current state of $\cA$ is $q$, then $idx'_I$ records the index $j$ of the control variable $x_{j}$ such that  the current value of $x_{j}$, which is represented by some fresh variable $z_{i'}$, belongs to $Z'_I$, in addition, $[x_{j}]_q$ is the \emph{$\preceq_q$-minimum} equivalence class of $\simeq_q$ satisfying this condition ($idx'_I = 0$ means that no values stored in control variables belong to $Z'_I$). The intuitive meaning of $idx''_I$ is similar, with $Z'_I$ replaced by $Z''_I$. 
Note that the integer variables $idx'_I, idx''_I$ should be taken as components of the control states of $\cB$.

To acquire the second type of information, we introduce a mapping $pmt: [l] \rightarrow [l] \cup \{0\}$ to keep track of the evolvement of the values of data variables. Intuitively, for each $i \in [l]$, if $pmt(i) \in [l]$, then the current value of $y_i$ will be stored in $y_{pmt(i)}$ eventually when reaching the state $q_f$ and the guessing is over, otherwise, the current value of $y_i$ will be lost. Note that $pmt$ should still be understood as a component of the control states of $\cB$.

\smallskip

\noindent {\bf The initial values of the variables}. \\ 
The initial values of the variables in $\cB$ are set as follows: 
\begin{itemize}
\item for each $I \in \intval$, let $(idx'_I, idx''_I) := (0,  0)$,  
\item for each $i \in [l]$, nondeterministically choose the value of $pmt(i)$ from $[l] \cup \{0\}$, 
\item in addition, let $\overrightarrow{C'} := (c, 0, 0)$ and $\overrightarrow{C''}:=(c, 0, 0)$ (recall that $c$ is the constant in $\zeta(q_f)$).
\end{itemize}

\smallskip

\noindent {\bf The transition rules of $\cB$ to simulate the transitions of $\cA$}. \\  
For each transition $t$ of $\cA$, $\cB$ contains the transition rules to update the variables  according to $t$.

Let us fix a transition $t =  (p, \varphi(\vec{x}, \cur)) \rightarrow (q, A, B, \vec{b})$ of $\cA$ from now on. 

Suppose $\myvec{x_1 := r_1; \ldots; x_k := r_k; y_1 := s_1; \ldots; y_l := s_l}$ is the assignments in $t$. 
Then for each $i \in [l]$, 
$$s_i =b_i + \sum \limits_{j \in [l]} B(i, k+j) y_j + \sum \limits_{j \in [k]} B(i, j) x_j + B(i, k+l+1)\ \cur.$$

The updates of the variables in $\cB$ comprise three phases.
\begin{itemize}
\item {\bf Phase 1}:  Update the mapping $pmt$, according to $ \sum \limits_{j \in [l]} B(i, k+j) y_j $  in $s_i$ for $i \in [l]$.
\item {\bf Phase 2}: Update $\overrightarrow{C'}$ and $\overrightarrow{C''}$, according to $b_i$ and the expressions $\sum \limits_{j \in [k]} B(i, j) x_j$  in $s_i$  for $i \in [l]$.
\item {\bf Phase 3}: Update $idx'_I$, $idx''_I$ for $I \in \intval$, $\overrightarrow{C'}$ and $\overrightarrow{C''}$, according to the guard $\varphi(\vec{x}, \cur)$ and the expressions $B(i, k+l+1)\ \cur$  in $s_i$ for $i \in [l]$. 
\end{itemize}

\smallskip

\noindent {\bf Phase 1}.

For preparation, we define a mapping $storedIn_t: [l] \rightarrow [l] \cup \{0\}$, to record where the original value of each data variable is stored after executing $t$, as follows: For each $i \in [l]$,
\begin{itemize}
\item if there is $i' \in [l]$ such that $B(i', k+i) = 1$ (intuitively, the original value of $y_i$ is stored in $y_{i'}$ after executing $t$), then $storedIn_t(i) = i'$, 
\item otherwise, $storedIn_t(i) = 0$. 
\end{itemize}
Note the because of the copyless constraint, for each $i \in [l]$, there is at most one $i' \in [l]$ such that $B(i', k+i) = 1$, thus $storedIn_t$ is well-defined. 

\smallskip

In Phase 1, $\cB$ first makes sure that the following two constraints are satisfied:
\begin{itemize}
\item For each $i \in [l]$ such that $pmt(i) \neq 0$, it holds that $storedin_t(i) \neq 0$. \\
Intuitively, if the current value of $y_i$ will be stored in some data variable eventually, then the current value of $y_i$ should not be lost after executing $t$.
\item For each $i_1, i_2 \in [l]$ such that $storedin_t(i_1) = storedin_t(i_2) \neq 0$, it holds that $pmt(i_1) = pmt(i_2)$. \\
Intuitively, if the current value of $y_{i_1}$ and $y_{i_2}$ will be stored in the same data variable after executing $t$, then eventually, either both of them will be lost, or otherwise they will be stored in the same data variable. 
\end{itemize}

Then the mapping $pmt$ is updated as follows: For each $i \in [t]$,
\begin{itemize}
\item if there is $i' \in [l]$ such that $storedin_t(i')=i$, then let $pmt(i):=pmt(i')$ (intuitively, if the original value of $y_{i'}$ is stored in $y_i$ after executing $t$ and the original value of $y_{i'}$ will be stored in $y_{pmt(i')}$ eventually when the guessing of $P$ is over, then the value of $y_i$ after executing $t$ should be stored in $y_{pmt(i')}$ as well, thus $pmt(i)$ is updated by taking the original value of $pmt(i')$),
\item otherwise,  $\cB$ nondeterministically assigns to $pmt(i)$ a value from  $[l] \cup \{0\}$ (in order to guess where the value assigned to $y_i$ in $t$ will be stored eventually). 
\end{itemize}
The description of Phase I is finished.

\medskip

From now on, for each $i \in [l]$, when we refer to $pmt(i)$, we mean the (possibly) updated value of $pmt(i)$.




Before describing Phase 2 and 3, we introduce some additional notations.
\begin{itemize}
\item For $x_j \in X$, $x_j$ is called \emph{$\preceq_p$-rigid} if there is $x_{j'} \in X_1$ such that $x_j \simeq_p x_{j'}$.  We use $R_{p}$ to denote the set of $x_j \in X$ such that $x_j$ is $\preceq_p$-rigid. Note that $X_1 \subseteq R_p$. Intuitively, $x_j$ is $\preceq_p$-rigid if its value is required by $\preceq_p$ to be equivalent to some rational constant stored in a control variable from $X_1$.
\item We extend the mapping $cnst$ from $X_1$ to $R_p$ as follows: For each $x_j \in R_p \setminus X_1$, $cnst(x_j) = cnst(x_{j'})$ such that $x_{j'} \in X_1$ and $x_j \simeq_p x_{j'}$.   
\item In addition, we define a mapping $ass: [k] \rightarrow [k+1]$ to represent the assignment of control variables in $t$ as follows: For each $i \in [k]$, $ass(i) = k+1$ iff $A(i, k+1)  = 1$, and $ass(i) = j \in [k]$ iff $A(i, j)  = 1$. Intuitively, $ass(i)= k+1$ if $x_i$ is assigned the value of $\cur$, and $ass(i) = j \in [k]$ if $x_i$ is assigned the value of $x_j$.
\end{itemize}


\smallskip

\noindent {\bf Phase 2}.
When updating the rational variables $\overrightarrow{C'}$ and $\overrightarrow{C''}$, we will ignore the assignments $s_i$ such that $pmt(i)=0$, since the value of $y_i$ will be lost eventually and $s_i$ will not contribute to the final output.

We first update $\overrightarrow{C'}$ and $\overrightarrow{C''}$ for the $\preceq_p$-rigid control variables by letting 
$$C'_0:= C'_0 + \sum \limits_{pmt(i) \neq 0} \beta_{pmt(i)} \left(b_{i} + \sum \limits_{x_j \in R_p} B(i, j)\ cnst(x_j)\right) \mbox{\ \ and\ \ } C''_0:=C''_0 + \sum \limits_{pmt(i) \neq 0} \beta_{pmt(i)} \left(b_{i} + \sum \limits_{x_j \in R_p} B(i, j)\ cnst(x_j)\right).$$
Intuitively, for each data variable $pmt(i) \neq 0$, since $t$ adds the constant $b_{i} + \sum \limits_{x_j \in R_p} B(i, j)\ cnst(x_j)$ to $y_{i}$, which will be stored in $y_{pmt(i)}$ eventually,  the constant $ \beta_{pmt(i)} \left(b_{i} + \sum \limits_{x_j \in R_p} B(i, j)\ cnst(x_j) \right)$ will be added to the final output.

For each non-$\preceq_p$-rigid control variable $x_j$, its value, which is represented by some fresh variable $z_{i}$, belongs to some $I \in \intval$. We wil show how to update $\overrightarrow{C'}$. The updates of $\overrightarrow{C''}$ are analogous. We  distinguishing whether $z_i$ belongs to $Z'_I$ or not (characterised by the condition $idx'_I > 0$ and $x_{idx'_I} \preceq_p x_j$):
\begin{itemize}
\item If $idx'_I > 0$ and $x_{idx'_I} \preceq_p x_j$, then 
\begin{itemize}
\item if $\supr_I \neq c_{+\infty}$, then let $C'_0: = C'_0 + \supr_I \sum \limits_{pmt(i') \neq 0} \beta_{pmt(i')} B(i', j)$,
\item otherwise, let $C'_2: = C'_2 +  \sum \limits_{pmt(i') \neq 0} \beta_{pmt(i')} B(i', j)$.
\end{itemize}
\item otherwise, 
\begin{itemize}
\item if $\infr_I \neq c_{-\infty}$, then let
$C'_0: = C'_0 + \infr_I \sum \limits_{pmt(i') \neq 0} \beta_{pmt(i')} B(i', j)$, 
\item otherwise, let
$C'_1: = C'_1 + \sum \limits_{pmt(i') \neq 0} \beta_{pmt(i')} B(i', j)$.
\end{itemize}
\end{itemize}
Intuitively, since the value of $x_j$, represented by some $z_{i}$, belongs to $Z'_I$, and $B(i', j)$ copies of $z_{i}$ are added to $y_{i'}$ in $t$, in addition, the value of $y_{i'}$ will be stored in $pmt(i')$ eventually, it follows that $\beta_{pmt(i')} B(i', j)$ copies of $z_{i}$ will be added to the final output. Because $z_{i}$ can take a value arbitrarily close to $\supr_I$ resp. $\infr_I$, we add $\supr_I \beta_{pmt(i')} B(i', j)$ resp. $\infr_I \beta_{pmt(i')} B(i', j)$ to $C'_0$, if $\supr_I \neq c_{+\infty}$ resp. $\infr_I \neq c_{-\infty}$, and add  $\beta_{pmt(i')} B(i', j)$ to $C'_2$ and $C'_1$ respectively otherwise.

%
%

\smallskip

\noindent {\bf Phase 3}.
Let $[z'_1]_p, \dots, [z'_t]_p$ be an enumeration of the equivalence classes of $\simeq_p$ on $X$. We will exemplify the updates by considering the typical situation 
that $\varphi(\vec{x},\cur)$ is of the form $z'_i < \cur < z'_{i+1}$ for some $i \in [t-1]$. For the other situations, the updates are similar.  In addition, since the updates of $idx''_I$ and $\overrightarrow{C''}$ are analogous to those of $idx'_I$ and $\overrightarrow{C'}$, we will only illustrate how to update $idx'_I$ and $\overrightarrow{C'}$.


Suppose $\varphi(\vec{x},\cur)$ is of the form $z'_i < \cur < z'_{i+1}$ for some $i \in [t-1]$.
Since $\preceq_p$ is consistent with the rational order relation on $\const(\cA)$ and $[z'_{i+1}]_p$ is the $\preceq_p$-successor of $[z'_i]_p$, we know that there is a \emph{unique} $I \in \intval$ such that $cnst(x_j) = c$, $cnst(x_{j'}) = d$, and $x_{j} \preceq_p z'_i \prec_p z'_{i+1} \preceq_p x_{j'}$ for some $x_j, x_{j'} \in X_1$. 

In order to update $\overrightarrow{C'}$, we need know the information whether the value of $\cur$ belongs to $Z'_I$. Such information is maintained as follows.
\begin{itemize}
\item If $idx'_I > 0$ and $x_{idx'_I} \preceq_p z'_i$, then the value of $\cur$ has to belong to $Z'_I$, since the value of $x_{idx'_I}$ belongs to $Z'_I$ and $Z'_I$ is upward-closed. 
\item Otherwise, we make a nondeterministic choice to let $\cur$ belong to $Z'_I$. Since it is necessary to guarantee the consistency of these nondeterministic choices, we only allow these choices for the the following two cases: $z'_{i+1}$ is $\preceq_p$-rigid (i.e. $z'_{i+1} \simeq_p x_{j'}$), or otherwise the value of $z'_{i+1}$ belongs to $Z'_I$ (i.e. $z'_{i+1} \prec_p x_{j'}$, $idx'_I > 0$, and $x_{idx'_I} \simeq_p z'_{i+1}$). For contradiction, if we choose $\cur$ to belong to $Z'_I$ in the case that $z'_{i+1}$ is non-$\preceq_p$-rigid and the value of $z'_{i+1}$ does not belong to $Z'_I$, then from the fact that $\cur$ belongs to $Z'_I$ and  $Z'_I$ is upward-closed, we deduce that $z'_{i+1}$ has to belong to $Z'_I$ as well,  which would contradict to the nondeterministic choice made before, i.e. the value $z'_{i+1}$ does not belong to $Z'_I$.
\end{itemize}


More specifically, we update $idx'_I$ and $\overrightarrow{C'}$ as follows.

\smallskip

\noindent {\it Case $idx'_I > 0$ and $x_{idx'_I} \preceq_p z'_i$}.

\smallskip

\begin{itemize}
\item Let $post'(t, X_2)$ denote the set of control variables $x_{j''} \in X_2$ such that after the execution of $t$, $x_{j''}$ stores the value of $\cur$ or the original value of some control variable belonging to $Z'_I$. More specifically, $post'(t, X_2)=\{x_{j''} \in X_2 \mid ass(j'') = k+1, \mbox{ or } ass(j'') \in [k] \mbox{ and } x_{idx'_I} \preceq_p x_{ass(j'')} \preceq_p x_{j'}\}$. 
\begin{itemize}
\item If $post'(t, X_2) \neq \emptyset$, 
then let $idx'_I$ be some $j'' \in [k]$ such that $x_{j''}$ is a $\preceq_{q}$-minimal element in $post'(t, X_2)$. Note that $\preceq_{q}$ is the total preorder associated with $q$, instead of $p$.
\item Otherwise, let $idx'_I := 0$.
\end{itemize}
Intuitively, if after the execution of $t$, the value of $\cur$ or the original value of some control variable belonging to $Z'_I$, remains in some control variable $x_{j''}$, then let $idx'_I$ be the index of  a $\preceq_{q}$-minimal such $x_{j''}$. 
%
%
%
%
\item Let
$
C'_0  : = C'_0 + 
d \sum \limits_{pmt(i') \neq 0} \beta_{pmt(i')} B(i', k + l + 1).
$ \\
Intuitively, $C'_0$ is updated by considering how many copies of $\cur$ are added to $y_{i'}$. 
\end{itemize}

\smallskip

\noindent {\it Case $idx'_I = 0$ and $z'_{i+1} \simeq_p x_{j'}$, or $idx'_I > 0$, $z'_{i+1} \prec_p x_{j'}$ and $x_{idx'_I} \simeq_p z'_{i+1}$}.

\smallskip

Then $\cB$ nondeterministically chooses to do one of the following (intuitively, $\cB$ guesses whether $\cur$ belongs to $Z'_I$ or not).
\begin{enumerate}
\item $\cur$ is chosen to belong to $Z'_I$: 
\begin{itemize}
\item Let $post'(t, X_2)=\{x_{j''} \in X_2 \mid ass(j'') = k+1, \mbox{ or } ass(j'') \in [k], idx'_I > 0, \mbox{ and } x_{idx'_I} \preceq_p x_{ass(j'')} \preceq_p x_{j'}\}$. 
\begin{itemize}
\item If $post'(t, X_2)$ is nonempty, 
then let $idx'_I$ be some $j'' \in [k]$ such that $x_{j''}$ is a $\preceq_{q}$-minimal element in $post'(t, X_2)$. 
\item Otherwise, let $idx'_I:=0$.
\end{itemize}
%
%
\item Let 
$
C'_0   : = C'_0 + d \sum \limits_{pmt(i') \neq 0} \beta_{pmt(i')} B(i', k + l + 1).
$ 
\end{itemize}

\item $\cur$ is not chosen to belong to $Z'_I$:   
\begin{itemize}
\item Let $post'(t, X_2) = \{x_{j''} \in X_2 \mid ass(j'') \in [k], idx'_I > 0, x_{idx'_I} \preceq_p x_{ass(j'')} \preceq_p x_{j'}\}$. 
\begin{itemize}
\item If $post'(t, X_2)$ is nonempty, 
then let $idx'_I$ be some $j'' \in [k]$ such that $x_{j''}$ is a $\preceq_{q}$-minimal element in $post'(t, X_2)$. 
\item Otherwise, let $idx'_I:=0$.
\end{itemize}
\item Let 
$
C'_0   : = C'_0 + c \sum \limits_{pmt(i') \neq 0} \beta_{pmt(i')} B(i', k + l + 1).
$ 
\end{itemize}
\end{enumerate}

\smallskip

\noindent {\it Case $idx'_I = 0$ and $z'_{i+1} \prec_q x_{j'}$, or $idx'_I > 0$ and $ z'_{i+1} \prec_q x_{idx'_I}$}.
\begin{itemize}
\item Let $post'(t, X_2) = \{x_{j''} \in X_2 \mid ass(j'') \in [k], idx'_I >0,  x_{idx'_I} \preceq_p x_{ass(j'')} \preceq_p x_{j'} \}$. 
\begin{itemize}
\item If $post'(t, X_2)$ is nonempty, 
then let $idx'_I$ be $j'' \in [k]$ such that $x_{j''}$ is a $\preceq_{q}$-minimal element in $post'(t, X_2)$. 
\item Otherwise, let $idx'_I := 0$.
\end{itemize}
\item Let $
C'_0   : = C'_0 + c \sum \limits_{pmt(i') \neq 0} \beta_{pmt(i')} B(i', k + l + 1)
$. 
\end{itemize}



\smallskip

\noindent {\bf The distinguished state $q_{acc}$}.\\
The rational VASS $\cB$ enters a distinguished state $q_{acc}$ if the state $q_f$ is reached, and for each $i \in [l]$, $pmt(i) = i$. 

The requirement that for each $i \in [l]$, $pmt(i) = i$ is consistent with the intuition of $pmt$: When the guessing is over, for each $i \in [l]$, the current value of $y_i$ is stored in $y_{i=pmt(i)}$.

\smallskip

\noindent The construction of $\cB$ is finished.

\smallskip

From the construction, there is an input $w$ such that the run of  $\cA$ stops at $q_f$ and produces a zero output iff there is a run of $\cB$ that reaches the state $q_{acc}$ and at the same time, the values of  $\overrightarrow{C'}$ and $\overrightarrow{C''}$ satisfy the inequality (\ref{eqn-po-ls-0-lem}) and (\ref{eqn-po-gr-0-lem}) respectively. 
%

\smallskip

\noindent {\bf Complexity analysis}. Suppose a non-normalized $\raq$ $\cA$ is given as the input, with $k$ control variables and $l$ data variables respectively. The number of states of normalized $\raq$ is exponential over the number of control variables in $\cA$ (i.e. $k$). The number of different values of the variables $idx'_I, idx''_I$  for $I \in \intval$ in $\cB$ is $(k+1)^{r+s+2}$, while that of the mapping $pmt$ of $\cB$ is $(l+1)^l$. It follows that the number of states of $\cB$ is at most exponential over the size of $\cA$. Since the reachability problem for rational VASS is in \nptime\ (cf. Theorem~\ref{thm-str-reach-q-vass} in Appendix~\ref{app-rat-vass}), we conclude that the reachability problem for copyless $\raq$ is in \nexptime.
}



\subsection{Rational VASS ($\bbQ$-VASS), with or without $\pm \infty$}
\label{app-rat-vass}

We will use the following theorem.
\begin{theorem}
\label{theo:presburger}~\cite{VermaSS05}
For every finite state automaton $\cA$ over the alphabet $\{\alpha_1,\ldots,\alpha_k\}$,
one can construct in polynomial time an existential Presburger formula $\Psi_{\cA}(x_1,\ldots,x_k)$
such that for every $(a_1,\ldots,a_k)\in \bbN^k$,
$\Psi_{\cA}(a_1,\ldots,a_k)$ holds if and only if
there is a word $w \in L(\cA)$ with Parikh image $(a_1,\ldots,a_k)$.
\end{theorem}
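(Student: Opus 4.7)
}

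The plan is to encode the existence of an accepting path of $\cA$ with a prescribed Parikh image by combining Eulerian flow equations on the transition multigraph with a polynomially-sized connectivity gadget, all expressible as an existential Presburger formula. Let $\cA=(Q,\Sigma,\delta,q_0,F)$ with $\Sigma=\{\alpha_1,\ldots,\alpha_k\}$. I would introduce the following existentially quantified variables, all ranging over $\bbN$: a variable $y_t$ for each transition $t \in \delta$ (the number of times $t$ is taken), a variable $u_q \in \{0,1\}$ for each $q \in Q$ (indicating whether $q$ appears in the run), and a variable $d_q$ for each $q \in Q$ (a ``level'' used for the connectivity gadget). The formula $\Psi_\cA(x_1,\ldots,x_k)$ will be the existential closure of a conjunction of the conditions described below, taken in a big disjunction over the choice of a terminal final state $q_f \in F$.

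First, the \emph{Parikh constraints}: for each $i \in [k]$, enforce $x_i = \sum_{t \in \delta,\ \ell(t)=\alpha_i} y_t$. Second, the \emph{flow (Eulerian) conditions}: for each $q \in Q$, let $\mathrm{in}(q)=\sum_{t=(q',\alpha,q)} y_t$ and $\mathrm{out}(q)=\sum_{t=(q,\alpha,q')} y_t$; demand $\mathrm{in}(q)=\mathrm{out}(q)$ for every $q \notin \{q_0,q_f\}$, and $\mathrm{out}(q_0)=\mathrm{in}(q_0)+1$, $\mathrm{in}(q_f)=\mathrm{out}(q_f)+1$ when $q_f\neq q_0$ (and $\mathrm{in}(q_0)=\mathrm{out}(q_0)$ in the degenerate case $q_f=q_0$). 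Third, a \emph{support} condition linking $y_t$ and $u_q$: for each $t=(q,\alpha,q') \in \delta$, assert $y_t=0 \,\vee\, (u_q=1 \wedge u_{q'}=1)$; also assert $u_{q_0}=1$, $u_{q_f}=1$, and $u_q \in \{0,1\}$ (the latter as $u_q \le 1$).

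The crux is the \emph{connectivity gadget}, which is necessary because the flow conditions alone permit the chosen multiset of transitions to decompose into several disjoint Eulerian components. I would set $d_{q_0}=0$, and for every state $q \neq q_0$, require that $u_q=1$ implies the disjunction
\[
\bigvee_{t=(q',\alpha,q)\in\delta} \bigl( y_t \ge 1 \,\wedge\, u_{q'}=1 \,\wedge\, d_{q'}+1 = d_q \bigr).
\]
By induction on $d_q$, this forces every used state to be reachable from $q_0$ along edges with $y_t\ge 1$, so the used transitions form a weakly connected subgraph containing both $q_0$ and $q_f$. Combined with the Eulerian degree conditions, standard graph theory then yields an actual directed walk from $q_0$ to $q_f$ using each transition $t$ exactly $y_t$ times, i.e., an accepting run of $\cA$ whose Parikh image is $(x_1,\ldots,x_k)$. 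The converse direction is immediate: given such a run, set $y_t$ to the number of occurrences of $t$, $u_q=1$ iff $q$ appears on the run, and $d_q$ to the first time step at which $q$ is visited; every conjunct is then satisfied.

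All three groups of constraints have size polynomial in $|\cA|+k$, and the outer disjunction over $q_f \in F$ adds only a linear blow-up, so the whole $\Psi_\cA$ is constructible in polynomial time and is a formula of existential Presburger arithmetic (the $u_q \in \{0,1\}$ and implication clauses unfold into positive Boolean combinations of linear (in)equalities). The main technical obstacle is the justification of the connectivity gadget: one must argue carefully that the level assignment is both necessary (given a witnessing run, use first-visit times) and sufficient (given satisfying values, the induction on $d_q$ propagates reachability through used edges), and that the joint existence of an Eulerian trail follows from weak connectivity plus the flow conditions. Once this is pinned down, the rest of the proof is bookkeeping.
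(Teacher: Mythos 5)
The paper does not prove this statement at all: it is imported verbatim from~\cite{VermaSS05}, so there is no in-paper argument to compare against. Your reconstruction is, in essence, the standard proof of the Parikh-image theorem for NFAs (flow/Eulerian conditions on transition multiplicities plus a polynomial connectivity certificate), which is also the spirit of the construction in~\cite{VermaSS05} for the more general context-free case. The soundness direction is correct: over $\bbN$ the constraint $d_{q'}+1=d_q$ has no solution when $d_q=0$, so every used state other than $q_0$ sits at a positive level and strong induction on $d_q$ gives directed reachability from $q_0$ through edges with $y_t\ge 1$; adding an auxiliary edge from $q_f$ back to $q_0$ then reduces the flow conditions to the balanced case, and connectedness of the positive-degree subgraph yields the Eulerian circuit, hence the accepting run.

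There is one genuine (though easily repaired) slip in the completeness direction. You propose to witness $d_q$ by ``the first time step at which $q$ is visited,'' but first-visit times do not satisfy the exact equation $d_{q'}+1=d_q$: in a run $q_0\to q_1\to q_0\to q_2$ the state $q_2$ is first entered from $q_0$, yet its first-visit time is $3$ while $d_{q_0}+1=1$, so your conjunct for $q_2$ is false under that assignment. The fix is to take $d_q$ to be the directed BFS distance from $q_0$ in the subgraph of transitions with $y_t\ge 1$; then by definition every used $q\neq q_0$ has a used incoming edge from some $q'$ at distance exactly $d_q-1$, and all your conjuncts hold. With that substitution (and noting that your support clause only needs the forward implication $y_t\ge 1\Rightarrow u_q=u_{q'}=1$, which is what the Euler argument requires), the construction is correct, of polynomial size, and yields a bona fide existential Presburger formula.
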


In fact, Theorem~\ref{theo:presburger} holds also for context-free grammar,
but for our purpose, finite state automata is sufficient.
One can easily modify it so that instead of Parikh image, 
it talks about the number of transitions in an accepting run,
as stated below.

\begin{theorem}\label{theo:presburger-two}~\cite{VermaSS05}
For every finite state automaton $\cA$ over transitions $\{t_1,\ldots,t_m\}$,
one can construct in polynomial time an existential Presburger formula $\Psi_{\cA}(x_1,\ldots,x_m)$
such that for every $(a_1,\ldots,a_m)\in \bbN^m$,
$\Psi_{\cA}(a_1,\ldots,a_m)$ holds if and only if
there is an accepting run of $\cA$ in which transition $t_i$ appears $a_i$ times.
\end{theorem}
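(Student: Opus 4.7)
The plan is to reduce Theorem~\ref{theo:presburger-two} directly to Theorem~\ref{theo:presburger}, which is assumed. The key idea is that the assertion about transition counts in an accepting run of $\cA$ is just a Parikh-image statement for a slightly re-labelled automaton: rather than counting occurrences of input letters, we count occurrences of transitions, and we can make these coincide by treating the transitions themselves as the alphabet.

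Concretely, given $\cA$ over some input alphabet $\Sigma$ with transitions $\{t_1,\ldots,t_m\}$, where each $t_i$ is a triple $(p_i,a_i,q_i) \in Q \times \Sigma \times Q$, I would construct a new NFA $\cA'$ as follows. The state set, initial state, and final states of $\cA'$ are the same as those of $\cA$; its input alphabet is the fresh set of symbols $\{\tau_1,\ldots,\tau_m\}$ (one symbol per transition of $\cA$); and for each transition $t_i = (p_i,a_i,q_i)$ of $\cA$, $\cA'$ has a transition $(p_i,\tau_i,q_i)$. This construction is trivially polynomial in the size of $\cA$. By definition, accepting runs of $\cA'$ on a word $w \in \{\tau_1,\ldots,\tau_m\}^*$ are in bijection with accepting runs of $\cA$ in which, for each $i$, transition $t_i$ is used exactly as many times as $\tau_i$ occurs in $w$. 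Therefore, for any $(a_1,\ldots,a_m)\in \bbN^m$, there is an accepting run of $\cA$ in which each $t_i$ appears $a_i$ times if and only if there is a word $w \in L(\cA')$ whose Parikh image is $(a_1,\ldots,a_m)$.

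Now I would invoke Theorem~\ref{theo:presburger} on $\cA'$ to obtain, in polynomial time, an existential Presburger formula $\Psi_{\cA'}(x_1,\ldots,x_m)$ that holds exactly when some word in $L(\cA')$ has Parikh image $(x_1,\ldots,x_m)$. Defining $\Psi_{\cA}(x_1,\ldots,x_m) := \Psi_{\cA'}(x_1,\ldots,x_m)$ gives the desired formula, and the entire construction is polynomial time since both the relabelling of $\cA$ into $\cA'$ and the invocation of Theorem~\ref{theo:presburger} are polynomial.

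There is essentially no obstacle here; the only thing to be careful about is the bijection between runs and Parikh vectors of symbol occurrences, which relies on the fact that each fresh symbol $\tau_i$ is associated with exactly one transition of $\cA$, so the count of $\tau_i$ in a word $w$ accepted by $\cA'$ equals the count of $t_i$ along the corresponding run of $\cA$. If desired, this could alternatively be phrased as a direct argument mirroring the proof sketch of Theorem~\ref{theo:presburger}, by first computing an existential Presburger formula describing the Parikh images of paths in the transition graph that start in the initial state and end in a final state (via the standard cycle-flow characterization of reachability in graphs), but the reduction above is much shorter and immediate from the cited result.
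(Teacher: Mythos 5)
Your proposal is correct and is essentially identical to the paper's argument: the paper also derives Theorem~\ref{theo:presburger-two} from Theorem~\ref{theo:presburger} by relabelling the transitions with fresh, pairwise distinct symbols so that Parikh images of accepted words coincide with transition-occurrence counts of accepting runs. You simply spell out the bijection more explicitly than the paper does.
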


Indeed, Theorem~\ref{theo:presburger-two} is a straightforward consequence of Theorem~\ref{theo:presburger}
by labelling the transitions with new labels, so that all of them have different labels.

\paragraph*{Rational VASS ($\bbQ$-VASS)}
A $k$-dimensional $\bbQ$-VASS is a pair $(S,\Delta)$, where $S$ is a finite set of states
and $\Delta$ is a finite subset of $S\times \bbQ^k\times S$.
A configuration is a pair $(s,\vu)\in S\times \bbQ^k$.
A configuration $(r,\vv)$ is reachable from $(s,\vu)$,
if there is a sequence of transitions $(s_0,\vv_0,s_1),(s_1,\vv_1,s_2),\ldots,(s_n,\vv_n,s_{n+1})$ of $\Delta$
such that $s_0=s$, $s_{n+1}=r$ and $\vv = \vu + \sum_{i=0}^{n} \vv_i$.
We say that $(r,\vv)$ is coverable by $(s,\vu)$,
if $\vv \leq \vu + \sum_{i=0}^{n} \vv_i$.

Two popular problems for $\bbQ$-VASS are:
\begin{itemize}\itemsep=0pt
\item
{\bf $\bbQ$-VASS configuration reachability:}

On input $\bbQ$-VASS $(S,\Delta)$ and two configurations $(s,\vv)$ and $(r,\vu)$,
decided whether $(s,\vv)$ is reachable from $(r,\vu)$.

\item
{\bf $\bbQ$-VASS configuration coverability:}

On input $\bbQ$-VASS $(S,\Delta)$ and two configurations $(s,\vv)$ and $(r,\vu)$,
decided whether $(s,\vv)$ is coverable from $(r,\vu)$.

\end{itemize}
The two problems above are special cases of the following problem:
\begin{itemize}\itemsep=0pt
\item
{\bf $\bbQ$-VASS strong configuration reachability:}

On input $k$-dimensional $\bbQ$-VASS $(S,\Delta)$, a configuration $(r,\vu)$, a state $s$ and 
a Boolean combination of atomic Presburger formula $\varphi(x_1,\ldots,x_k)$,
decide whether there is a configuration $(s,\vv)$
such that $\varphi(\vv)$ holds and $(s,\vv)$ is reachable from $(r,\vu)$.
\end{itemize}
Note that reachability and coverability are special cases of strong reachability,
where $\varphi(z_1,\ldots,z_m)$ is $(z_1,\ldots,z_m) = \vv$
and $(z_1,\ldots,z_m) \geq \vv$, respectively.

\begin{theorem}\label{thm-str-reach-q-vass}
$\bbQ$-VASS strong configuration reachability is in $\nptime$.
\end{theorem}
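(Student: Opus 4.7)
The plan is to reduce strong configuration reachability in a $\bbQ$-VASS to satisfiability of an existential Presburger formula, which is well-known to be in $\nptime$. First, I would view the input $k$-dimensional $\bbQ$-VASS $(S,\Delta)$, with transitions $\Delta=\{t_1,\ldots,t_m\}$ where $t_i=(p_i,\vv_i,q_i)$, as a finite state automaton $\cA$ over the alphabet $\Delta$ of transition labels, taking $r$ as the initial state and $s$ as the unique accepting state. Applying Theorem~\ref{theo:presburger-two} to $\cA$, I obtain in polynomial time an existential Presburger formula $\Psi_{\cA}(n_1,\ldots,n_m)$ that is satisfied by $(a_1,\ldots,a_m)\in\bbN^m$ precisely when there exists a path in $(S,\Delta)$ from $r$ to $s$ that uses each transition $t_i$ exactly $a_i$ times.

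Next, observe that whenever such a multiplicity vector is realized by a path, the resulting configuration is $(s,\vv)$ with $\vv = \vu + \sum_{i=1}^{m} a_i \vv_i$. Hence the condition ``$\varphi(\vv)$ holds'' becomes the constraint $\varphi\bigl(\vu + \sum_{i=1}^{m} n_i \vv_i\bigr)$, a Boolean combination of atomic Presburger predicates whose linear terms now involve the multiplicity variables $n_i$ with rational coefficients inherited from $\vu$ and the $\vv_i$. By multiplying every atomic inequality/equality by the least common multiple of the denominators occurring in it, I convert this into a genuine existential Presburger formula $\Phi(n_1,\ldots,n_m)$ in polynomial time.

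The whole decision problem then reduces to checking satisfiability of
\begin{equation*}
\exists n_1,\ldots,n_m\in\bbN.\ \Psi_{\cA}(n_1,\ldots,n_m) \ \wedge\ \Phi(n_1,\ldots,n_m),
\end{equation*}
which is an existential Presburger formula of size polynomial in the input, and whose satisfiability is in $\nptime$. Configuration reachability and coverability are recovered by instantiating $\varphi$ to the equality or componentwise inequality with the target vector.

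The only mildly delicate step is the arithmetic bookkeeping in the second paragraph: although the multiplicities $n_i$ are non-negative integers, the data $\vu$ and the transition effects $\vv_i$ are rational, so one must carefully clear denominators in $\varphi\bigl(\vu+\sum_i n_i\vv_i\bigr)$ to obtain a Presburger formula of polynomial size (binary encoding of constants is preserved under this manipulation). Beyond that, the proof is a direct combination of Theorem~\ref{theo:presburger-two} with the standard $\nptime$ upper bound for existential Presburger satisfiability.
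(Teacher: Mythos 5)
Your proof is correct and follows essentially the same route as the paper: view the $\bbQ$-VASS as a finite automaton, apply Theorem~\ref{theo:presburger-two} to capture transition multiplicities, conjoin with $\varphi\bigl(\vu+\sum_i n_i\vv_i\bigr)$, clear denominators, and appeal to the $\nptime$ bound for existential Presburger arithmetic (equivalently, integer linear programming). The only difference is cosmetic wording in the final step.
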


\begin{proof}
Let the input be $(S,\Delta)$, $(r,\vu)$, $s$ and $\varphi(x_1,\ldots,x_k)$ as above.
Let $\Delta= \{(s_1,\vv_1,t_1),\ldots,(s_m,\vv_m,t_m)\}$.
Deciding strong reachability is equivalent to deciding the satisfiability of the following formula:
\begin{eqnarray*}
\Phi & := & \exists x_1 \cdots \exists x_m \ \Psi_{\cA}(x_1,\ldots,x_m) \ \wedge \
\varphi\Big(\vu + \sum_{i=1}^m x_i\vv_i \Big)
\end{eqnarray*}
where $\Psi_{\cA}(x_1,\ldots,x_m)$ is the formula constructed via Theorem~\ref{theo:presburger-two},
by viewing $(S,\Delta)$ as a finite state automaton, $r$ the initial state and $s$ the final state.
Note that each $\vv_i$ may contain rational numbers.
However, we can get rid of the denominators easily by scalar multiplication with integers, which only increases the size of the formula polynomially (since the constants are encoded in binary). Then the satisfiability for $\Phi$ is reduced to integer linear programming problem, which is well-known to be  $\nptime$-complete. 
\end{proof}

\paragraph*{The conversion of $\bbQ$-VASS with $\pm\infty$ to existential Presburger formula}

Let the input be $(S,\Delta)$, $(r,\vu)$, $s$ and $\varphi(x_1,\ldots,x_k)$ as above. Let $\Delta= \{(s_1,\vv_1,t_1),\ldots,(s_m,\vv_m,t_m)\}$ such that  $\vv_1, \dots, \vv_m$ may include the special symbols $\pm \infty$. In addition, suppose we know that $+\infty$ and $-\infty$ represent an arbitrary rational number bigger and smaller than $c_{\max}$ and $c_{\min}$ respectively.
Let $\Phi$ be the formula constructed in the proof of Theorem~\ref{thm-str-reach-q-vass}, that is, 
\begin{eqnarray*}
\Phi & := & \exists x_1 \cdots \exists x_m \ \Psi_{\cA}(x_1,\ldots,x_m) \ \wedge \
\varphi\Big(\vu + \sum_{i=1}^m x_i\vv_i \Big).
\end{eqnarray*}
Then we can transform $\Phi$ into a formula $\Phi'$ without $\pm \infty$ as follows.
\begin{itemize}
\item
For term $x_i(+\infty)$, we replace it with a fresh existential variable $z$
and add a conjunct $ z \geq c_{\max} x_i$.
\item
For term $x_i(-\infty)$, we replace it with a fresh new existential variable $z$
and add a conjunct $z\leq c_{\min} x_i$.
\item
Likewise for terms such as $a(+\infty)$ and $a(-\infty)$.
\end{itemize}


\subsection{Results on the coverability problem}
\label{app:cov}

\begin{theorem}
	The invariant problem of $\raq$ can be reduced to the coverability problem of $\raq$, which can be further reduced to the reachability problem of $\raq$. Both reductions are in polynomial-time.
\end{theorem}

Let $\cA$ be an $\raq$.
The first reduction is done by creating (1) an $\raq$ $\cA'$ such that $\exists v \in\cA(w): v\geq 0$ for some $w$ iff $\exists v \in\cA'(w): v> 0$ for some $w$ by adding an arbitrary positive value to the output of $\cA$, e.g., add a $\cur>0$ and
(2) another $\raq$ $\cA''$ by negating all output expressions of $\cA'$. The answer to the non-zero problem of $\cA$ is positive iff the answers to the coverability problems of $\cA'$ and $\cA''$ are both positive.
The second reduction is done by adding a transition $(q,\cur\geq 0)\rightarrow (q', y_1:=\zeta(q)-\cur)$ from all final states $q$ to a new state $q'$ for some data variable $y_1$ and setting $q'$ as the only new final state with the output expression $y$.

\begin{corollary}
	The coverability problem of copyless $\raq$s is in \nexptime. 
\end{corollary}

\subsection{Results on configuration reachability and coverability}
\label{app:conf-cov-rea}

For Petri-net or VASS, people are also interested in configuration reachability and configuration coverability problems. The corresponding problems in $\raq$ are defined as follows.
Given an $\raq$ $\cA=\langle Q,q_0,F,\vu_0,\delta,\zeta\rangle$ and a configuration $(q_n,\vu_n)$, the configuration reachability problem asks if $(q_0,\vu_0)\vdash^{\ast} (q_n,\vu_n)$ and the configuration coverability problem asks if 
$(q_0,\vu_0)\vdash^{\ast} (q_n,\vu'_n)$ and $\vu'_n \geq \vu_n$. 
We show that the two problems in $\raq$ are inter-reducible in polynomial-time and they are not easier than the reachability problem, which is undecidable (Theorem~\ref{thm-reach-und}).

\begin{lemma}
	The configuration reachability problem of $\raq$ can be reduced to the configuration coverability problem of $\raq$, and vice versa. 
\end{lemma}
Let $\cA$ be an $\raq$ over $(X,Y)$ and the two problem are targeting the configuration $(q,\vv)$. 
The first reduction is done by creating a $\raq$ $\cA'$ over $(X\cup X', Y\cup Y')$ such that $X'$ and $Y'$ compute the negation of $X$ and $Y$, using a similar construction of the proof of Lemma 2 in~\cite{Haase14}. The reverse direction is done by adding the transition $(q, \bigwedge_{x_i\in X} x=\vv(X)[i])\rightarrow(q', \{\})$ and for all $y\in Y$ the transition $(q',\cur\geq 0) \rightarrow (q', \{y:=y-cur\})$ and targeting the configuration $(q',\vv)$ instead.

\begin{lemma}
	The coverability and reachability problems of $\raq$ can be reduced to the configuration coverability and reachability problems of $\raq$, respectively.
\end{lemma} 
The reduction is done by adding a transition with the assignment $y:=\zeta(q)$ from all final states $q$ to a new state $q'$ and use the configuration $(q',0)$ in the corresponding configuration coverability and reachability problems.


\end{document}